\documentclass[12pt]{article}
\usepackage{amsfonts,amsthm,mathtools}

\textwidth 16truecm
\textheight 22truecm
\topmargin -1truecm
\oddsidemargin -0truecm

\theoremstyle{plain}
\theoremstyle{definition}
\newtheorem{theorem}{Theorem}[section]
\newtheorem{lemma}[theorem]{Lemma}
\newtheorem{definition-theorem}[theorem]{Definition-Theorem}
\newtheorem{definition-proposition}[theorem]{Definition-Proposition}

\newtheorem{proposition}[theorem]{Proposition}
\newtheorem{corollary}[theorem]{Corollary}
\newtheorem{example}{Example}[section]
\newtheorem{examples}{Example}[subsection]

\newtheorem{remark}{Remark}[section]
\newtheorem{remarks}{Remarks}[section]
\newtheorem{definition}{Definition}[section]

\numberwithin{equation}{section} 


\DeclareMathOperator{\sgn}{sgn}


\def\tr{\mathrm {tr}}
\def\det{\mathrm {det}}

\def\Pf{\mathrm {Pf}}

\def\res{\mathop{\mathrm {res}}\limits}

\def\res{\mathop{\mathrm{res}}\limits}

\def\&{&{\hskip -20pt}}


\def\be{\begin{equation}}
\def\ee{\end{equation}}

\def\bea{\begin{eqnarray}}
\def\eea{\end{eqnarray}}

\def\bt{\begin{theorem}}
\def\et{\end{theorem}}

\def\bex{\begin{example}\small \rm}
\def\eex{\end{example}}

\def\bexs{\begin{examples}\small \rm}
\def\eexs{\end{examples}}

\def\br{\begin{remark}\small \rm}
\def\er{\end{remark}}



\def\FF {{\mathcal F}}

\def\PP{{\mathcal P}}


\def\Zb{\mathbf{Z}}

\def\pb{\mathbf{p}}

\def\rb{\mathbf{r}}

\def\tb{\mathbf{t}}


\def\xb{\mathbf{x}}
\def\yb{\mathbf{y}}
\def\zb{\mathbf{z}}






\newcommand{\Pa}{\mathop\mathrm{P}\nolimits}
\newcommand{\DP}{\mathop\mathrm{DP}\nolimits}
\newcommand{\OP}{\mathop\mathrm{OP}\nolimits}

\def\bp{\begin{Proposition}\rm}
\def\ep{\end{Proposition}}
\def\bc{\begin{corollary}}
\def\ec{\end{corollary}}
\def\bl{\begin{lemma}\em}
\def\el{\end{lemma}}
\def\be{\begin{equation}}
\def\ee{\end{equation}}
\def\br{\begin{remark}\rm\small}
\def\er{\end{remark}}
\def\brs{\begin{remarks}.\\ \rm\
\begin{enumerate}}
\def\ers{\end{enumerate}\end{remarks}}
\def\bea{\begin{eqnarray}}
\def\eea{\end{eqnarray}}


\begin{document}

\begin{center}
\begin{Large}\fontfamily{cmss}
\fontsize{17pt}{27pt}
\selectfont
	\textbf{Notes about KP/BKP correspondence}
	\end{Large}
	
\bigskip \bigskip
\begin{large}
 A. Yu. Orlov\footnote[1]{e-mail:orlovs55@mail.ru}
 \end{large}
 \\
\bigskip

\begin{small}
{\em Shirshov Institute of Oceanology, Russian Academy of Science, 36 Nahimovskii Prospekt,
Moscow, Russia 117997, and Institute for Theoretical and Experimental Physics, Moscow, Russia  117218}
\end{small}
 \end{center}

To the memory of Sergei Mironovich Natanzon

\medskip

\begin{abstract}
I present a set of remarks related to joint works \cite{paper1},\cite{paper2},\cite{paper3},\cite{MMNO}. 
These are remarks about polynomials solutions and vertex operators, eigenproblem for polynomials and a remark 
related to the the conjecture of Alexandrov and Mironov, Morozov about the ratios of the projective Schur functions.
New results on the bilinear relations between characters of the symmetric and of the Sergeev group and on bilinear 
relations between skew Schur and projective Schur functions and also between shifted Schur and projective Schur
functions are added. Certain new matrix models are considered  and a comment on Mironov-Morozov-Natanzon cut-and-join 
relation is added.
\end{abstract}
\bigskip


\tableofcontents

\section{Introduction}

The goal of this notes is to present a number of remarks and observations concerning KP \cite{Sato},\cite{JM},\cite{MJD}
TL \cite{UT}, BKP (\cite{DJKM1}),\cite{KvdLbispec} and Veselov-Novikov \cite{VN}
(2DKP, see \cite{Taimanov}) tau functions. It is well-known that these hierarchies have a number of remarkable
applications in mathematics and physics.
The present paper adds some details to the works \cite{paper1},\cite{paper2},\cite{paper3} which in turn were based
on \cite{OS},\cite{Or},\cite{HLO},\cite{HLee},\cite{BHH}.
Sections \ref{review},\ref{small},  and Sections \ref{W}, \ref{evaluation} contains reviews
and add details to some pieces in \cite{paper1},\cite{paper2},
\cite{paper3} and some pieces in \cite{MMNO}.
Sections \ref{characters},\ref{skew-polynomials},\ref{shifted},\ref{vertex} contain new reslauts.

KP equation was integrated in \cite{Druma} and in the famous work of Zakharov and Shabat in 1974 \cite{ZakharovShabat}.
A great number of applications in mathematics and in mathematical physics
were done due to the works of Kyoto school \cite{Sato},\cite{JM},\cite{MJD}.
They introduce the concept of tau function and make use of the free fermions.
Using the free fermion approach they related the KP hierarchy to the $A_\infty$ root system and
expanded the integrability to different root systems. The analogue of the KP hierarchy related
to the root system $B_\infty$ is called BKP hierarhy. It was studied also in \cite{KvdLbispec}.

Let me recall the basic notions in very short.

But before that, I will make the following remark.
\br\label{notations}
 The notations in this article is slightly  
different from the notations in the works \cite{paper1},\cite{paper2},\cite{paper3}.
In the case of the KP, instead of the higher times $\tb=(t_1,t_2,t_3,\dots)$ common in solitons theory
the power sum variables are used: $\pb=(p_1,p_2,p_3,\dots)$, the relation is $mt_m=p_m$. 
Moreover, in the case of BKP the power sums $\pb^{\rm B}=(p_1,p_3,p_5,\dots)$ are related to the 
BKP higher times $\tb_B=(t_1,t_3,t_5,\dots)$ as follows: $mt_m=2p_m$. Instead of 
$\tau(\tb),s_\lambda(\tb),\gamma(\tb)$ and instead of $\tau^B(\tb_B),Q_\nu(\tfrac12\tb_B),\gamma^B(\tb_B)$
here we write respectively $\tau(\pb),s_\lambda(\pb),\gamma(\pb)$ and  
$\tau^{\rm B}(2\pb^{\rm B}),Q_\nu(\pb^{\rm B}),\gamma^{\rm B}(2\pb^{\rm B})$.
\er

\paragraph{Charged and neutral fermions.}
The fermionic creation and annihilation operators  satisfy the anticommutation  relations:
\be
[\psi_j,\psi_k]_+= [\psi^\dag_j,\psi^\dag_k]_+=0,\quad [\psi_j,\psi^\dag_k]_+=\delta_{jk} .
\label{charged-canonical}
\ee

We recall that a nonincreasing set of nonnegative integers $\lambda_1\ge\cdots \ge \lambda_{k}\ge 0$,
we call partition $\lambda=(\lambda_1,\dots,\lambda_{k})$, and $\lambda_i$ are called parts of $\lambda$.
The sum of parts is called the weight $|\lambda|$ of $\lambda$. The number of nonzero parts of $\lambda$
is called the length of $\lambda$, it will be denoted $\ell(\lambda)$. See \cite{Mac} for details.
Partitions will be denoted by Greek letters: $\lambda,\mu,\dots$. The set of all partitions is denoted by
$\Pa$. The set of all partitions with odd parts is denoted $\OP$.
Partitions with distinct parts are called strict partitions, we prefer
letters $\alpha,\beta$ to denote them. The set of all strict partitions will be denoted by $\DP$.
The Frobenius coordinated $\alpha,\beta$ for partitions $(\alpha|\beta)=\lambda\in\Pa$ are of usenames
(let me recall that the coordinates $\alpha=(\alpha_1,\dots,\alpha_k)\in\DP$ consists of the lengths of arms counted
from the main diagonal of the Young diagram of $\lambda$ while
$\beta=(\beta_1,\dots,\beta_k)\in\DP$ consists of the lengths of legs counted
from the main diagonal of the Young diagram of $\lambda$, $k$ is the length of the main diaginal of $\lambda$,
see \cite{Mac} for details).

The {\em vacuum} element $|n\rangle$ in each charge sector $\FF_n$ is the basis element
corresponding to the trivial partition $\lambda = \emptyset$:
\be
| n\rangle :=|\emptyset; n \rangle = e_{n-1} \wedge e_{n-2} \wedge \cdots .
\ee
Elements of the dual space $\FF^*$ are denoted as {\em bra} vectors $\langle w |$,
with the dual basis $\{\langle \lambda ;n|\}$ for $\FF^*_n$ defined by the pairing
\be
\langle \lambda; n | \mu; m\rangle = \delta_{\lambda \mu} \delta_{nm}.
\ee
For KP $\tau$-functions, we need only  consider the $n=0$ charge sector $\FF_0$,
and generally drop the charge $n$ symbol, denoting the basis elements simply as
\be
|\lambda\rangle :=|\lambda;0\rangle.
\ee
 For $j>0$, $\psi_{-j}$ and $\psi^\dag_{j-1}$
(resp. $\psi^\dag_{-j}$ and $\psi_{j-1}$) annihilate the right (resp. left) vacua:
\bea
\psi_{-j} |0\rangle &\&=0, \quad \psi^\dag_{j-1} |0\rangle =0, \quad \forall  j >0,
\label{vac_annihil_psi_j_r}
 \\
\langle 0| \psi^\dag_{-j}  &\&=0, \quad \langle 0 | \psi_{j-1}  =0, \quad \forall  j >0.
\label{vac_annihil_psi_j_l}
\eea

Neutral fermions $\phi^+_j$ and $\phi^-_j$ are defined \cite{DJKM1} by
 \be
\phi^+_j :=\frac{\psi_j+ (-1)^j\psi^\dag_{-j}}{\sqrt 2},\quad
\phi^-_j :=i\frac{\psi_j-(-1)^j \psi^\dag_{-j}}{\sqrt 2},\quad j \in \Zb
\label{charged-neutral}
\ee
(where $i=\sqrt{-1}$), and satisfy
\be
\label{neutral-canonical}
 [\phi^+_j,\phi^-_k]_+=0,\quad [\phi^+_j,\phi^+_k]_+ = [\phi^-_j,\phi^-_k]_+ =(-1)^j \delta_{j+k,0}.
\ee
In particular,
\be
(\phi^+_0)^2=(\phi^-_0)^2=\tfrac{1}{2}.
\label{square_phi_0}
\ee
Acting on the vacua $|0\rangle$ and $|1 \rangle$, we have
\bea
\phi^+_{-j}|0\rangle &\&= \phi^-_{-j} |0\rangle = \phi^+_{-j}|1\rangle = \phi^-_{-j} |1\rangle =0 , \quad  \forall  j > 0  , \quad \forall  j > 0,
\label{phi_vac_r} \\
\langle 0| \phi^+_{j} &\&= \langle 0|\phi^-_{j}  = \langle 1| \phi^+_{j} = \langle 1|\phi^-_{j}  =0 , \quad  \forall  j > 0 ,
\label{phi_vac1_l}
\\
\phi^+_0|0\rangle &\& =
- i \phi^-_0 |0\rangle =
\tfrac{1}{\sqrt{2}} \psi_0|0\rangle = \tfrac{1}{\sqrt{2}} |1\rangle  ,
\label{phi_0_ac_r}
\\
 \langle 0| \phi^+_0 &\& =
 i \langle 0|\phi^-_0  =
\tfrac{1}{\sqrt{2}} \langle  0 | \psi_0^\dag = \tfrac{1}{\sqrt{2}}\langle 1|.
\label{hatphi_0_vac_l}
\eea

Let us use notations
\bea\label{Psi-lambda}
\Psi_{\alpha,\beta} &:=&(-1)^{\sum_{j=1}^r\beta_j}(-1)^{\tfrac{1}{2}r(r-1)}
\psi_{\alpha_1}\cdots
\psi_{\alpha_r}\psi^\dag_{-\beta_1-1} \cdots \psi^\dag_{-\beta_r-1}\\
\label{Psi^dag-lambda}
\Psi^\dag_{\alpha,\beta} &:=&(-1)^{\sum_{j=1}^r\beta_j}(-1)^{\tfrac{1}{2}r(r-1)}
\psi_{-\beta_r-1}\cdots \psi_{-\beta_1-1} \psi^\dag_{\alpha_r} \cdots
\psi^\dag_{\alpha_1}
\eea
where $\lambda=(\alpha|\beta)$. Note that $|\lambda\rangle = \Psi_{\alpha,\beta}|0 \rangle$,
$\langle \lambda|=\langle 0|\Psi^\dag_{\alpha,\beta}$.

For $\alpha=(\alpha_1,\dots,\alpha_k)\in\DP$ introduce
\bea\label{Phi-alpha}
\Phi^\pm_\alpha &:=& 2^{\frac k2}\phi^\pm_{\alpha_1}\cdots \phi^\pm_{\alpha_{k}}\\  
\Phi^\pm_{-\alpha} &:=& (-1)^{\sum_{i=1}^k\alpha_i}2^{\frac k2}\phi^\pm_{-\alpha_k}\cdots \phi^\pm_{-\alpha_{1}}
\eea

We obtain
\be\label{<PsiPsi>}
\langle 0|\Psi^\dag_{\alpha,\beta}\Psi_{\alpha',\beta'} |0\rangle =\delta_{\alpha,\alpha'}\delta_{\beta,\beta'}
\ee
whose bosonized version is the scalar product of the Schur functions
$<s_{\alpha,\beta},s_{\alpha',\beta'}>=\delta_{\alpha,\alpha'}\delta_{\beta,\beta'}$,
see \cite{Mac},
and
\be\label{<PhiPhi>}
\langle 0|\Phi{^\pm}_{-\alpha}\Phi^{\pm}_{\alpha'}|0\rangle =2^{ \ell(\alpha)}\delta_{\alpha,\alpha'}
\ee
The bosonized version of (\ref{<PhiPhi>}) is the scalar product of the projective Schur functions:
$<Q_\alpha,Q_{\alpha'}> = 2^{ \ell(\alpha)}\delta_{\alpha,\alpha'}$

 In what follows sometimes we shall write $\phi_{x}$ and $\Phi_{x}$ instead of $\phi^+_{x}$ and $\Phi^+_{x}$.

\paragraph{Fermions and tau functions: KP and BKP cases.}
According to \cite{JM} KP tau functions can be presented in form of the following vacuum expectation value (VEV)
\be
\tau(\pb)=\langle 0|\gamma(\pb)g |0\rangle
\ee
where $g$ is an exponential of a bilinear in $\{\psi_i\}$ and $\{\psi^\dag\}$ expression.
Here
\be
\gamma(\pb)=e^{\sum_{m>0} \frac 1m p_mJ_m},\quad J_m=\sum_{i\in Z} \psi_i\psi^\dag_{i+m}
\ee
and $\pb$ is the set of parameters $(p_1,p_2,p_3,\dots)$, the numbers $\frac 1m p_m$ are called the KP higher times.

Similarly, the BKP tau function can be presented as
\be
\tau^{B}(2\pb^{\rm B})=\langle 0|\gamma^{\rm B\pm}(2\pb^{\rm B})h^\pm |0\rangle
\ee
where $h^\pm$ is an exponential of a quadratic in $\{\phi^\pm_i\}$  expression.
Here
\be
\gamma^{\rm B\pm}(2\pb^{\rm B})=e^{\sum_{m>0,{\rm odd}} \frac 2m p_m J^\pm _m},
\quad J^\pm_m=\sum_{i\in Z} (-)^i\phi^\pm_{-i-m}\phi^\pm_{i}
\ee
where $\pb^{\rm B}=(p_1,p_3,\dots)$.
 \br The set of $t_m^{\rm B}=2p_m^{\rm B},\, m=1,3,5,\dots $ is called the set of the BKP
 higher times.
 \er

After the the article \cite{Sato} symmetric functions appeared to be the part of the soliton theory,
these are power sum variables, Schur functions and later (see \cite{You} and \cite{Nim}) the projective
Schur functions.
 
The wonderful observation by Sato and his school \cite{Sato}, \cite{JM} is the fermionic formula for
the Schur polynomial
\be\label{Schur}
s_\lambda(\tb)=\langle 0|\gamma(\tb)\Psi_{\alpha,\beta}|0\rangle
\ee
In the BKP case the similar formula was found in \cite{You}:
\be\label{You}
Q_\alpha(\pb^{\rm B})=\langle 0|\gamma^{{\rm B}\pm}(2\pb^{\rm B}))\Phi^\pm_\alpha|0\rangle
\ee

\paragraph{Bosonization.} We recall that there exists the fermion-boson correspondence in 2D space. The first
work was the preprint of Pogrebkov and Sushko which preceded the article \cite{PogrebkovSushko}.
This correspondence turned out to be very important in the theory of solitons, as it was shown in a series
of excellent works of the Kyoto school. Let me recall some facts.

The fermionic Fock space is in the one-to-one correspondence with the bosonic Fock space which can be
viewed as the space of polynomials in a chosen set of parameters, say $\pb=(p_1,p_2,p_3,\dots)$ and a parameter
$\eta$:
\be
|\lambda; n\rangle \, \leftrightarrow \, s_\lambda(\pb) \eta^n
\ee
where $s_\lambda(\pb)$ is the Schur polynomials written in terms of the {\em power sum} variables $\pb$, see \cite{Mac}.
While the Fermi operators are in the one-to-one correspondence with the vertex operators:
\be
\psi(z)\,\leftrightarrow\,V^{+}(z),\quad \psi^\dag(z)\,\leftrightarrow\,V^{-}(z)
\ee
where
\be
V^\pm(z)=
e^{\pm\sum_{j=1}^\infty \frac 1jz^j p_j} e^{\mp \partial_n}z^{\pm n}
e^{\mp\sum_{j=1}^\infty  z^{-j} \frac{\partial}{\partial p_j}},\quad z\in S^1
\ee
where $e^{\mp \partial_n}z^{\pm n}$ is the shift operator which act on the variable $n$
( $n$ and $\partial_n$, $[\partial_n,n]=1$ are bosonic conjugated zero mode operators)
and
\be
\psi(z)=\sum_{i\in\mathbb{Z}} z^i\psi_i,\quad \psi^\dag(z)=\sum_{i\in\mathbb{Z}} z^{-i}\psi^\dag_i .
\ee

In the BKP case we have one-to-one correspondence between the Fock space of neutral fermions and the bosonic Fock
space which can be viewed as the space of polynomials in a set $\pb^{\rm B}=\frac12(t_1,t_3,t_5,\dots)$ and a 
Grassmannian parameter
$\xi$ ($\xi^2=0$)

\be
\Phi^\pm_\alpha|0 \rangle \, \leftrightarrow \, Q_\alpha(\pb^{\rm B}) \begin{cases}
                                                               1,\quad \ell(\alpha)\,{\rm even}\\
                                                               \xi,\quad \ell(\alpha)\,{\rm odd}
                                                              \end{cases}
\ee
where $Q_\lambda(\pb^{\rm B})$ is the projective Schur polynomials written in terms of the {\em power sum}
variables $\pb^{\rm B}$, see \cite{Mac}.
While the Fermi operators are in the one-to-one correspondence with the vertex operators:
\be
\phi^\pm(z)\,\leftrightarrow\,V^{\pm{\rm B}}(z)
\ee
where
\be
V^{{\rm B}\pm}(z)=\frac{1}{\sqrt{2}}(\xi +\frac{\partial}{\partial \xi} )
e^{\sum_{j=1,{\rm odd}}^\infty \frac 2j z^j p_j^\pm}
e^{-\sum_{j=1,{\rm odd}}^\infty  z^{-j} \frac{\partial}{\partial p^\pm_j}},\quad z\in S^1
\ee
and
\be\label{phi(z)}
\phi^\pm(z)=\sum_{i\in\mathbb{Z}} z^i\phi^\pm_i
\ee

Let us re-write formulas (\ref{Schur}) and (\ref{You}) for the both Schur functions:
\be
s_\lambda(\xb)\Delta(\xb)=\langle 0|\psi^\dag(x_1^{-1})\cdots \psi^\dag(x_N^{-1})\Psi_{\alpha,\beta} |N\rangle =
\langle N|\Psi^\dag_{\alpha,\beta} \psi(x_1)\cdots \psi(x_N) |0\rangle 
\ee
\be\label{Q(x)}
Q_\alpha(\xb)\Delta^*(\xb)=2^{-\frac N2}\langle 0|\phi(-x_1^{-1})\cdots \phi(-x_N^{-1})\Phi_\alpha |0\rangle =
 2^{-\frac N2}\langle 0|\Phi_{-\alpha} \phi(x_1)\cdots \phi(x_N) |0\rangle 
\ee
We use the following notations:
\be
\Delta(\xb):=\prod_{i<j}(x_i-x_j),\quad \Delta^*(\xb):=\prod_{i<j}\frac{x_i-x_j}{x_i+x_j}
\ee

\paragraph{Fermi fields}

For the Fermi fileds and for the Fourierr components of the Fermi fields we have quite similar relations:
\be\label{psi-phi-z}
\psi(-z^{-1}) =\frac{\phi^+(-z^{-1}) - i\phi^-(-z^{-1})}{\sqrt 2},\quad
\psi^\dag(z^{-1}) =\frac{\phi^+(-z^{-1})+ i \phi^-(-z^{-1})}{\sqrt 2},
\ee
\be
\psi_j =\frac{\phi^+_j - i\phi^- _{j}}{\sqrt 2},\quad
 (-1)^j \psi^\dag_{-j} =\frac{\phi^+_j+ i \phi^-_{j}}{\sqrt 2},
\ee
which result in
\be
\psi^\dag(z^{-1})\psi(-z^{-1})=-i\phi^+(-z^{-1})\phi^-(-z^{-1})
\ee
\be\label{psiphi-j}
(-1)^j\psi^\dag_{-j}\psi_j=-i\phi^+_j\phi^-_j
\ee

Next, we introduce
\bea\label{Psi-x}
\Psi(\xb,\yb):&=& \psi(x_1)\cdots \psi(x_N)\psi^\dag(-y_1) \cdots \psi^\dag(-y_N)
\\
\label{Psi-dag-x}
\Psi^*(\xb,\yb):&=&  \psi(-y_N^{-1})\cdots \psi(-y_1^{-1}) \psi^\dag(x_N^{-1}) \cdots \psi^\dag(x_1^{-1}) 
\eea
and
\bea\label{Phi-x}
\Phi^{*\pm}(\xb)= 2^{-\frac N2 }\phi^\pm(-x_1^{-1})\cdots \phi^\pm(-x_N^{-1})\\
\Phi^{\pm}(\xb)= 2^{-\frac N2 }\phi^\pm(x_1)\cdots \phi^\pm(x_N)
\eea
Here we consider $N$ to be an even number.

We will imply the ``time ordering'' in the products $\Psi,\Psi^\dag,\Phi^\pm$ above, namely, the condition
$|x_1^{-1}|>\cdots >|x_N^{-1}|$ and $|y_1^{-1}|>\cdots >|y_N^{-1}|$. Such ordering is similar to the usage of partitions 
as it was done in (\ref{Psi-lambda}) and (\ref{Psi^dag-lambda})

We have
\bea
\Psi(\xb,\xb)= (-1)^{\tfrac 12 N(N-1)}(-i)^N \Phi^+(\xb)\Phi^-(\xb)\\
\Psi^*(\xb,\xb)= (-1)^{\tfrac 12 N(N-1)}(-i)^N \Phi^{*+}(\xb)\Phi^{*-}(\xb)
\eea

Introduce
\be
p_m(\xb,\yb):=\sum_{i=1}^N \left( x^m - (-y)^m  \right)
\ee
where $N$ is even.
One can obtain
\be
s_\lambda(\xb,\yb)\Delta(\xb,\yb)=
\langle 0|\Psi^*(\xb,\yb)\Psi_{\alpha,\beta} |0\rangle \prod_{i=1}^N x_i^{-1}=
\langle 0|\Psi^\dag_{\alpha,\beta} \Psi(\xb,\yb) |0\rangle \prod_{i=1}^N y_i^{-1}
\ee
where
\be\label{Delta(x,y)}
\Delta(\xb,\yb)=(-1)^{\frac12 N(N-1)}\prod_{i=1}^N\frac{\Delta(\xb)\Delta(\yb)}{\prod_{i,j=1}^N(x_i+y_j)}
\ee
\be
Q_\alpha(\xb)\Delta^*(\xb)=\langle 0|\Phi(\xb)\Phi_\alpha |0\rangle =
\langle 0|\Phi_{-\alpha} \Phi^*(\xb) |0\rangle 
\ee
Relation (\ref{Delta(x,y)}) for $\Delta(\xb,\yb)$ should be compared with the relation
\be
s_{(\alpha|\beta)}(\pb_1)=\frac{\Delta(\alpha)\Delta(\beta)}{\prod_{i,j=1}^N(\alpha_i+\beta_j+1)}
\prod_{i=1}^r \frac{1}{\alpha_i!\beta_i!}=\frac{\dim\,\lambda}{|\lambda|!}
\ee
where $\lambda=(\alpha|\beta)$, ($\dim\,\lambda$ the number of ways to build the $ \lambda $ Young diagram by adding 
one node to a previous diagram, starting with the empty diagram, so that at each step we will have a Young diagram) 
and where
\be
\pb_1=(1,0,0,\dots).
\ee
\br
In literature (for instance in \cite{Mac}) one meets the notation $s_\lambda(x/y)$ which $s_\lambda(\xb,-\yb)$
in our present notations. Our notations are convenient for our purposes.
\er

\section{On KP vs BKP correspondence in \cite{paper1},\cite{paper2},\cite{paper3} \label{review}}

\subsection{Preliminaries.}
Let the KP higher times are parametrized by
 \be
 t_m=t_m([\xb,\yb]) = \sum_{i=1}^N \left( x_i^m - (-y_i)^m   \right)
 \ee
 and let the Frobenius parts of the partition $\lambda = (\alpha|\beta)$ be denoted by
 $\alpha=(\alpha_1,\dots,\alpha_r)\in\DP$ and $\beta=(\beta_1,\dots,\beta_r)\in \DP$.
 Our formulas are of the following type:
 \be\label{s-QQ}
 s_{(\alpha|\beta)}([\xb,\yb])=\sum_{P_x} \sum_{P_F} A(P_F) D(P_x) Q_{\nu^+}({\bf z}^+) Q_{\nu^-}({\bf z}^-)
 \ee
 and more generally \cite{paper4}:
 \be\label{KP-BKPBKP}
 \tau^{KP}_{(\alpha|\beta)}([\xb,\yb])=\sum_{P_x} \sum_{P_F} A(P_F) D(P_x) \tau^{BKP}_{\nu^+}({\bf z}^+)
 \tau^{BKP}_{\nu^-}({\bf z}^-)
 \ee
 where the ${\bf z}^+$ is a subset of the set ${\bf z}$ of coordinates $x_1,\dots,x_M,y_1,\dots, y_M$ and ${\bf z}^-$
 is the complementary subset ${\bf z}\setminus {\bf z}^+$ and where $A(P_F)$ and  $D(P_x) $ is a special combinatorial
 numbers, see\cite{paper1} and  $D(P_x) $ is
 basically the same number where the partitions are
   $ \alpha, \beta, \nu^+, \nu^- $ are replaced by continuous variables
    $ \xb, \yb, {\bf z}^+, {\bf z}^- $, and also contain a Vandermonde-type factor that depends on these 
    continuous variables.
 The partition $\nu^+$ is obtained
 as an ordered subset of the set $F$ numbers $\alpha_1,\dots,\alpha_r,\beta_1+1,\dots,\beta_r+1$
 and the partition $\nu^-$ is the complementary subset $F\setminus \nu^+$ under the condition:
 in case a pair of equal numbers occur in the set $F$ (say, $\alpha_i=\beta_j+1$), then these numbers
 belong to {\it different} subsets (these are either $\nu^+$ or $\nu^-$). Symbols $P_F$ and $P_x$
 which we call polarizations denote the selection of the subsets.

 The papers \cite{paper1},\cite{paper2} deal with finding the weights $a(P_F)$ respectively in cases (\ref{s-QQ})
 and \ref{KP-BKPBKP}, the weights coincide. The work \cite{paper4} deals with the determing of $d(P_x)$ in both
 cases.

 We also deal with formulas which relate the polynomials naturally obtained in the KP theory
 and polynomials naturally obtained the BKP theory.

\subsection{Four key lemmas.}

There are four statements that follows from (\ref{psi-phi-z})-(\ref{psiphi-j}) and from
(\ref{vac_annihil_psi_j_r})-(\ref{vac_annihil_psi_j_l})
and (\ref{phi_vac_r})-(\ref{hatphi_0_vac_l} ). We need certain notations.

For a given partition $\lambda=(\alpha|\beta)$,
we have an ordered set of numbers $(\alpha,I(\beta)) =\alpha_1,\dots,\alpha_r,\beta_1+1,\dots,\beta_r+1$, which
consists of two
strict partitions of equal length:
left $\alpha=(\alpha_1,\dots,\alpha_r)$ and right $I(\beta)=(\beta_1+1,\dots,\beta_r+1)$.
Let us perform a permutation $\sigma$ in this set, which satisfies the following
conditions:

\begin{itemize}
 \item[1]

 the resulting sequence of numbers again consists of two consecutive strict partitions $(\nu^+,\nu^-)$:
a left strict partition $\nu^+=(\nu^+_1,\dots,\nu^+_{p_1})$, $\nu^+_1>\dots >\nu^+_{p_1}\ge 0$ and a right strict
partition $\nu^-=(\nu^-_1,\dots,\nu^-_p)$, $\nu^-_1>\dots >\nu^-_{p_2}\ge 0$. We call $p_1$ (or $p_2$ ) the cardinality
of partition $\nu^+$ (of $\nu^-$) and denote it by $m(\nu^+)$ (by $m(\nu^-)$)

\item[2]
in the original set $(\alpha,\beta)$  there can be pairs of
equal numbers (for example $\alpha_i=\beta_j+1$). Let us denote by $s$ the number of such pairs. It is required
that the element of each pair that was in the left partition (that is, in
$\alpha$) remains in the left partition $\nu^+$
(accordingly, the right element of the pair ends up in the right partition $\nu^-$).
\end{itemize}

More notations:
\begin{itemize}
 \item We call the pairs $(\nu^+,\nu^-)$ {\em polarization} of $(\alpha|\beta)$.  For a given $\lambda=(\alpha|\beta)$,
we denote $P(\alpha,\beta)$ set of all possible pairs $(\nu^+,\nu^-)$ under the conditions above
  \item We denote $\sgn \sigma (\nu^+,\nu^-)$ the sign of the permutation from the ordered set $(\alpha,I(\beta))$
  to the ordered set $(\nu^+,\nu^-)$
  \item We denote $\pi(\nu^\pm) :=\#(\alpha \cap \nu^{\pm})$
the cardinality of the intersection of $\alpha$ with $\nu^\pm$
\end{itemize}
(See also Appendix for more formal treatment.)

Introduce
\be\label{textsc{a}}
 \textsc{a}^{\nu^+,\nu^-}_{\alpha,\beta}:={(-1)^{\tfrac{1}{2}r(r+1) + s}\over 2^{r-s}}
\sgn(\nu)(-1)^{\pi(\nu^-)} i^{ m(\nu^-)}
\ee

\begin{lemma}\label{Polarization-partitions}
\bea\label{Psi=P-P+}
\Psi_{(\alpha|\beta)}&=&
\sum_{(\nu^+,\nu^-)\in \PP(\alpha, \beta)} \textsc{a}^{\nu^+,\nu^-}_{\alpha,\beta} \Phi^+_{\nu^+}\Phi^-_{\nu^-} ,
\\
\label{Psi=P-P+dag}
\Psi^\dag_{(\alpha|\beta)}&=&\sum_{(\nu^+,\nu^-)\in \PP(\alpha, \beta)} (-1)^{|\nu^+|+|\nu^-|}
\textsc{a}^{\nu^+,\nu^-}_{\alpha,\beta} \Phi^+_{-\nu^+}\Phi^-_{-\nu^-}
\eea

\end{lemma}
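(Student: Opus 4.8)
The plan is to substitute the inversion formulas (\ref{psi-phi-z}) for the charged modes directly into the definition (\ref{Psi-lambda}) of $\Psi_{(\alpha|\beta)}$ and read off the expansion term by term. Writing each factor as $\psi_{\alpha_k}=\tfrac{1}{\sqrt2}(\phi^+_{\alpha_k}-i\phi^-_{\alpha_k})$ and $\psi^\dag_{-\beta_k-1}=\tfrac{(-1)^{\beta_k+1}}{\sqrt2}(\phi^+_{\beta_k+1}+i\phi^-_{\beta_k+1})$, the product of the $2r$ factors expands into $2^{2r}$ monomials in the neutral modes, one for each assignment of a sign $+$ or $-$ to each of the $2r$ indices in the ordered list $(\alpha,I(\beta))=(\alpha_1,\dots,\alpha_r,\beta_1+1,\dots,\beta_r+1)$. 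Such an assignment is exactly the data of a candidate polarization: the indices marked $+$ will form $\nu^+$ and those marked $-$ will form $\nu^-$.

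Next I would identify which monomials survive and rewrite them in canonical form. Because the neutral relations (\ref{neutral-canonical}) give $(\phi^\pm_a)^2=0$ for every $a\neq0$, any monomial in which one value is marked with the same sign twice vanishes; the only repeated values in $(\alpha,I(\beta))$ are the $s$ coincidences $\alpha_i=\beta_j+1$, so the surviving monomials are precisely those in which each coincident pair is split between $\nu^+$ and $\nu^-$ --- this is condition~2 defining $\PP(\alpha,\beta)$. (The value $0$ needs no special care since $\beta_j+1\ge1$, so $0$ can occur only once, inside $\alpha$.) For a surviving monomial one anticommutes the chosen $\phi^+$'s to the front and the $\phi^-$'s to the rear and sorts each block into decreasing order; the resulting sign is the sign of the permutation sending $(\alpha,I(\beta))$ to $(\nu^+,\nu^-)$, i.e. $\sgn(\nu)$. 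The accompanying scalar phase is a product of $+1$ for each index sent to $\nu^+$ and of $-i$ (resp. $+i$) for each index of $\alpha$ (resp. of $I(\beta)$) sent to $\nu^-$; away from the coincidences this collapses, via $\#(\alpha\cap\nu^-)=\pi(\nu^-)$ and $m(\nu^-)=\ell(\nu^-)$, to $(-1)^{\pi(\nu^-)}i^{m(\nu^-)}$. The global sign $(-1)^{\frac12 r(r+1)}$ then comes from combining the prefactor $(-1)^{\frac12 r(r-1)}(-1)^{\sum\beta_j}$ in (\ref{Psi-lambda}) with the $r$ signs $(-1)^{\beta_k+1}$ produced by the $\psi^\dag$-substitutions.

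The delicate point --- and the step I expect to be the main obstacle --- is the treatment of the $s$ coincident pairs together with the overall power of two. For a coincident value $a$ the two admissible splittings (its $\alpha$-copy to $\nu^+$ and its $I(\beta)$-copy to $\nu^-$, or vice versa) produce the \emph{same} pair $(\nu^+,\nu^-)$, hence contribute to the same term of the sum; one must check that after including the reordering signs the two contributions are \emph{equal}, not opposite, so that they add and so that the $\pi(\nu^-)$ accounting of the previous paragraph remains correct. This coherent doubling at each of the $s$ pairs is what produces the factor $(-1)^{s}$ and, combined with the $2r$ factors of $\tfrac{1}{\sqrt2}$ and the $2^{k/2}$ normalizations concealed in $\Phi^+_{\nu^+}$ and $\Phi^-_{\nu^-}$, fixes the power of two recorded in (\ref{textsc{a}}); it also shows that fixing the ``left copy to $\nu^+$'' convention merely selects one representative and does not overcount. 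Keeping this $2$-power and sign accounting straight across the coincidences is where care is required. Finally, (\ref{Psi=P-P+dag}) follows by the identical computation applied to (\ref{Psi^dag-lambda}): substituting $\psi_{-\beta_k-1}$ and $\psi^\dag_{\alpha_k}$ now produces negative-index neutral modes assembling into $\Phi^+_{-\nu^+}\Phi^-_{-\nu^-}$, and the reversal of the operator order together with the signs in $\Phi^\pm_{-\nu}$ contributes the extra factor $(-1)^{|\nu^+|+|\nu^-|}$, leaving the same amplitude $\textsc{a}^{\nu^+,\nu^-}_{\alpha,\beta}$.
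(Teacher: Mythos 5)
Your argument for (\ref{Psi=P-P+}) coincides with the paper's own proof (the proof of (\ref{polariz_sum}) in the Appendix, imported from \cite{paper1}): substitute (\ref{psi-phi-z}) into (\ref{Psi-lambda}), expand into $2^{2r}$ neutral monomials indexed by sign assignments, discard those killed by $(\phi^\pm_a)^2=0$ for $a\neq 0$, absorb the reordering sign into $\sgn(\nu)$ and the phases into $(-1)^{\pi(\nu^-)}i^{m(\nu^-)}$, and note that the $s$ coincident pairs produce $2^s$ identical terms; your observation that the two splittings of a coincident pair contribute with equal (not opposite) sign --- because the extra transposition sign cancels the flip $+i\leftrightarrow -i$ --- is exactly the content the paper compresses into ``noting that there are $2^s$ resulting identical terms''. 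The one place you genuinely depart from the paper is (\ref{Psi=P-P+dag}): you rerun the direct expansion on (\ref{Psi^dag-lambda}), with negative-index modes and reversed order, whereas the paper avoids a second computation by fixing the sign factor through the pairings (\ref{<PsiPsi>})--(\ref{<PhiPhi>}), i.e. by requiring $\langle 0|\Psi^\dag_{(\alpha|\beta)}\Psi_{(\alpha'|\beta')}|0\rangle=\delta_{\alpha,\alpha'}\delta_{\beta,\beta'}$ to hold against the already-proved expansion (\ref{Psi=P-P+}). Your route is self-contained but duplicates the sign bookkeeping (which is where errors most easily creep in); the paper's route gets the factor $(-1)^{|\nu^+|+|\nu^-|}$ essentially for free from duality. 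One caution: your assertion that the power-of-two count reproduces (\ref{textsc{a}}) exactly should be rechecked rather than asserted, because the Appendix formula (\ref{polariz_sum}) carries the coefficient $2^{s-r}$ against the \emph{bare} products $\prod_j\phi^+_{\nu^+_j}\prod_k\phi^-_{\nu^-_k}$, while by (\ref{Phi-alpha}) and (\ref{m_pm_sum}) one has $\Phi^+_{\nu^+}\Phi^-_{\nu^-}=2^{r}\prod_j\phi^+_{\nu^+_j}\prod_k\phi^-_{\nu^-_k}$, so the $\Phi$-normalized coefficient and the bare-$\phi$ coefficient cannot both be $2^{s-r}$ times the same phase; this is an inconsistency internal to the paper (main text versus Appendix), but a careful execution of your computation lands on the Appendix normalization rather than literally on (\ref{textsc{a}}), and you should say so explicitly instead of claiming agreement.
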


  The proof of (\ref{Psi=P-P+})
 takes into account the sign factor $\sgn(\nu)$ corresponding to the order of the neutral fermion
factors, as well as the powers of $-1$ and $i$, and noting that there are $2^s$  resulting identical terms,
then obtain our result, see also (\ref{polariz_sum}). The sign factor in (\ref{Psi=P-P+dag}) can be verified via 
$\langle 0| \Psi^\dag_{(\alpha|\beta)}\Psi_{(\alpha'|\beta')}|0\rangle =\delta_{\alpha,\alpha'}\delta_{\beta,\beta'}$,
see (\ref{<PsiPsi>})-(\ref{<PhiPhi>}).

Thanks to the quite similar relations (\ref{psi-phi-z})-(\ref{psiphi-j}) we obtain the direct
analogue of Lemma \ref{Polarization-partitions}:
\bl\label{Polarization-fields}
\be
\Psi(\xb,\yb)=
\sum_{(\zb^+,\zb^-)\in \PP(\xb, \yb)} \textsc{a}^{\zb^+,\zb^-}_{\xb,\yb} \Phi^+(\zb^+)\Phi^-(\zb^-) ,
\ee

\el
This is used in the forthcoming work \cite{paper4}.

Let us introduce
\be\label{J-Delta}
J_\Delta := \prod_{i=1}^{\ell(\Delta)} J_{\Delta_i},\qquad
J_{-\Delta} := \prod_{i=1}^{\ell(\Delta)} J_{-\Delta_i},\quad \Delta\in\Pa
\ee
and
\be\label{J-Delta-B}
J^{{\rm B}\pm}_\Delta := \prod_{i=1}^{\ell(\Delta)} J^{{\rm B}\pm}_{\Delta_i},\qquad
J^{{\rm B}\pm}_{-\Delta} := \prod_{i=1}^{\ell(\Delta)} J^{{\rm B}\pm}_{-\Delta_i},\quad \Delta\in\OP
\ee

For a given $\Delta\in\OP$, one can split its parts into two ordered odd  partitions $(\Delta^+,\Delta^-)$:
$\Delta=\Delta^+\cup\Delta^-$, $\Delta^+,\Delta^- \in OP$,
$\ell(\Delta^+) + \ell(\Delta^-)=\ell(\Delta)$. The set of all such  $(\Delta^+,\Delta^-)$ we denote
$O\PP(\Delta)$.

From $J_n=J^{\rm B+}_n+J^{\rm B^-}_n,\,n\,{\rm odd}$ (see \cite{JM}), we obtain
\bl \label{Polarization-currents}
\be
J_\Delta = \sum_{(\Delta^+,\Delta^-)\in O\PP} J^{{\rm B}+}_{\Delta^+}J^{{\rm B}-}_{\Delta^-},
\qquad
J_{-\Delta} = \sum_{\Delta^+\in \OP\atop \Delta^+\cup \Delta^+=\Delta} J^{{\rm B}+}_{-\Delta^+}J^{{\rm B}-}_{-\Delta^-},
\ee
\el

 \begin{lemma}[\bf Factorization]
 \label{factorization_lemma}
 If $U^+$ and $U^-$ are either even or odd degree elements of the subalgebra generated by the operators
 $\{\phi^+_i\}_{i \in \Zb}$ and  $\{\phi^-_i\}_{i \in \Zb}$ respectively, the VEV of their product can be factorized as:
\be
\langle 0 | U^+ U^-|0\rangle =
\begin{cases}
\langle 0 |  U^+|0\rangle \langle 0| U^- |0\rangle
&
\text{ if } U^+ \text{ and } U^-\text{ are both of even degree}\\
0  &  \text{ if } U^+ \text{ and }  U^- \text{ have different  parity }\\
2i \langle 0|U^+\phi^+_0|0\rangle \langle 0|U^-\phi^-_0 |0\rangle &
\text{  if }  U^+ \text{ and }  U^+ \text{ are both of odd degree}.
\end{cases}
\label{gen_factorization}
\ee
\end{lemma}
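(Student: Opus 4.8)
The plan is to derive all three cases at once from the Wick expansion of the quasi-free vacuum, after pinning down the special role of the zero modes. First I would reduce to the case that $U^+$ and $U^-$ are \emph{monomials} in the $\phi^+_i$ and $\phi^-_i$ respectively: by linearity it suffices to treat monomials, and the Clifford relations (\ref{neutral-canonical}) kill any repeated nonzero mode (since $(\phi^\pm_j)^2=0$ for $j\neq0$) while $(\phi^\pm_0)^2=\tfrac12$ from (\ref{square_phi_0}) lets us rewrite any monomial, up to a scalar, so that the zero mode $\phi^+_0$ (resp. $\phi^-_0$) occurs \emph{at most once}. These reductions preserve the even/odd parity of the degree.

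The computational core is a short table of two-point functions. Expanding the definitions (\ref{charged-neutral}) and using the charged contractions $\langle 0|\psi_i\psi^\dag_j|0\rangle=\delta_{ij}$ if $j<0$, $\langle 0|\psi^\dag_j\psi_i|0\rangle=\delta_{ij}$ if $i\ge 0$ (all others vanishing), I would establish
\[
\langle 0|\phi^+_j\phi^-_k|0\rangle=\tfrac{i}{2}\,\delta_{j,0}\delta_{k,0},\qquad
\langle 0|\phi^+_j\phi^+_0|0\rangle=\tfrac12\,\delta_{j,0},
\]
and likewise for $\phi^-$. In words: a $\phi^+$ contracts with a $\phi^-$ \emph{only} through the pair $(\phi^+_0,\phi^-_0)$, and the zero mode $\phi^+_0$ contracts with no nonzero $\phi^+$-mode. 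I would now apply Wick's theorem to $\langle 0|U^+U^-|0\rangle$, writing it as a signed sum over complete pairings of the $a=\deg U^+$ plus $b=\deg U^-$ factors. If $c$ is the number of cross-pairs, the surviving pairings must pair the remaining $a-c$ plus-modes and $b-c$ minus-modes among themselves, so $a-c$ and $b-c$ are even, i.e.\ $c\equiv a\equiv b\pmod 2$; and since at most one $\phi^+_0$ and one $\phi^-_0$ are available, $c\in\{0,1\}$. This settles two cases immediately: if $a\not\equiv b$ the sum is empty, giving $0$; and if $a,b$ are both even then $c=0$, there are no cross-contractions, the Pfaffian of the contraction matrix block-diagonalizes, and it factors as $\langle 0|U^+|0\rangle\,\langle 0|U^-|0\rangle=\Pf(A)\,\Pf(B)$.

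In the odd--odd case $c=1$: the unique cross-pair is $(\phi^+_0,\phi^-_0)$ with value $\tfrac i2$, and the rest factors into the self-contractions of $U^+$ with its $\phi^+_0$ removed and of $U^-$ with its $\phi^-_0$ removed (if either zero mode is absent the sum is empty and, as one checks, the right-hand side vanishes too). To match the claimed value $2i\langle 0|U^+\phi^+_0|0\rangle\langle 0|U^-\phi^-_0|0\rangle$, I would observe that in each factor on the right the appended $\phi^\pm_0$ can only contract with the internal $\phi^\pm_0$, contributing $\tfrac12$ and leaving exactly the same self-contractions, so the right-hand side equals $2i\cdot\tfrac12\cdot\tfrac12=\tfrac i2$ times those self-contractions. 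The one step needing genuine care --- and the main obstacle --- is the \textbf{sign bookkeeping}: writing the left side as $\tfrac i2\,\varepsilon\,W^+W^-$ and the right as $2i\cdot\tfrac12\varepsilon_+\cdot\tfrac12\varepsilon_-\,W^+W^-$, one must verify $\varepsilon=\varepsilon_+\varepsilon_-$, i.e.\ that the permutation sign for extracting the $(\phi^+_0,\phi^-_0)$ contraction on the left matches the product of the signs for moving the two appended zero modes into contracting position on the right. This reduces to a parity count that uses precisely that $U^+$ and $U^-$ have odd degree, and I expect it to close; the even--even and different-parity cases carry no such subtlety.
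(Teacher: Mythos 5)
Your proposal is correct in substance, but note first that this paper never actually proves Lemma \ref{factorization_lemma}: it is stated and used here, with the proof deferred to \cite{paper1} (the Appendix only proves the polarization lemma), so your argument must stand on its own. It essentially does. The reduction to monomials with distinct modes and at most one zero-mode factor is legitimate (anticommutation relations only generate lower-degree terms of the same parity), your contraction table is right --- $\langle 0|\phi^+_j\phi^-_k|0\rangle=\tfrac{i}{2}\delta_{j,0}\delta_{k,0}$ and $\langle 0|\phi^+_j\phi^+_0|0\rangle=\tfrac12\delta_{j,0}$ both check out against (\ref{charged-neutral}) and the vacuum annihilation conditions --- and the parity count on the number of cross-pairs correctly disposes of the even--even and mixed-parity cases. (One small caveat: the version of Wick's theorem you need is the general free-fermion one, where contractions are c-numbers, not the restricted anticommuting form (\ref{Wick-Pf}) stated in Section \ref{fermionic_wick}; this is standard but worth saying.)

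The one unfinished step is the sign identity $\varepsilon=\varepsilon_+\varepsilon_-$ in the odd--odd case, which you only ``expect to close.'' It does close, and you can bypass the pairing combinatorics entirely in a way that makes the role of oddness transparent. Since $\phi^+_0$ anticommutes with every $\phi^+_j$, $j\neq 0$, and with every $\phi^-_k$, you may normalize both monomials so the zero modes sit at the right ends, $U^+=V^+\phi^+_0$, $U^-=V^-\phi^-_0$, with $V^\pm$ even of degree $\deg U^\pm -1$ (the signs produced multiply the same $U^\pm$ on both sides of (\ref{gen_factorization}), hence cancel). Because $V^-$ is even, $\phi^+_0V^-=V^-\phi^+_0$, so
\be
\langle 0|U^+U^-|0\rangle=\langle 0|V^+V^-\phi^+_0\phi^-_0|0\rangle
=\tfrac{i}{2}\,\langle 0|V^+V^-|0\rangle
=\tfrac{i}{2}\,\langle 0|V^+|0\rangle\langle 0|V^-|0\rangle ,
\ee
where $\phi^+_0\phi^-_0|0\rangle=i(\phi^+_0)^2|0\rangle=\tfrac{i}{2}|0\rangle$ follows at once from (\ref{phi_0_ac_r}) and (\ref{square_phi_0}), and the last equality is your even--even case. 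On the other side, $(\phi^\pm_0)^2=\tfrac12$ gives $\langle 0|U^\pm\phi^\pm_0|0\rangle=\tfrac12\langle 0|V^\pm|0\rangle$, so the right-hand side of (\ref{gen_factorization}) is $2i\cdot\tfrac14\langle 0|V^+|0\rangle\langle 0|V^-|0\rangle$ --- the same number. This two-line reduction of the odd--odd case to the even--even case is precisely the parity count you anticipated; with it inserted, your proof is complete, and it is arguably cleaner than tracking permutation signs inside the Pfaffian expansion.
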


In \cite{paper1},\cite{paper2},\cite{paper3} we apply Lemma \ref{Polarization-partitions} and
Lemma \ref{factorization_lemma} to evaluate vacuum expectation values. For these purposes it
was suitable to have

\begin{definition}{\bf Supplemented partitions.}
If $\nu$ is a strict partition of cardinality $m(\nu)$ (with $0$ allowed as a part),
we define the associated {\em supplemented partition} $\hat{\nu}$ to be
\be
\hat{\nu} := \begin{cases}  \nu \ \text{ if } m(\nu) \ \text{ is even}, \cr
                  (\nu,0)  \  \text{ if } m(\nu) \  \text{ is odd}.
                  \end{cases}
 \label{hat_nu}
\ee
We denote by $m(\hat{\nu})$  the cardinality of $\hat{\nu}$.
\end{definition}

For instance, as a result of application of Lemmas \ref{Polarization-partitions},\ref{factorization_lemma}
to (\ref{Schur}) and (\ref{You}) we get \cite{paper1}
\be\label{sQQ}
s_{(\alpha|\beta)}(\tb')=\sum_{(\nu^+,\nu^-)\in \PP(\alpha,\beta)}
{a}^{\nu^+,\nu^-}_{\alpha,\beta} Q_{\nu^+}(\pb^{\rm B})Q_{\nu^-}(\pb^{\rm B})
\ee
where the power sum variables are as follows: $\tb'=(t_1,0,t_3,0,t_5,0,\dots)$,
$\pb^{\rm B}=\frac12(t_1,t_3,t_5,\dots)$,
and where
\be\label{a}
 {a}^{\nu^+,\nu^-}_{\alpha,\beta}:={(-1)^{\tfrac{1}{2}r(r+1) + s}\over 2^{r-s}}
\sgn(\nu)(-1)^{\pi(\nu^-)+\frac12  m(\hat{\nu}^-)}
\ee
which is obtained from (\ref{textsc{a}}) by replacing the factor $ i^{  m({\nu}^-)} $
by $ (-1)^{\frac12  m(\hat{\nu}^-)}$
(notice the hat above $\nu^-$
- this is the result of the application of Lemma \ref{factorization_lemma}).

\section{Relation between characters of symmetric group and characters of Sergeev group \label{characters}}

It is known \cite {Mac} that the power sums labeled by partitions:
\be
\pb_\Delta =p_{\Delta_1}p_{\Delta_2}\cdots,\quad \Delta\in\Pa
\ee
(here $ \pb = (p_1, p_2, p_3, \dots) $) are uniquely expressed
in terms of Schur polynomials
\be\label{p-s-chi}
\pb_\Delta =\sum_{\lambda\in\Pa} \chi_\lambda(\Delta) s_\lambda(\pb),\quad
\ee
while the odd power sum variables (power sums labels by odd numbers)
\be
\pb_\Delta =p_{\Delta_1}p_{\Delta_2}\cdots,\quad \Delta\in\OP
\ee
also denoted $\pb^{\rm B}_\Delta$ (where $\pb^{\rm B}=(p_1,p_3,p_5,\dots)$)
 are uniquely expressed in terms of projective Schur polynomials:
\be\label{pbB-Q}
\pb^{\rm B}_\Delta =\sum_{\alpha\in\DP} \chi^{\rm B}_\alpha(\Delta) Q_\alpha(\pb^{\rm B})=\pb_\Delta,
\quad \Delta \in \OP
\ee

Let me recall that the coefficients $\chi_\lambda(\Delta)$ in (\ref{p-s-chi}) has the meaning of
the irreducible characters of the symmetric group  $S_d,\,d=|\lambda|$ evaluated on the cycle class $C_\Delta$
$\Delta=(\Delta_1,\dots,\Delta_k) $, $|\lambda|=|\Delta|=d$, see \cite{Mac}, and we can write it as
\be
\chi_\lambda(\Delta) = \langle 0| J_\Delta   \Psi_{\alpha,\beta} |0\rangle
\ee
where $J_\Delta $ and $\Psi_{\alpha,\beta}$ are given respectively by (\ref{J-Delta}) and (\ref{Psi-lambda})
(see for instance \cite{MMNO} for details).

The characters of symmetric groups they have a very wide application, 
in particular, in mathematical physics. I will give two works as an example \cite{ItMirMor},\cite{MMS-knots}.

The notion of Sergeev group was introduced in \cite{EOP}, see Appendix. 
The coefficient $\chi^{\rm B}_\alpha$ is the irreducible character of this group \cite{EOP},\cite{Serg}.
As it was shown in \cite{EOP} the so-called spin Hurwitz numbers (introduced in this work) are expressed in terms
of these characters.
As it was pointed out in \cite{Lee2018},\cite{MMN2019} the generating function for spin Hurwitz numbers
can be related to the BKP hierarchy in a similar way as the generating function for usual Hurwitz numbers
is related to the KP (and also to Toda lattice) hierarchies, see pioneering works
\cite{Okoun2000},\cite{Okounkov-Pand-2006},
\cite{Goulden-Jackson-2008} and also
\cite{MMN},\cite{MM4},\cite{AMMN-2014},\cite{Uspehi-KazarianLando},\cite{HO-2014},\cite{CheAmb},\cite{NO-LMP},
\cite{Harnad-overview-2015}.

We can write these characters in terms of the BKP currents $J^{\rm B}_m$ ($m$ odd):
\be
   \chi^{\rm B}_\alpha(\Delta) =2^{-\ell(\alpha)} \langle 0|J^{\rm B}_{\Delta}\Phi_{\alpha} |0\rangle
\ee
see \cite{MMNO}.

From Lemma \ref{Polarization-partitions} and (\ref{Polarization-currents}) we obtain
\begin{theorem}
The character $\chi_\lambda,\,\lambda=(\alpha|\beta)$ evaluated on an odd cycle $\Delta\in\OP$  is the bilinear
function of the Sergeev characters as follows:
\be
\chi_{(\alpha|\beta)}(\Delta)=
\sum_{  (\nu^+,\nu^-)\in\PP(\alpha,\beta) \atop (\Delta^+,\Delta^-)\in O\PP(\Delta)} 2^{\ell(\nu^+) + \ell(\nu^+) }
a^{\nu^+,\nu^-}_{\alpha,\beta}
\chi^{\rm B+}_{\nu^+}(\Delta^+)\chi^{\rm B-}_{\nu^-}(\Delta^-)
\ee
where $a^{\nu^+,\nu^-}_{\alpha,\beta}$  are given by (\ref{a}).
\end{theorem}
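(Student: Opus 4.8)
The plan is to evaluate the fermionic expression $\chi_{(\alpha|\beta)}(\Delta)=\langle 0|J_\Delta\,\Psi_{\alpha,\beta}|0\rangle$ by inserting the two polarization expansions and then collapsing the resulting neutral-fermion vacuum expectation values with the Factorization Lemma, in exact parallel to the passage from (\ref{Schur})--(\ref{You}) to (\ref{sQQ}). First I would substitute Lemma \ref{Polarization-currents} for $J_\Delta$ (legitimate since $\Delta\in\OP$) and Lemma \ref{Polarization-partitions} for $\Psi_{(\alpha|\beta)}$, obtaining
\be
\chi_{(\alpha|\beta)}(\Delta)=\sum_{(\Delta^+,\Delta^-)\in O\PP(\Delta)}\ \sum_{(\nu^+,\nu^-)\in\PP(\alpha,\beta)}\textsc{a}^{\nu^+,\nu^-}_{\alpha,\beta}\,\langle 0|J^{{\rm B}+}_{\Delta^+}J^{{\rm B}-}_{\Delta^-}\Phi^+_{\nu^+}\Phi^-_{\nu^-}|0\rangle .
\ee

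Next I would separate the two neutral sectors. Each $J^{{\rm B}-}_{m}$ is quadratic, hence of even degree, in the $\{\phi^-_i\}$, and since $[\phi^+_j,\phi^-_k]_+=0$ (see (\ref{neutral-canonical})) it commutes through the $\phi^+$-word $\Phi^+_{\nu^+}$ with no sign. The operator thus reorders as $U^+U^-$ with $U^+:=J^{{\rm B}+}_{\Delta^+}\Phi^+_{\nu^+}$ built from $\{\phi^+_i\}$ and $U^-:=J^{{\rm B}-}_{\Delta^-}\Phi^-_{\nu^-}$ built from $\{\phi^-_i\}$, of degree-parities $\ell(\nu^+)$ and $\ell(\nu^-)$ respectively. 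The key structural remark is that any polarization merely redistributes the $2r$ numbers $(\alpha,I(\beta))$, so $m(\nu^+)+m(\nu^-)=2r$ is even and $\ell(\nu^+),\ell(\nu^-)$ always share the same parity; the mixed-parity (vanishing) branch of Lemma \ref{factorization_lemma} therefore never occurs.

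Then I would apply Lemma \ref{factorization_lemma} in its two surviving branches and read off Sergeev characters through $\langle 0|J^{{\rm B}\pm}_{\Delta^\pm}\Phi^\pm_{\nu^\pm}|0\rangle=2^{\ell(\nu^\pm)}\chi^{{\rm B}\pm}_{\nu^\pm}(\Delta^\pm)$; note that the factor $2^{\ell(\nu^+)+\ell(\nu^-)}$ in the statement is exactly the normalization discrepancy between the $Q$-formula (\ref{You}) and the $\chi^{\rm B}$-formula. In the even--even branch the VEV factorizes directly into $2^{\ell(\nu^+)+\ell(\nu^-)}\chi^{{\rm B}+}_{\nu^+}(\Delta^+)\chi^{{\rm B}-}_{\nu^-}(\Delta^-)$, and the coefficient is unchanged since $i^{m(\nu^-)}=(-1)^{\frac12 m(\nu^-)}=(-1)^{\frac12 m(\hat{\nu}^-)}$ when $m(\nu^-)$ is even. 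In the odd--odd branch one uses $\Phi^\pm_{\nu^\pm}\phi^\pm_0=2^{-1/2}\Phi^\pm_{\hat{\nu}^\pm}$ to rewrite $\langle 0|U^\pm\phi^\pm_0|0\rangle=2^{-1/2}\,2^{\ell(\nu^\pm)}\chi^{{\rm B}\pm}_{\nu^\pm}(\Delta^\pm)$; the prefactor $2i$ together with the two half-powers $2^{-1/2}$ reduces to a single $i$ while the character normalizations supply $2^{\ell(\nu^+)+\ell(\nu^-)}$, and absorbing this $i$ into $i^{m(\nu^-)}$ gives $i^{m(\nu^-)+1}=(-1)^{\frac12 m(\hat{\nu}^-)}$. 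In either branch $\textsc{a}^{\nu^+,\nu^-}_{\alpha,\beta}$ is thereby converted into the coefficient $a^{\nu^+,\nu^-}_{\alpha,\beta}$ of (\ref{a}), and summing over both polarizations yields the stated identity.

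The bulk of the work, and the only delicate point, is the phase bookkeeping in the odd--odd branch: verifying that the $2i$ of Lemma \ref{factorization_lemma}, the two half-powers of $2$ from the zero parts of the supplemented partitions $\hat{\nu}^\pm$, and the residual power of $i$ assemble into precisely the substitution $i^{m(\nu^-)}\mapsto(-1)^{\frac12 m(\hat{\nu}^-)}$ that distinguishes $a$ from $\textsc{a}$. This is the computation already performed for (\ref{sQQ}) in \cite{paper1}; once it is invoked, the present statement follows by the identical mechanism, the only new ingredient being the current polarization of Lemma \ref{Polarization-currents} and the replacement of the functions $Q_{\nu^\pm}$ by the characters $\chi^{{\rm B}\pm}_{\nu^\pm}(\Delta^\pm)$.
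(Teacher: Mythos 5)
Your proof is correct and is essentially the paper's own argument: the paper obtains the theorem precisely by inserting Lemma \ref{Polarization-currents} for $J_\Delta$ and Lemma \ref{Polarization-partitions} for $\Psi_{(\alpha|\beta)}$ into $\chi_{(\alpha|\beta)}(\Delta)=\langle 0|J_\Delta\Psi_{\alpha,\beta}|0\rangle$, then applying the factorization Lemma \ref{factorization_lemma} and the normalization $\chi^{\rm B}_\alpha(\Delta)=2^{-\ell(\alpha)}\langle 0|J^{\rm B}_\Delta\Phi_\alpha|0\rangle$, with the passage from $\textsc{a}^{\nu^+,\nu^-}_{\alpha,\beta}$ to $a^{\nu^+,\nu^-}_{\alpha,\beta}$ handled exactly as in the derivation of (\ref{sQQ}). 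Your phase bookkeeping in the even--even and odd--odd branches matches the paper's prescription in (\ref{a}), and the factor $2^{\ell(\nu^+)+\ell(\nu^-)}$ you obtain confirms that the exponent $\ell(\nu^+)+\ell(\nu^+)$ printed in the theorem is a typo for $\ell(\nu^+)+\ell(\nu^-)$.
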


\section{Relation between generalized skew Schur polynomials and generalized skew projective Schur polynomials
\label{skew-polynomials}}

This section may be treated as a remark to \cite{paper1}.

Let us find the relation between the following quantities:
\bea\label{skew-s}
s_{\lambda/\mu}(\pb')&:=&\langle 0|\Psi^\dag_\mu\gamma(\pb') \Psi_{\alpha,\beta} |0\rangle,\\
\label{skew-Q}
Q_{\nu/\theta}(\pb^{\rm B})&:=&\langle 0|\Phi_{-\theta}\gamma^{\rm B}(2\pb^{\rm B}) \Phi_\nu|0\rangle
\eea
where the power sum variables are as follows: $\pb'=(p_1,0,p_3,0,p_5,0,\dots)$, $\pb^{\rm B}=\frac12(p_1,p_3,p_5,\dots)$,
where $\lambda=(\alpha|\beta)$, $\mu=(\gamma|\delta)$ and where $\alpha,\beta,\gamma,\delta,\theta,\nu\in\DP$.

 \begin{theorem}\label{skew}
 \be
 s_{\lambda/\mu}(\pb') = \sum_{(\nu^+,\nu^-)\in \PP(\alpha,\beta)\atop (\theta^+,\theta^-)\in \PP(\gamma,\delta)}
 (-1)^{|\theta^+|+|\theta^-|}
 a^{\nu^+,\nu^-}_{\alpha,\beta} a^{\theta^+,\theta^-}_{\gamma,\delta}
 Q_{\nu^+/\theta^+}(\pb^{\rm B})Q_{\nu^-/\theta^-}(\pb^{\rm B})
 \ee
 \end{theorem}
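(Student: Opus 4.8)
The plan is to reduce the skew Schur VEV (\ref{skew-s}) to products of skew projective-Schur VEVs (\ref{skew-Q}) by the same three-move strategy that produced (\ref{sQQ}): expand the charged-fermion factors into neutral fermions with Lemma \ref{Polarization-partitions}, split the vertex operator into its two commuting BKP halves, and then factorize the vacuum expectation with Lemma \ref{factorization_lemma}. Concretely, I would first apply (\ref{Psi=P-P+}) to $\Psi_{\alpha,\beta}$ and (\ref{Psi=P-P+dag}) to $\Psi^\dag_{(\gamma|\delta)}$, turning the matrix element into a double sum over polarizations $(\nu^+,\nu^-)\in\PP(\alpha,\beta)$ and $(\theta^+,\theta^-)\in\PP(\gamma,\delta)$, with scalar coefficients $\textsc{a}^{\nu^+,\nu^-}_{\alpha,\beta}$ and $(-1)^{|\theta^+|+|\theta^-|}\textsc{a}^{\theta^+,\theta^-}_{\gamma,\delta}$ multiplying the VEV $\langle 0|\Phi^+_{-\theta^+}\Phi^-_{-\theta^-}\,\gamma(\pb')\,\Phi^+_{\nu^+}\Phi^-_{\nu^-}|0\rangle$.

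Second, since $\pb'=(p_1,0,p_3,0,\dots)$ only the odd currents survive in $\gamma(\pb')$, so the identity $J_n=J^{\rm B+}_n+J^{\rm B-}_n$ for odd $n$, together with $[J^{\rm B+}_m,J^{\rm B-}_{m'}]=0$ (the two neutral families mutually anticommute, hence their bilinears commute), factorizes the vertex operator as $\gamma(\pb')=\gamma^{\rm B+}(2\pb^{\rm B})\gamma^{\rm B-}(2\pb^{\rm B})$, exactly matching the arguments in (\ref{skew-Q}). Because $\gamma^{\rm B\pm}$ are even, they commute with the opposite-family factors and with each other, so I can collect all $+$ operators into $U^+:=\Phi^+_{-\theta^+}\gamma^{\rm B+}\Phi^+_{\nu^+}$ and all $-$ operators into $U^-:=\Phi^-_{-\theta^-}\gamma^{\rm B-}\Phi^-_{\nu^-}$; moving the $\phi^-$-block $\Phi^-_{-\theta^-}$ to the right past the $\phi^+$-block $\Phi^+_{\nu^+}$ is the only nontrivial transposition and produces the Koszul sign $(-1)^{m(\theta^-)\,m(\nu^+)}$.

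Third, I would apply Lemma \ref{factorization_lemma} to $\langle 0|U^+U^-|0\rangle$. The parity identities $m(\nu^+)+m(\nu^-)=2r$ and $m(\theta^+)+m(\theta^-)=2r'$ force $U^+$ and $U^-$ to have the same degree-parity, so only the even/even and odd/odd cases of (\ref{gen_factorization}) occur. In the even case the VEV factorizes directly as $\langle 0|U^+|0\rangle\langle 0|U^-|0\rangle$, and each factor is a skew projective-Schur function by definition (\ref{skew-Q}) and the symmetry between the $+$ and $-$ neutral families, giving $Q_{\nu^+/\theta^+}(\pb^{\rm B})Q_{\nu^-/\theta^-}(\pb^{\rm B})$. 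In the odd case the extra factor $2i$ and the insertions $\phi^+_0,\phi^-_0$ promote $\nu^\pm$ to their supplemented partitions $\hat\nu^\pm$; this is exactly the step that converts the $i^{m(\nu^\pm)}$ hidden inside $\textsc{a}$ into $(-1)^{\frac12 m(\hat\nu^\pm)}$ and hence replaces $\textsc{a}$ by the reduced coefficient $a$ of (\ref{a}), while $Q_{\hat\nu/\theta}=Q_{\nu/\theta}$ reproduces the same $Q$-bilinear.

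The main obstacle will be the sign and parity bookkeeping in the last two steps: I must verify that the Koszul sign $(-1)^{m(\theta^-)\,m(\nu^+)}$ from the reordering, combined with the $2i$-factor and $\phi_0$-insertions of the odd/odd case, collapses precisely into the product $(-1)^{|\theta^+|+|\theta^-|}\,a^{\nu^+,\nu^-}_{\alpha,\beta}\,a^{\theta^+,\theta^-}_{\gamma,\delta}$ of the statement, i.e. that all ``cross'' signs coupling the two polarizations either cancel or are absorbed by the supplementation rather than leaving a residual $m(\theta^-)\,m(\nu^+)$ dependence. This is the same delicate accounting that underlies the passage from (\ref{textsc{a}}) to (\ref{a}) in the non-skew case, now carried out simultaneously for the two partitions $\lambda=(\alpha|\beta)$ and $\mu=(\gamma|\delta)$, and it is where essentially all the effort lies; the remainder is a direct transcription of the derivation of (\ref{sQQ}).
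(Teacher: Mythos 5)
Your proposal follows essentially the same route as the paper: the paper's own (very terse) proof is precisely to apply Lemma \ref{Polarization-partitions} to the VEV (\ref{skew-s}), run through the possible parities of $m(\nu^\pm),m(\theta^\pm)$ so as to invoke the factorization Lemma \ref{factorization_lemma}, and identify the resulting factors via the fermionic expression (\ref{skew-Q}). The extra steps you spell out --- the splitting $\gamma(\pb')=\gamma^{\rm B+}(2\pb^{\rm B})\gamma^{\rm B-}(2\pb^{\rm B})$ coming from $J_n=J^{\rm B+}_n+J^{\rm B-}_n$ for odd $n$, and the Koszul-sign/supplementation bookkeeping that converts $\textsc{a}$ into $a$ --- are exactly the details the paper leaves implicit, and they do close up consistently with the stated coefficients.
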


 For the proof we apply Lemma \ref{Polarization-partitions} to the first equality (\ref{skew-s})
 and consider all possible parities of $m(\nu^\pm),m(\theta^\pm)$ to apply Lemma \ref{factorization_lemma}
 and take into account the fermionic expression (\ref{skew-Q}).

 Theorem \ref{skew} in a straightforward way can be generalized for the generalized skew Schur and
 skew projective Schur polynomials which we define as follows:
 Suppose 
 \be
 g =g(C)=e^{\sum C_{ij}\psi_i \psi^\dag_j }
 \ee
 where the entries $C_{ij}$ form a matrix.
  Also suppose
 \be
 h^\pm=h^\pm(A)=e^{\sum A_{ij}\phi_i^\pm \phi_j^\pm}
 \ee
 where now the entries $A_{ij}=-A_{ji}$ form a skew-symmetric matrix.
Suppose that
\be\label{Borel}
g |0\rangle = |0\rangle c_1,
\quad
h^\pm |0\rangle = |0\rangle c_2 ,\quad c_{1,2}\in \mathbb{C},\quad c_{1,2}\neq 0.
\ee

 Define the generalized Schur polynomials and the generalized projective Schur polynomials by
\bea\label{skew-s-A}
s_{\lambda/\mu}(\pb|g)&:=&\langle \mu|\gamma(\pb)g |\lambda\rangle,\\
\label{skew-Q-A}
Q_{\nu/\theta}(\pb^{\rm B}|h^\pm)&:=&\langle 0|\Phi_{-\theta}\gamma^{\rm B\pm}(2\pb^{\rm B}) h^\pm \Phi_\nu|0\rangle
\eea
\br
In (\cite{paper2}) $s_{\lambda/\mu}(\pb|g)$ was defined $s_{\lambda/\mu}(\pb|C)$ and 
$Q_{\nu/\theta}(\pb^{\rm B}|h^\pm)$ was defined $Q_{\nu/\theta}(\pb^{\rm B}|A)$.
\er

The polynomiality of (\ref{skew-s-A})  in $p_1,\dots,p_{|\lambda|-|\mu|}$ and the polynomiality
of  (\ref{skew-Q-A}) in $p_1,\dots,p_{|\nu|-|\theta|}$
follows from (\ref{Borel}).

One can treat a given $\pb$ as a constant and study $s_{\lambda/\mu}(\pb|A)$ as the function of discrete
variables $\lambda_i-i$ and $\mu_i-i$.

 \begin{theorem}\label{skew-A} Suppose $g=h^+h^-$ and (\ref{Borel}) is true. Then
 \be
 s_{\lambda/\mu}(\pb'|g) = \sum_{(\nu^+,\nu^-)\in \PP(\alpha,\beta)\atop (\theta^+,\theta^-)\in \PP(\gamma,\delta)}
 (-1)^{|\theta^+|+|\theta^-|}
 a^{\nu^+,\nu^-}_{\alpha,\beta} a^{\theta^+,\theta^-}_{\gamma,\delta}
 Q_{\nu^+/\theta^+}(\pb^{\rm B}|h^+)Q_{\nu^-/\theta^-}(\pb^{\rm B}|h^-)
 \ee
 \end{theorem}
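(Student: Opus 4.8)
The plan is to follow the proof of Theorem~\ref{skew} line for line, the only new ingredient being the insertion of $g=h^+h^-$, and to verify that this insertion rides through the factorization without disturbing the sign bookkeeping. First I would unfold the definition (\ref{skew-s-A}) using $\langle\mu|=\langle 0|\Psi^\dag_{\gamma,\delta}$ and $|\lambda\rangle=\Psi_{\alpha,\beta}|0\rangle$, so that
\be
s_{\lambda/\mu}(\pb'|g)=\langle 0|\Psi^\dag_{\gamma,\delta}\,\gamma(\pb')\,h^+h^-\,\Psi_{\alpha,\beta}|0\rangle .
\ee
Applying Lemma~\ref{Polarization-partitions} to $\Psi_{\alpha,\beta}$ via (\ref{Psi=P-P+}) and to $\Psi^\dag_{\gamma,\delta}$ via (\ref{Psi=P-P+dag}) expands this into a double sum over $(\nu^+,\nu^-)\in\PP(\alpha,\beta)$ and $(\theta^+,\theta^-)\in\PP(\gamma,\delta)$ with coefficient $(-1)^{|\theta^+|+|\theta^-|}\textsc{a}^{\nu^+,\nu^-}_{\alpha,\beta}\textsc{a}^{\theta^+,\theta^-}_{\gamma,\delta}$, multiplying the vacuum expectation value $\langle 0|\Phi^+_{-\theta^+}\Phi^-_{-\theta^-}\gamma(\pb')\,h^+h^-\,\Phi^+_{\nu^+}\Phi^-_{\nu^-}|0\rangle$.

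Next I would split the current exponential. Since $\pb'=(p_1,0,p_3,0,\dots)$ involves only odd indices and $J_n=J^{\rm B+}_n+J^{\rm B-}_n$ for odd $n$, with the two summands commuting (each is even and they lie in the mutually anticommuting $\phi^+$- and $\phi^-$-sectors), the exponential factorizes as $\gamma(\pb')=\gamma^{\rm B+}(2\pb^{\rm B})\gamma^{\rm B-}(2\pb^{\rm B})$ with $\pb^{\rm B}=\tfrac12(p_1,p_3,\dots)$; this is the specialization of Lemma~\ref{Polarization-currents} used here. Crucially, $\gamma^{\rm B\pm}$ and $h^\pm$ are all of even degree, so each commutes with every operator of the opposite sector. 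I can therefore move the even factors past one another, move all $\phi^-$-operators to the right, and regroup the VEV as $\langle 0|U^+U^-|0\rangle$ with $U^+=\Phi^+_{-\theta^+}\gamma^{\rm B+}h^+\Phi^+_{\nu^+}$ and $U^-=\Phi^-_{-\theta^-}\gamma^{\rm B-}h^-\Phi^-_{\nu^-}$; the only sign produced by the regrouping, $(-1)^{m(\theta^-)m(\nu^+)}$ from passing $\Phi^+_{\nu^+}$ through $\Phi^-_{-\theta^-}$, is exactly the sign already present in the $g=1$ case. Applying Lemma~\ref{factorization_lemma} then splits the VEV according to the common parity of $m(\theta^\pm)+m(\nu^\pm)$, annihilating the parity-mismatched terms and supplying the factor $2i$ together with the insertions $\phi^\pm_0$ in the odd--odd case; this is precisely the replacement of $i^{m(\nu^-)}$ by $(-1)^{\frac12 m(\hat{\nu}^-)}$ (and of $i^{m(\theta^-)}$ by $(-1)^{\frac12 m(\hat{\theta}^-)}$) that turns $\textsc{a}$ (\ref{textsc{a}}) into $a$ (\ref{a}) in each sector.

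Because $U^\pm$ carry the same parity $(-1)^{m(\theta^\pm)+m(\nu^\pm)}$ as in Theorem~\ref{skew}, the even factors $\gamma^{\rm B\pm}$ and $h^\pm$ not altering it, the two resulting vacuum expectation values are by definition (\ref{skew-Q-A}) exactly $Q_{\nu^+/\theta^+}(\pb^{\rm B}|h^+)$ and $Q_{\nu^-/\theta^-}(\pb^{\rm B}|h^-)$, and collecting coefficients yields the asserted identity; throughout, the hypothesis (\ref{Borel}) guarantees that all these objects are genuine polynomials. The main obstacle is the sign bookkeeping: one must check that inserting $h^+h^-$ produces no sign beyond those of the $g=1$ computation. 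This is settled by the single structural fact that $h^\pm$, being exponentials of the quadratic expressions $\sum A_{ij}\phi^\pm_i\phi^\pm_j$, are even and confined to one sector each, so they commute across sectors and preserve the Fock-space parity grading. This is exactly why the factorized hypothesis $g=h^+h^-$, rather than a general $g$, is what makes the bilinear relation go through.
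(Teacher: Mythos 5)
Your proposal is correct and takes essentially the same approach the paper intends: the paper proves Theorem \ref{skew} by applying Lemma \ref{Polarization-partitions}, sorting terms by the parities of $m(\nu^\pm),m(\theta^\pm)$ so as to invoke Lemma \ref{factorization_lemma}, and matching the factorized VEVs against the fermionic definition of the skew $Q$-functions, and it then declares Theorem \ref{skew-A} to follow by ``the same reasoning'' (omitting the details). Your write-up is exactly that omitted argument, with the one structural point made explicit and correctly justified --- that $h^\pm$ are even, sector-confined operators, so inserting $g=h^+h^-$ commutes across the two neutral-fermion sectors, leaves the regrouping sign $(-1)^{m(\theta^-)m(\nu^+)}$ and hence the coefficients $a^{\nu^+,\nu^-}_{\alpha,\beta}a^{\theta^+,\theta^-}_{\gamma,\delta}$ unchanged, and is absorbed into the definition (\ref{skew-Q-A}).
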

The proof is based on the same reasoning as the proof of the theorem  in \cite{paper2}. We omit it.

 Introduce
\bea\label{factorial-s}
s_{\lambda/\mu}(\pb|\rb)&:=&\langle 0|\Psi^\dag_\mu \gamma_r(\pb) \Psi_{\alpha,\beta}|0\rangle,\\
\label{factorial-Q}
Q_{\nu/\theta}(\pb^{\rm B}|\rb)&:=&\langle 0|\Phi_{-\theta}\gamma_r^{\rm B\pm}(2\pb^{\rm B}) \Phi_\nu|0\rangle
\eea
where
\be\label{gamma_r-KP}
\gamma_r(\pb):=e^{\sum_{j=1}^\infty \frac 1j p_j \sum_{k\in\Zb} \psi_k \psi^\dag_{k+j} r(k+1)\cdots r(k+j)}
\ee
\be\label{gamma_r-BKP}
\gamma_r^{\rm B}(2\pb^{\rm B}):=e^{\sum_{j=1,{\rm odd}}^\infty \frac 2j p_j
\sum_{k\in\Zb} (-1)^k\phi^\pm_{k-j} \phi^\pm_{-k} r(k+1)\cdots r(k+j)}
\ee
By the Wick theorem we obtain
\be
 s_{\lambda/\mu}(\pb|\rb)=\det \left( r(\mu_i-i+1) \cdots r(\lambda_j-j)
 s_{(\lambda_i-\mu_j-i+j)}(\pb)  \right)_{i,j}
\ee
The similar relation for $Q_{\nu/\theta}(\pb^{\rm B}|\rb)$ is more spacious, the Wick theorem
yields a Nimmo-type pfaffian formula, we
shall omit it. We have
\be
 s_{\lambda/\mu}(\pb'|\rb) = \sum_{(\nu^+,\nu^-)\in \PP(\alpha,\beta)\atop (\theta^+,\theta^-)\in \PP(\gamma,\delta)}
 (-1)^{|\theta^+|+|\theta^-|}
 a^{\nu^+,\nu^-}_{\alpha,\beta} a^{\theta^+,\theta^-}_{\gamma,\delta}
 Q_{\nu^+/\theta^+}(\pb^{\rm B}|\rb)Q_{\nu^-/\theta^-}(\pb^{\rm B}|\rb)
 \ee

\subsection{Relation between shifted Schur and shifted projective Schur functions \label{shifted}}

Let us recall the notion of the shifted Schur function introduced by Okounkov and Olshanski \cite{OkOl}.
In can be defined as
\be\label{s^*}
s^*_\mu(\lambda) = \frac{\dim\, \lambda/\mu}{\dim\,\lambda} n(n-1)\cdots (n-k+1)
=\frac{s_{\lambda/\mu}(\pb_1)}{s_{\lambda}(\pb_1)}
\ee
where $n=|\lambda|$, $k=|\mu|$, $\pb_1=(1,0,0,\dots)$ and
$$
\dim\, \lambda/\mu=s_{\lambda/\mu}(\pb_1)(n-k)!,\quad \dim\, \lambda=s_{\lambda}(\pb_1)n!.
$$
are the number of the standard tableaus of the shape $\lambda/\mu$ and $\lambda$ respectively, see \cite{Mac}.
The function $s^*_\mu(\lambda)$ as a function of the Frobenius coordinates is also known as Frobenius-Schur
function $FS(\alpha,\beta)$

On the other hand, Ivanov \cite{Iv} introduced the projective analogue of shift $Q$-functions:
\be\label{Q^*}
Q^*_\theta(\nu)
=\frac{Q_{\nu/\theta}(\pb_1)}{Q_{\nu}(\pb_1)}
\ee
Therefore,
\be
s^*_\mu(\lambda)s_{\lambda}(\pb_1)=
\sum_{(\nu^+,\nu^-)\in \PP(\alpha,\beta)\atop (\theta^+,\theta^-)\in \PP(\gamma,\delta)}
 (-1)^{|\theta^+|+|\theta^-|}
 a^{\nu^+,\nu^-}_{\alpha,\beta} a^{\theta^+,\theta^-}_{\gamma,\delta}
 Q^*_{\theta^+}(\nu^+)Q^*_{\theta^-}(\nu^-)
 Q_{\nu^+}(\pb_1)Q_{\nu^-}(\pb_1)
\ee

Let me add that both $s^*$ and $Q^*$ were used in the description of the generalized cut-and-join structure
\cite{MMN2011},\cite{MMN2019} 
 in the topics of Hurwitz and spin Hurwitz numbers.

\section{Small remarks to \cite{paper2},\cite{paper3} \label{small}}

\subsection{A note to the articles \cite{paper2},\cite{paper3} about polynomial solutions}

A simple,
in fact, a trivial remark that should be made nonetheless.
When we say "polynomial solutions", then we should always clarify what variables we are talking about.

Consider any KP tau function written in Sato form
\be\label{Sato-form} 
\tau(\pb)=\sum_{\lambda\in\Pa} \pi_\lambda s_\lambda(\pb)
\ee
where $\Pa$ is the set of all partitions,
 $s_\lambda$ are Schur functions \cite{Mac} and $\pi_\lambda$ are Plucker coordinates of the point
in the Sato Grassmannian \cite{Sato}. For the tau function (\ref{Sato-form}) to be a polynomial in the 
variables $p_1,p_2,\dots$, it is obviously necessary that only a finite number of $\pi_\lambda$ be nonzero. 
But more often we mean symmetric polynomials in $x_1,x_2,\dots$ variables that arise from the substitutions
\be
p_m = \pm \sum_{i=1}^N x_i^m
\ee

Suppose that $\pi_\lambda =0$ if the length $\ell(\lambda)>M$.
Then it is enough to choose
$$
p_m = p_m(\xb) = - \sum_{a=1}^N x_a^m
$$
then $\tau(\pb(x))$ is a symmetric polynomial in the variables $\xb=(x_1,\dots,x_N)$ of the weight
at most $MN$.
The example is the hypergeometric family of tau function.
Such polynomils can have a determinant representation in form
\be\label{det-i-x_j}
\tau(\pb(\xb)) = \langle 0|\gamma(\pb(\xb))\,g\,|0\rangle
=\frac{1}{\Delta(\xb)}\det \left( \langle 0|\psi^\dag_i\psi(x_j^{-1}) g|0\rangle \right)|_{i,j}
\ee
which can be derived from the bosonization formulas and from the Wick theorem.
However, whether the last equality is true  or not true depends on the choice of $g$.
In the next Section \ref{fermionic_wick}, we just discuss for what $ g $ Wick's theorem can be applied, in form written
in (\ref{det-i-x_j}).

Similarly, any BKP tau function can be written in form
\be
\tau^{\rm B}(2\pb^{\rm B})=\sum_{\alpha\in\DP} A_\alpha Q_\alpha(\pb^{\rm B})
\ee
where $Q_\alpha$ are the projective Schur functions \cite{Mac} and $A_\alpha$ are Cartan coordinates of the point
in the isotropic Grassmannian \cite{HB}, $\DP$ denote the set of all strict partitions (partitions with distinct parts).

Suppose that $A_\lambda =0$ if the length $\ell(\lambda)>M$.
Then it is enough to choose
\be
p_m = p_m(\xb) = - 2\sum_{a=1}^N x_a^m
\ee

As examples I can mention orthogonal polynomials in Appendix 6 of \cite{OS2000}, 
these are: $q$-Askey-Wilson polynomials, coninuous $q$-Jacoby polynomials, 
$q$-Gegenbauer polynomials,  Clebsh-Gordan coefficients $C_q$,
$q$-Hahn polynomials. $q$-Racach polynomials.

\subsection{How to use the Wick theorem, a remark to \cite{paper3}
\label{fermionic_wick}}

Another simple remark can be added about the usage of the Wick theorem.
The Wick theorem is the main tool to get
various pfaffian and determinant expressions for multivariable polynomials which naturally appear
in the framework of the KP and the BKP hierarchies.

For an even number of fermionic operators $(w_1, \dots, w_{2L})$ that anticommute:
\be
[w_j, w_k]_+=0, \quad 1 \le j, k \le 2L
\label{anticommute}
\ee
the matrix with elements  $\langle 0|w_jw_k |0\rangle$ is skew symmetric, and Wick's theorem  implies that
$\langle 0|w_1\cdots w_{2L} |0\rangle$ is given by its Pfaffian
\be
\langle 0|w_1\cdots w_{2L} |0\rangle = \Pf\left( \langle 0|w_jw_k |0\rangle \right).
\label{Wick-Pf}
\ee
On the other hand,  if  the odd elements  $w_1,w_3,\dots$ are linear combinations of creation
operators $\{\psi_j\}_{j\in\Zb}$ and the even ones
$w_2, w_4,\dots$ linear combinations of annihilation operator $\{\psi_j^\dag\}_{j\in\Zb}$,  Wick's theorem implies
\be
\langle 0|w_1\cdots w_{2L} |0\rangle = \det\left( \langle 0|w_jw_k |0\rangle \right)_{j=1,3,\dots; k=2,4,\dots}.
\label{Wick-det}
\ee

Then, the problem is to present the Fock vector $\gamma(\pb)g|0\rangle$ in the factorized form.

It is possible in case we can factorize $g$ as follows:
\be\label{fact}
g=g^+ g^0 g^-,
\ee
where
\be
g^-|0\rangle =|0\rangle,\quad g^0|0\rangle =c|0\rangle,\quad \langle 0|g^+ = \langle 0|
\ee

However it may be impossible. Say if  $g=O_\lambda$ where $O_\lambda$ is the operator which creates the
state
\be
O_\lambda |0\rangle = |\lambda\rangle
\ee
where $|\lambda\rangle$ is the basis Fock vector labeled by a partition $\lambda$, then it is impossible.

In this context the work \cite{TakashiLMP} devoted to the study of Bruhat cells of '$A_\infty$ group' is helpful.

\section{Polynomials and vertex operators, a remark to \cite{paper2} \label{vertex}}

Polynomial BKP tau functions were studied in \cite{KvdL1},\cite{KMMMZ-2},\cite{KvdLRoz},\cite{KvdLRoz} and in 
\cite{paper2},\cite{paper3}.

In Example 4.4 in \cite{paper2} we considered polynomials denoted by
$s_\lambda\left(\tb|\tilde{A}^{\bf r}(\pb))\right)$
and $Q_\mu\left(\tfrac12\tb|{A}^{\bf r}(\pb_B))\right)$ there.
Below these polynomials will be denoted  $s_\lambda\left(\tb|\pb,{\bf r})\right)$ and 
$Q_\mu\left(\tb_{\rm B}|\pb_{\rm B},{\bf r})\right)$ respectively (see Remark \ref{notations}).

\paragraph{KP case.}
The following formula is known:
$$
s_\lambda(\tb+\pb)  =
\langle 0|\gamma(\tb)\gamma(\pb)|\lambda\rangle
=\sum_{\rho\in\Pa}\langle 0|\gamma(\tb)|\rho\rangle \langle \rho|\gamma(\pb)|\lambda\rangle=
\,\sum_{\rho \subseteq \lambda}
s_{\lambda/\rho}({\bf p})\,s_{\rho}({\bf t}),
$$

In \cite{paper2} we studied examples of  polynomial $\tau$-functions and introduced the generalized Schur function
parametrized in terms two sets of
 infinite parameters ${\bf r}:=\{r(j)\}_{j \in \Zb}$ and  ${\bf p}=(p_1,p_2,p_3,\dots)$:
\be
s_\lambda\left(\tb|\pb, {\bf r}\right)  =
\langle 0|\gamma(\tb)\gamma_r(\pb)|\lambda\rangle
=
\,\sum_{\rho \subseteq \lambda}
\,r_{\lambda/\rho}\,s_{\lambda/\rho}({\bf p})\,s_{\rho}({\bf t}),
\label{s_lambda_r_p_expansion}
\ee
where $s_{\lambda/\rho}({\bf p})$ is the skew Schur function corresponding to the skew
partition $\lambda/\rho$,
\be
r_{\lambda/\rho}:= \prod_{(i,j)\in \lambda/\rho } r(j-i)
\ee
is the content product defined on the skew diagramm $\lambda/\rho$ (that is the product over all
nodes with coordinates $(i,j)$ of the diagram, for instance $r_{(2,2)/(1)}=r(1)r(-1)r(0)$)
and where
\be
\gamma_r(\pb):=e^{\sum_{j=1}^\infty \frac 1j p_j \sum_{k\in\Zb} \psi_k \psi^\dag_{k+j} r(k+1)\cdots r(k+j)}
\ee
so that $\gamma(\pb)=\gamma_{r\equiv 1}(\pb)$. The example of such polynomials is
the Laguerre symmetric function ${\cal L}_\lambda$ introduced by Olshanski in \cite{Ol}
(see Definition 4.3 there). For this very case we choose
\bea\label{Olshanski}
\pb &\& =\tb_0:=(1,0,0,\dots)
\label{p_0_def}
\\
t_j &\&= t_j(\xb) = \sum_{k=1}^N x_k^j
\label{t-x}
\cr
r(j)&\&=r(j;z,z'):=-(z+j)(z'+j),
\label{r_j_laguerre}
\eea
eq.~(\ref{s_lambda_r_p_expansion}) takes the form
\be
{\cal L}_\lambda(\xb) =\sum_{\rho \subseteq \lambda} (-1)^{|\lambda|-|\rho|}
\,\frac{\dim \lambda/\rho}{(|\lambda|-|\rho|)!}\,s_{\rho}({\tb}({\xb}))
\, \prod_{(i,j)\in \lambda/\rho }(z+j-i)(z'+j-i)
\ee

Consider the vertex operator introduced in the context of classical integrable system in the series
articles by Kyoto school (see, for instance \cite{JM})
\footnote{At first the vertex operator was introduced in \cite{PogrebkovSushko} in different context.}
\be
V^\pm(z)=
e^{\pm\sum_{j=1}^\infty \frac 1jz^j p_j} z^{\pm N}e^{\mp \partial_N}
e^{\mp\sum_{j=1}^\infty  z^{-j} \frac{\partial}{\partial p_j}},\quad z\in S^1
\ee
Introduce
\be\label{A_KP}
A_m({\bf r})=\res_z
\vdots  \left(\left( \frac 1z r(D) \right)^{m} \cdot V^+(z) \right) V^{-}(z) \vdots \frac{dz}{z},\quad m\neq 0,
\ee
where $r(D)$ acts on functions on the circle as $r(D)\cdot z^n = r(n) z^n $ ($D=z\frac{\partial}{\partial z}$ is
the Euler operator on the circle). For instance
$\left( \frac 1z r(D) \right)^2 =\frac{1}{z^2}r(D-1)r(D) $. By $\vdots X \vdots$ we denote the bosonic normal
ordering, that means that all derivates $\frac{\partial}{\partial p_j}$ are moved to the right. For instance,
$\vdots \frac{\partial^2}{\partial p_i^2} p_k \frac{\partial}{\partial p_j} \vdots =  p_k
\frac{\partial^3}{\partial p_i^2\partial p_j}$.

In case $r \equiv 1$,
\be
A_m({\bf r})=\begin{cases}
              p_m \quad {\rm if}\,m >0\\
              m\frac{\partial}{\partial p_m}\,\,\,\quad {\rm if}\, m <0
             \end{cases}
\ee

We have the realization of the Heisenberg algebra $[A_m({\bf r}),A_n({\bf r})]=n\delta_{m,n}$, therefore
\be
e^{\sum_{m<0} \frac{p_m}{m}A_m({\bf r})}\cdot e^{\sum_{m>0} \frac{p_m}{m}A_m({\bf r})}=
e^{\sum_{m>0}\frac{p_m p_{-m}}{m}}
e^{\sum_{m>0} \frac{p_m}{m}A_m({\bf r})}\cdot e^{\sum_{m<0} \frac{p_m}{m}A_m({\bf r})}
\ee

\begin{proposition}\label{vertex-on-Schur}
\be
s_\lambda(\tb| \pb , {\bf r}) = e^{\sum_{m<0} \frac{p_m}{m}A_m({\bf r})}  \cdot s_{\lambda}(\tb)
\ee
\end{proposition}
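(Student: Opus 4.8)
The plan is to factor the operator $\Gamma(\pb):=e^{\sum_{m<0}\frac{p_m}{m}A_m({\bf r})}$ through the $r\equiv1$ case by means of the diagonal \emph{content operator}. Introduce $\hat r$, acting on symmetric functions of $\tb$ by $\hat r\,s_\lambda:=r_\lambda s_\lambda$, where $r_\lambda:=\prod_{(i,j)\in\lambda}r(j-i)$; note that $r_\lambda/r_\rho=r_{\lambda/\rho}$ for $\rho\subseteq\lambda$, in the content-product notation already in use. The statement I will establish as the crux is the conjugation identity $A_m({\bf r})=\hat r^{-1}A_m({\bf 1})\hat r$ for $m<0$, i.e. $\Gamma(\pb)=\hat r^{-1}\Gamma_1(\pb)\hat r$ with $\Gamma_1(\pb):=e^{\sum_{m<0}\frac{p_m}{m}A_m({\bf 1})}$. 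Granting it, the Proposition is three lines:
\[
\Gamma(\pb)s_\lambda(\tb)=\hat r^{-1}\Gamma_1(\pb)\hat r\,s_\lambda(\tb)=\hat r^{-1}\Gamma_1(\pb)\big(r_\lambda s_\lambda(\tb)\big)=r_\lambda\,\hat r^{-1}s_\lambda(\tb+\pb),
\]
and, expanding $s_\lambda(\tb+\pb)$ by the addition formula recalled just above and applying $\hat r^{-1}$ termwise,
\[
r_\lambda\,\hat r^{-1}\!\!\sum_{\rho\subseteq\lambda}s_{\lambda/\rho}(\pb)s_\rho(\tb)=\sum_{\rho\subseteq\lambda}\frac{r_\lambda}{r_\rho}\,s_{\lambda/\rho}(\pb)s_\rho(\tb)=\sum_{\rho\subseteq\lambda}r_{\lambda/\rho}\,s_{\lambda/\rho}(\pb)s_\rho(\tb),
\]
which is exactly the expansion (\ref{s_lambda_r_p_expansion}) of $s_\lambda(\tb|\pb,{\bf r})$.

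Second, I record why $\Gamma_1(\pb)$ is the translation $s_\lambda(\tb)\mapsto s_\lambda(\tb+\pb)$. For $r\equiv1$ the paper's formula gives $A_m({\bf 1})=m\,\partial/\partial p_m$ for $m<0$, so the exponent collapses to $\sum_{m<0}\frac{p_m}{m}A_m({\bf 1})=\sum_{\mu>0}p_\mu\,\partial/\partial p_\mu$, where the $p_\mu$ in front are the external parameters and the $\partial/\partial p_\mu$ differentiate the argument power sums. This is the infinitesimal generator of the shift of the argument's power sums by $\pb$, whose exponential is precisely $s_\lambda(\tb)\mapsto s_\lambda(\tb+\pb)$; equivalently this is the $r\equiv1$ instance of the statement, for which (\ref{s_lambda_r_p_expansion}) reduces to the displayed addition formula $s_\lambda(\tb+\pb)=\sum_{\rho\subseteq\lambda}s_{\lambda/\rho}(\pb)s_\rho(\tb)$.

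The remaining and main task is the conjugation identity $A_m({\bf r})=\hat r^{-1}A_m({\bf 1})\hat r$, which is where the dressing $(\tfrac1z r(D))^m$ in the definition of $A_m({\bf r})$ does its work. The content operator has a Fock-space avatar that is diagonal in the charged-fermion occupation basis, $\hat r\,\psi_a\hat r^{-1}=\rho_a\psi_a$ and $\hat r\,\psi^\dag_a\hat r^{-1}=\rho_a^{-1}\psi^\dag_a$ with $\rho_a/\rho_{a-1}=r(a)$; conjugating a bare current term $\psi_i\psi^\dag_{i+\mu}$ by $\hat r^{-1}(\cdot)\hat r$ multiplies it by $\rho_{i+\mu}/\rho_i=\prod_{c=i+1}^{i+\mu}r(c)$, the content product of the $\mu$ boxes lying between modes $i$ and $i+\mu$. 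I will show that on the bosonic side this content weighting of the current modes is exactly the effect of inserting $(\tfrac1z r(D))^m$ into the generating current $\vdots V^+(z)V^-(z)\vdots$, i.e. that $\hat r^{-1}V^\pm(z)\hat r$ acquires precisely the $r(D)$-factors whose products accumulate to $(\tfrac1z r(D))^m$ after taking the residue. The delicate point—what I expect to be the main obstacle—is that $r(D)$ must act on the \emph{entire} $z$-dependence of $V^+(z)$, including its exponential and zero-mode factors, and that the direction of conjugation is $\hat r^{-1}(\cdot)\hat r$ rather than $\hat r(\cdot)\hat r^{-1}$: a naive mode-by-mode substitution on the bare field $\psi(z)=\sum_a\psi_a z^a$ produces the reciprocal content products and the wrong direction, and must be avoided. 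As consistency checks that pin down the correct normalization, conjugation automatically preserves the Heisenberg relations obeyed by the $A_m({\bf 1})$, and the fermionic shorthand $\gamma_r(\pb)=\hat r^{-1}\gamma(\pb)\hat r$ reproduces, via $s_\lambda(\tb|\pb,{\bf r})=\langle 0|\gamma(\tb)\gamma_r(\pb)|\lambda\rangle$, the same direct content products $r_{\lambda/\rho}$ appearing in (\ref{s_lambda_r_p_expansion}).
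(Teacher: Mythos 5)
Your reduction is the natural one and, as far as it goes, it is correct: with $\hat r\,s_\lambda:=r_\lambda s_\lambda$ one indeed has $\gamma_r(\pb)=\hat\rho^{-1}\gamma(\pb)\hat\rho$ on the fermionic side (your mode computation $\rho_{i+\mu}/\rho_i=r(i+1)\cdots r(i+\mu)$ reproduces exactly the exponent of (\ref{gamma_r-KP})), the three-line derivation from the addition formula together with $r_\lambda/r_\rho=r_{\lambda/\rho}$ lands precisely on the expansion (\ref{s_lambda_r_p_expansion}), and the $r\equiv1$ case is handled correctly; the hypothesis $r(n)\neq 0$ needed to invert $\hat r$ is harmless, since for fixed $\lambda$ both sides are polynomial in finitely many values $r(n)$. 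For what it is worth, the paper states Proposition \ref{vertex-on-Schur} without any proof (only the later Corollary is said to have been derived fermionically in \cite{paper2}), so your route can only be judged on its own merits, and it is the route the definitions (\ref{s_lambda_r_p_expansion}) and (\ref{gamma_r-KP}) are set up for.

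The genuine gap is that your crux, $A_m({\bf r})=\hat r^{-1}A_m({\bf 1})\hat r$ for $m<0$ with $A_m({\bf r})$ given by (\ref{A_KP}), is announced (``I will show that\dots'') but never proved, and it is not a verification that more care with $V^+(z)$ will finish: it is exactly where the difficulty lives. Pass to the fermionic counterpart (\ref{A-F}), which the paper declares to be the same operator, and read $\left(\tfrac1z r(D)\right)^m$ for $m=-\mu<0$ as the operator inverse $\left(r(D)^{-1}z\right)^\mu$; a one-line residue computation then gives
\be
A^{Fer}_{-\mu}({\bf r})=\sum_{k\in\Zb}\frac{:\psi_k\psi^\dag_{k+\mu}:}{r(k+1)\cdots r(k+\mu)}\,,
\ee
i.e.\ \emph{reciprocal} content products. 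That is the opposite conjugation, $\hat r\,A_{-\mu}({\bf 1})\,\hat r^{-1}$, and feeding it into your chain of equalities yields $\sum_{\rho\subseteq\lambda}r_{\lambda/\rho}^{-1}\,s_{\lambda/\rho}(\pb)s_\rho(\tb)$ rather than $s_\lambda(\tb|\pb,{\bf r})$, so under this reading your key identity, and the Proposition itself, are false. The identity you need holds only if one reads $\left(\tfrac1z r(D)\right)^{-\mu}$ as $\left(r(D)z\right)^{\mu}$ (equivalently, replaces ${\bf r}$ by ${\bf r}^{-1}$ in (\ref{A_KP})), which is what reproduces the direct products $r(k+1)\cdots r(k+\mu)$ appearing in (\ref{gamma_r-KP}). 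No treatment of the zero modes or exponential factors of $V^+(z)$ can change this, because the bosonic and fermionic formulas are counterparts of one another; the issue is which operator the symbol denotes, not computational care. Your consistency checks do not close this hole either: conjugation by $\hat r$ in \emph{either} direction preserves the commutation relations of the $A_m({\bf 1})$, so the Heisenberg check cannot distinguish the reading that makes the Proposition true from the one that makes it false. To complete the proof you must first fix the $m<0$ reading of (\ref{A_KP}) explicitly (after which the conjugation identity is an immediate mode-by-mode computation and your argument closes), or else retreat to what your fermionic observation already proves, namely the Proposition with $A_m({\bf r})$, $m<0$, \emph{defined} as the bosonization of the exponent of $\gamma_r$ — leaving its identification with the vertex-operator formula (\ref{A_KP}) as an unresolved convention.
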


In particular,
\be
 {\cal L}_{\lambda}(\xb)= \left( e^{ A_{-1}({\bf r})}  \cdot s_{\lambda}(\tb)\right)
 |_{\xb=\xb(\tb)}
\ee

\paragraph{BKP case.} It is quite similar in the BKP case.

Here we have

$$
Q_\alpha(\tb^{\rm B}+\pb^{\rm B})  =
\langle 0|\gamma^{\rm B}(2\tb^{\rm B})\gamma^{\rm B}(2\pb^{\rm B})\Phi_\alpha|0\rangle
=
\sum_{\theta\in\DP}2^{-\ell(\theta)}\langle 0|\gamma^{\rm B}(2\tb^{\rm B})\Phi_\theta|0\rangle
\langle 0|\Phi_{-\theta}\gamma^{\rm B}(2\pb^{\rm B})\Phi_\alpha|0\rangle
$$
$$
=\,\sum_{\theta \subseteq \alpha}2^{-\ell(\theta)}
Q_{\alpha/\theta}(\pb^{\rm B})\,Q_{\theta}(\tb^{\rm B}),
$$
where $Q_{\alpha/\theta}$ is the skew projective Schur function \cite{Mac}.

Following \cite{paper2} we introduce
\be
Q_\alpha(\tb^{\rm B}|\pb^{\rm B}, {\bf r}) :=
\langle 0|\gamma^{\rm B}(2\tb^{\rm B})\gamma_r(2\pb^{\rm B})\Phi_\alpha|0\rangle
=
\,\sum_{\theta \subseteq \alpha}
\,r^{\rm B}_{\lambda/\rho}\,Q_{\alpha/\theta}(\pb^{\rm B})\,Q_{\theta}(\tb^{\rm B})
\ee
where
 $\tb^{\rm B}=(t_1,t_3,t_5,\dots)$ and
 $\pb^{\rm B}=(p_1,p_3,p_5,\dots)$ are sets of parameters 
 and ${\bf r}=(r(1),r(2),r(3),\dots)$ is another
sets of parameters. Then
\be
\gamma_r^{\rm B}(2\pb^{\rm B}):=e^{\sum_{j=1,{\rm odd}}^\infty \frac 2j p_j
\sum_{k\in\Zb} (-1)^k\phi^\pm_{k-j} \phi^\pm_{-k} r(k+1)\cdots r(k+j)}
\ee
(so that $\gamma^{\rm B}(2\tb^{\rm B})=\gamma^{\rm B}_{r\equiv 1}(2\tb^{\rm B})$) and
\be
r^{\rm B}_{\alpha/\theta}:= \prod_{(i,j)\in \alpha/\theta } r(j)
\ee
where the product goes over all nodes $ (i, j) $ of the (skew) Young diagram, but each node $ (i, j) $ is 
assigned a number $ r (j) $, 
which depends only on the coordinate $ j $, but independent of the coordinate $ i $.
That is the '$B$-type' content product defined on the skew diagramm $\alpha/\theta$ (that is the product over all
nodes with coordinates $(i,j)$ of the diagram, for instance $r^{\rm B}_{(5,2)/(1)}=r(2)r(3)r(4)r(5)r(1)r(2)$).

\begin{proposition}\label{vertex-on-Q-Schur}
\be
Q_\alpha(\tb^{\rm B}|\pb^{\rm B}, {\bf r}) =
e^{2\sum_{m<0,{\rm odd}} \frac{p_m}{m}A^{\rm B}_m({\bf r})}  \cdot Q_{\alpha}(\tb^{\rm B})
\ee
where
\be\label{A_BKP}
A^{\rm B}_m({\bf r})=\frac 12 \res_z
\vdots  \left(\left( \frac 1z r(D) \right)^{m} \cdot V^{{\rm B}}(z) \right) V^{{\rm B}}(-z)
\vdots \frac{dz}{z},\quad m\,{\rm odd},
\ee
\end{proposition}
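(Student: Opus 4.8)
\emph{Approach.} I would prove Proposition \ref{vertex-on-Q-Schur} as the verbatim neutral-fermion analogue of the KP case (Proposition \ref{vertex-on-Schur}), the only genuine work being the bosonization of a single dressed current. The starting observation is that, by (\ref{You}), the covector $\langle 0|\gamma^{\rm B}(2\tb^{\rm B})$ \emph{is} the neutral-fermion bosonization map expressed as a function of $\tb^{\rm B}$: it satisfies $\langle 0|\gamma^{\rm B}(2\tb^{\rm B})\Phi_\alpha|0\rangle = Q_\alpha(\tb^{\rm B})$. Consequently, for any operator $\OO$ in the algebra generated by $\{\phi^\pm_i\}_{i\in\Zb}$ one has
\be
\langle 0|\gamma^{\rm B}(2\tb^{\rm B})\,\OO\,\Phi_\alpha|0\rangle = \hat\OO\cdot Q_\alpha(\tb^{\rm B}),
\ee
where $\hat\OO$ is the realization of $\OO$ as a differential operator in the variables $\tb^{\rm B}$ under the correspondence $\phi^\pm(z)\leftrightarrow V^{{\rm B}\pm}(z)$. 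Taking $\OO=\gamma_r^{\rm B}(2\pb^{\rm B})$ turns the defining pairing for $Q_\alpha(\tb^{\rm B}|\pb^{\rm B},{\bf r})$ into $\widehat{\gamma_r^{\rm B}(2\pb^{\rm B})}\cdot Q_\alpha(\tb^{\rm B})$, so the whole proposition reduces to the single operator identity $\widehat{\gamma_r^{\rm B}(2\pb^{\rm B})} = e^{2\sum_{m<0,\,{\rm odd}}\frac{p_m}{m}A^{\rm B}_m({\bf r})}$.

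\emph{Reduction to a single current.} Write $\gamma_r^{\rm B}(2\pb^{\rm B}) = \exp\bigl(\sum_{j>0,\,{\rm odd}}\tfrac{2}{j}p_j D_j\bigr)$ with $D_j := \sum_{k\in\Zb}(-1)^k\phi^\pm_{k-j}\phi^\pm_{-k}\,r(k+1)\cdots r(k+j)$. Because the central term in the neutral-current algebra appears only when the two mode-indices sum to zero, the family $\{D_j\}_{j>0}$ is abelian, and likewise the bosonic generators $A^{\rm B}_m({\bf r})$ with $m<0$ commute among themselves; hence the exponential maps termwise and it suffices to prove $\widehat{D_j} = A^{\rm B}_{-j}({\bf r})$ for each odd $j>0$. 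Reindexing $m=-j$ and collecting the prefactor $2$ then yields the stated exponent, the lowering (derivative) nature of $\widehat{D_j}$ matching the negative modes $A^{\rm B}_m$, $m<0$, exactly as the skew expansion forces in the KP case.

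\emph{Bosonizing the dressed current.} The core computation is $\widehat{D_j} = A^{\rm B}_{-j}({\bf r})$. I would pass to generating functions: under $\phi^\pm(z)\leftrightarrow V^{{\rm B}\pm}(z)$, the bilinear $\sum_k(-1)^k\phi_{k-j}\phi_{-k}(\cdots)$ is the coefficient produced by the normal-ordered product $\vdots V^{\rm B}(z)V^{\rm B}(-z)\vdots$, the substitution $z\mapsto -z$ in the second factor encoding precisely the sign $(-1)^k$ and the pairing of the mode $k-j$ with $-k$. The content product $r(k+1)\cdots r(k+j)$ is generated by inserting $\bigl(\tfrac1z r(D)\bigr)^{j}$, since $r(D)\cdot z^n = r(n)z^n$; extracting the relevant Fourier component is the residue $\res_z(\cdots)\tfrac{dz}{z}$. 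This is exactly the right-hand side of (\ref{A_BKP}), and the factor $\tfrac12$ there is the one compensating the double counting in the symmetric product $V^{\rm B}(z)V^{\rm B}(-z)$.

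\emph{Main obstacle and normalization.} The delicate point is this last step, because the neutral fermions obey the nonstandard contraction $[\phi^\pm_j,\phi^\pm_k]_+ = (-1)^j\delta_{j+k,0}$ of (\ref{neutral-canonical}) rather than the charged relation $\delta_{jk}$; the resulting reflection $z\mapsto -z$ and the accompanying signs must be tracked through the normal-ordering subtraction so that the residue prescription reproduces the odd-mode, factor-$\tfrac12$ generator and nothing more. I would fix every remaining constant (the $\tfrac12$ in (\ref{A_BKP}), the leading $2$ in the exponent, and all signs) by the consistency check at ${\bf r}\equiv 1$: there $A^{\rm B}_m({\bf 1})$ must reduce to the standard BKP Heisenberg generators compatible with $\gamma^{\rm B}(2\tb^{\rm B}) = \gamma^{\rm B}_{r\equiv1}(2\tb^{\rm B})$, and the commutator $[A^{\rm B}_m({\bf r}),A^{\rm B}_n({\bf r})]\propto m\,\delta_{m+n,0}$ (the neutral counterpart of the Heisenberg relation used in Proposition \ref{vertex-on-Schur}) both guarantees that the exponential is well defined and confirms that the negative modes commute, so that the termwise exponentiation of the previous step is legitimate.
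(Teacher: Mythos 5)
The paper itself states Proposition \ref{vertex-on-Q-Schur} (like its KP twin, Proposition \ref{vertex-on-Schur}) without proof, so the comparison is with the evidently intended argument: your architecture is exactly that route. The intertwining property of the covector $\langle 0|\gamma^{\rm B}(2\tb^{\rm B})$, the rewriting $Q_\alpha(\tb^{\rm B}|\pb^{\rm B},{\bf r})=\widehat{\gamma^{\rm B}_r(2\pb^{\rm B})}\cdot Q_\alpha(\tb^{\rm B})$, and the reduction to a single identity between the bosonization of each dressed current $D_j$ of (\ref{gamma_r-BKP}) and the operator $A^{\rm B}_{-j}({\bf r})$ are all sound, and they match the paper's setup (cf.\ the fermionic counterparts (\ref{A-FB-B}) and the reference to \cite{Or}).

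The genuine gap is in the core computation, which as you describe it establishes the wrong identity. Inserting the \emph{positive} power $\left(\tfrac1z r(D)\right)^{j}$, as you propose, gives
\be
\res_z\left(\left(\tfrac1z r(D)\right)^{j}\cdot\phi(z)\right)\phi(-z)\,\tfrac{dz}{z}
=\sum_{i\in\Zb}(-1)^{j-i}\,r(i)r(i-1)\cdots r(i-j+1)\,\phi_i\,\phi_{j-i},
\ee
whose mode indices sum to $+j$: this is the \emph{raising} operator $A^{\rm B}_{+j}({\bf r})$, not the lowering operators $A^{\rm B}_m({\bf r})$, $m<0$, that occur in the proposition. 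For $m=-j<0$ the definition (\ref{A_BKP}) involves $\left(\tfrac1z r(D)\right)^{-j}=\left(r(D)^{-1}z\right)^{j}$, and the same residue calculus then yields
\be
A^{Fer{\rm B}}_{-j}=\tfrac12\sum_{i\in\Zb}\frac{(-1)^{i+j}}{r(i+1)\cdots r(i+j)}\,{:}\phi_i\phi_{-i-j}{:}\,,
\ee
i.e.\ the modes are dressed by \emph{reciprocals} of the content products, whereas the current in $\gamma^{\rm B}_r(2\pb^{\rm B})$ carries the direct product $r(k+1)\cdots r(k+j)$. So the identity $\widehat{D_j}=A^{\rm B}_{-j}({\bf r})$ does not follow from your sketch; it holds only under the (nowhere stated) convention that for $m<0$ the symbol $\left(\tfrac1z r(D)\right)^{m}$ means $z^{|m|}\,r(D+1)\cdots r(D+|m|)$ rather than the operator inverse, equivalently after replacing ${\bf r}$ by $1/{\bf r}$. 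Crucially, your plan to fix all constants, signs and conventions by the consistency check at ${\bf r}\equiv1$ cannot detect this: multiplication and division by the $r$'s coincide precisely at $r\equiv1$, so the check is blind at exactly the point where the proof can fail. A complete proof must carry out the $m<0$ computation explicitly and pin down the convention under which the multiplicative dressing of (\ref{gamma_r-BKP}) is reproduced.

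A smaller flaw: your argument that $\{D_j\}$ is abelian (``the central term appears only when the mode indices sum to zero'') is not valid, since absence of a central term says nothing about the quadratic part of a commutator — already in the KP case $[J_l,\sum_k\psi_k\psi^\dag_{k+j}r(k+1)\cdots r(k+j)]\neq 0$ in general. Where the dressed currents do commute, it is because the content products compose telescopically, $r(k+1)\cdots r(k+j)\cdot r(k+j+1)\cdots r(k+j+l)=r(k+1)\cdots r(k+j+l)$. In any case commutativity is not needed: both sides of the proposition are single exponentials, and bosonization is an algebra homomorphism, so it suffices to match the exponents linearly.
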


\br
In case $r(j)=r(1-j)$ we get
\be
e^{\sum_{m<0,{\rm odd}} \frac{p_m}{m}A_m({\bf r})}
=e^{2\sum_{m<0,{\rm odd}} \frac{p_m}{m}A^{\rm B}_m({\bf r})}
\ee
Then it follows that if we take (\ref{vertex-on-Schur}) and (\ref{vertex-on-Q-Schur})
as definitions of polynomials $s_\lambda(\tb|\pb, {\bf r})$
and $Q_\alpha(\tb^{\rm B}|\pb^{\rm B}, {\bf r})$
 we obtain from (\ref{sQQ}):

\begin{corollary} 
 \be
s_{(\alpha|\beta)}(\tb'|\pb', {\bf r})=\sum_{(\nu^+,\nu^-)\in \PP(\alpha,\beta)}
{a}^{\nu^+,\nu^-}_{\alpha,\beta}
Q_{\nu^+}(\tb^{\rm B}|\pb^{\rm B}, {\bf r})Q_{\nu^-}(\tb^{\rm B}|\pb^{\rm B}, {\bf r})
\ee
where $\tb'=(t_1,0,t_3,0,\dots)$, $\pb'=(p_1,0,p_3,0,\dots)$ and 
$\tb^{\rm B}=\frac12(t_1,t_3,t_5,\dots)$, $\pb^{\rm B}=\frac12(p_1,p_3,p_5,\dots)$.
\end{corollary}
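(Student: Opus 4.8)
The plan is to read the identity as the two-parameter, $r$-dressed analogue of the basic relation (\ref{sQQ}) and to reprove it by the same fermionic mechanism that produced (\ref{sQQ}), rather than literally as an operator applied to (\ref{sQQ}). Using the fermionic presentations of Section \ref{vertex},
\[
s_{(\alpha|\beta)}(\tb'|\pb',{\bf r})=\langle 0|\gamma(\tb')\,\gamma_r(\pb')\,\Psi_{(\alpha|\beta)}|0\rangle ,
\qquad
Q_{\nu^\pm}(\tb^{\rm B}|\pb^{\rm B},{\bf r})=\langle 0|\gamma^{\rm B\pm}(2\tb^{\rm B})\,\gamma_r^{\rm B\pm}(2\pb^{\rm B})\,\Phi^\pm_{\nu^\pm}|0\rangle ,
\]
I would first expand $\Psi_{(\alpha|\beta)}$ by Lemma \ref{Polarization-partitions} into $\sum_{(\nu^+,\nu^-)\in\PP(\alpha,\beta)}\textsc{a}^{\nu^+,\nu^-}_{\alpha,\beta}\,\Phi^+_{\nu^+}\Phi^-_{\nu^-}$, reducing everything to the evaluation of $\langle 0|\gamma(\tb')\gamma_r(\pb')\,\Phi^+_{\nu^+}\Phi^-_{\nu^-}|0\rangle$.

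The decisive step is to factor the whole time evolution into commuting neutral pieces. Since $\tb'$ and $\pb'$ are supported on odd indices, $\gamma(\tb')$ involves only the odd currents $J_m=J^{\rm B+}_m+J^{\rm B-}_m$, whose two summands commute because $\phi^+$ and $\phi^-$ anticommute; this is Lemma \ref{Polarization-currents}, and it gives $\gamma(\tb')=\gamma^{\rm B+}(2\tb^{\rm B})\,\gamma^{\rm B-}(2\tb^{\rm B})$ with $\tb^{\rm B}=\tfrac12(t_1,t_3,\dots)$. The genuinely new point, and the step I expect to be the main obstacle, is the same splitting for the dressed flow $\gamma_r(\pb')$: one must show that the $r$-deformed odd current $\sum_k\psi_k\psi^\dag_{k+m}\,r(k+1)\cdots r(k+m)$ decomposes into commuting $\phi^+$- and $\phi^-$-bilinears, so that $\gamma_r(\pb')=\gamma_r^{\rm B+}(2\pb^{\rm B})\,\gamma_r^{\rm B-}(2\pb^{\rm B})$. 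This is exactly where $r(j)=r(1-j)$ is needed: the neutral fermions (\ref{charged-neutral}) pair $\psi_j$ with $(-1)^j\psi^\dag_{-j}$, and only a reflection-symmetric content weight is compatible with that pairing. The bosonized form of this compatibility is precisely the operator identity recorded in the preceding remark, and establishing it is the $r$-analogue of the splitting $J_m=J^{\rm B+}_m+J^{\rm B-}_m$ underlying Lemma \ref{Polarization-currents}.

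Granting this factorization, the two sectors separate: $\gamma(\tb')\gamma_r(\pb')\,\Phi^+_{\nu^+}\Phi^-_{\nu^-}|0\rangle$ is the product of a $\phi^+$-only vector and a $\phi^-$-only vector, and the Factorization Lemma \ref{factorization_lemma} splits the vacuum expectation value into $\langle 0|\gamma^{\rm B+}\gamma_r^{\rm B+}\Phi^+_{\nu^+}|0\rangle\,\langle 0|\gamma^{\rm B-}\gamma_r^{\rm B-}\Phi^-_{\nu^-}|0\rangle$, i.e. into $Q_{\nu^+}(\tb^{\rm B}|\pb^{\rm B},{\bf r})\,Q_{\nu^-}(\tb^{\rm B}|\pb^{\rm B},{\bf r})$. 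The three parity cases of Lemma \ref{factorization_lemma} contribute exactly the replacement $i^{m(\nu^-)}\mapsto(-1)^{\frac12 m(\hat\nu^-)}$ that turns $\textsc{a}^{\nu^+,\nu^-}_{\alpha,\beta}$ into the coefficient $a^{\nu^+,\nu^-}_{\alpha,\beta}$ of (\ref{a}), so the coefficient arithmetic is identical to that in the passage from Lemma \ref{Polarization-partitions} to (\ref{sQQ}); no extra signs appear because, unlike in Theorem \ref{skew}, there is no dual partition here. Summing over $(\nu^+,\nu^-)\in\PP(\alpha,\beta)$ yields the stated identity.

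For a consistency check I would also verify the operator route of the remark: applying the common exponential, equal on odd times under $r(j)=r(1-j)$, to (\ref{sQQ}). One should note that the required factorization over the product $Q_{\nu^+}Q_{\nu^-}$ is \emph{not} a naive homomorphism property of $e^{2\sum_{m<0,{\rm odd}}\frac{p_m}{m}A^{\rm B}_m({\bf r})}$, since for nonconstant ${\bf r}$ the operators $A^{\rm B}_m({\bf r})$ are $W$-type operators of order higher than one; rather it is the statement that, under bosonization, the charged flow acts on the pair $(\nu^+,\nu^-)$ through the two-sector coproduct $A_m\mapsto A^{\rm B}_m\otimes 1+1\otimes A^{\rm B}_m$ dictated by $J_m=J^{\rm B+}_m+J^{\rm B-}_m$ (the factor $2$ in the remark's identity being the value of this coproduct on the diagonal). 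In that form it coincides with the neutral-sector factorization above, confirming the result.
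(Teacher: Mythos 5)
Your proof is correct in substance, but it is not the route this paper takes. Here the Corollary is obtained by taking Propositions \ref{vertex-on-Schur} and \ref{vertex-on-Q-Schur} as \emph{definitions} and applying the operator identity $e^{\sum_{m<0,{\rm odd}}\frac{p_m}{m}A_m({\bf r})}=e^{2\sum_{m<0,{\rm odd}}\frac{p_m}{m}A^{\rm B}_m({\bf r})}$ (valid for $r(j)=r(1-j)$) to both sides of (\ref{sQQ}); what you do instead is rerun the fermionic derivation of (\ref{sQQ}) with the dressed flows $\gamma_r$ inserted, which is exactly the derivation the paper attributes to \cite{paper2}. The comparison is instructive. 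The paper's route is a one-line deduction, but it hides two nontrivial points: the proof of the operator identity itself, and the fact that the exponential of the $W$-type operators $A^{\rm B}_m({\bf r})$ does not act on the product $Q_{\nu^+}Q_{\nu^-}$ as a ring homomorphism for nonconstant ${\bf r}$ --- a gap you correctly flag and then resolve through the fermionic picture. Your route makes both points explicit and reduces everything to one new lemma: for odd $m$ the dressed current $\sum_k\psi_k\psi^\dag_{k+m}\,r(k+1)\cdots r(k+m)$ has vanishing $\phi^+\phi^-$ cross terms precisely when the window products are reflection symmetric, $r(k+1)\cdots r(k+m)=r(-k-m+1)\cdots r(-k)$ for all $k$, which is exactly what $r(j)=r(1-j)$ guarantees; this is the correct $r$-analogue of $J_m=J^{{\rm B}+}_m+J^{{\rm B}-}_m$, and the remaining steps (Lemma \ref{Polarization-partitions}, Lemma \ref{factorization_lemma}, the replacement $i^{m(\nu^-)}\mapsto(-1)^{\frac12 m(\hat\nu^-)}$ producing the coefficient (\ref{a}), and the absence of the extra signs of Theorem \ref{skew} since there is no bra-side partition) are verbatim those of (\ref{sQQ}). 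One caution for when you actually carry out the diagonal part of that splitting: the neutral bilinear you obtain weights the move $k\to k-m$ by $r(k-m+1)\cdots r(k)$, i.e.\ by the product of $r$ over the columns of the strip, which is what matches the $B$-type content product $r^{\rm B}_{\alpha/\theta}=\prod_{(i,j)\in\alpha/\theta}r(j)$; the window printed in (\ref{gamma_r-BKP}) is shifted relative to this, so identifying your diagonal pieces with $\gamma^{\rm B\pm}_r(2\pb^{\rm B})$ requires reading (\ref{gamma_r-BKP}) with the consistent convention rather than literally.
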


In \cite{paper2} this equality was
derived fermionically.

\er

\subsection{Eigenproblem for polynomials \label{eigen} }

\paragraph{KP case.}
Let
\be
\mathbb{T} := e^{-\sum :\psi_i\psi_i^\dag: T_i}
 \equiv e^{ \res_z :\left(T(D)\cdot\psi(z) \right)\psi^\dag(z): \frac{dz}{z}}
\ee
where the sign $: A: $ denotes $ A- \langle 0 | A | 0 \rangle $ (the fermionic normal 
ordering of the expression quadratic in fermions),
where $T_i,\,i\in\mathbb{Z}$ is a set of numbers, where $D=z\frac{d}{dz}$
and where we imply that $T(D)\cdot z^k=T(k)z^k$.
Such diagonal operators were used in \cite{KMMM},\cite{NakatsuTakasaki},\cite{Kharchev} and also in
\cite{Takasaki-95},\cite{OS},\cite{HO-convolution},\cite{Wiegmann} in quite different contexts.
We have
\be
\langle 0|\gamma(\pb) \mathbb{T}|\lambda\rangle = s_\lambda(\pb)e^{-\sum_{i=1}^{\ell(\lambda)} T_{h_i}}
\ee

Take  $T_i=i^2$. The bosonozed version in this case is as follows:
\be
\mathbb{T}^{Bos}=e^{\sum_{a+b=c}
\left(p_a p_b\frac{\partial}{\partial p_c} + ab p_c\frac{\partial^2}{\partial_a\partial_b} \right)}
\ee
(this operator is also known as the cut-and-join operator introduced in \cite{GJ} in different context).

Introduce
\be
\mathbb{T}(A):=e^{\sum_{m<0} \frac 1m p_mA^{Fer}_m} \mathbb{T} e^{-\sum_{m<0} \frac 1m p_mA^{Fer}_m}
\ee
where $A^{Fer}_m,\,m=\pm 1,\pm 2,\pm 3,\dots $ are the fermionic counterparts of $A_m,\,m=\pm 1,\pm 2,\pm 3,\dots $
introduced by (\ref{A}), i.e.
\be\label{A-F}
A^{Fer}_m=\res_z
:  \left(\left( \frac 1z r(D) \right)^{m} \cdot \psi(z) \right) \psi^{\dag}(z) : \frac{dz}{z},
\ee
(such operators were introduced in \cite{OS}).

Then from fermionic expression for $s_\lambda(\tb)$ by the bosonization rule we have
\be
\mathbb{T}^{Bos}\cdot s_\lambda(\tb) = s_\lambda(\tb)e^{-\sum_{i=1}^{\ell(\lambda)} T_{h_i}}
\ee
where $\ell(\lambda)$ is the length of $\lambda$ (the number of nonvanishing parts of
$\lambda$ and where
\be
h_i = \lambda_i-i+n
\ee
\be
 e^{{\cal H}^{Bos} }:=
 e^{\sum_{m<0} \frac{p_m}{m}A_m({\bf r})} \mathbb{T}^{Bos} e^{-\sum_{m<0} \frac{p_m}{m}A_m({\bf r})}
\ee

From Proposition \ref{vertex-on-Schur} it follows that
\be
 e^{t{\cal H}^{Bos}} \cdot
  s_\lambda(\tb,\pb | {\bf r})= e^{t E(h_1,\dots,h_N)} s_\lambda(\tb,\pb | {\bf r})
\ee
where
\be
E(h_1,\dots,h_N) = -\sum_{i=1}^{\ell(\lambda)} T_{h_i}
\ee

\paragraph{BKP case.}
Here
\be
\mathbb{T} := e^{-\sum_{i>0} :\phi_i\phi_i^\dag: T_i}
 \equiv e^{ \res_z :\left(T(D)\cdot\phi(z) \right)\phi(-z): \frac{dz}{z}}
\ee
where $T_i,\,i>0$ is a set of numbers.
Such diagonal operators were used in \cite{Or}.
We have
\be
\langle 0|\gamma^{\rm B}(2\pb^{\rm B}) \mathbb{T} \Phi_\alpha|0\rangle = 
Q_\alpha(\pb^{\rm B})e^{-\sum_{i=1}^{\ell(\alpha)} T_{\alpha_i}}
\ee

Example: $T_i=i^3$.
The bosonization of
\be
\frac 12 \sum_{j\in\mathbb{Z}}  j^3(-1)^j:\phi_{j}\phi_{-j}: =
\res_z :\left( \left(z\frac{\partial}{\partial z}\right)^3
\cdot \phi(z)\right) \phi(-z):\frac {dz}{z}
\ee
gives \cite{MMNO}
\bea
\frac 12 \sum_{n>0} n^3 p_n\partial_n +
\frac {1}{2}\sum_{n>0} n p_n\partial_n
+4\sum_{n_1,n_2,n_3\,{\rm odd}} p_{n_1}p_{n_2}p_{n_3}  (n_1+n_2+n_3)\partial_{n_1+n_2+n_3}
\\
+3\sum_{n_1+n_2=n_3+n_4\,{\rm odd}} p_{n_1}p_{n_2} n_3n_4\partial_{n_3}\partial_{n_4} +
\sum_{n_1,n_2,n_3\,{\rm odd}} p_{n_1+n_2+n_3}\partial_{n_1}\partial_{n_2}\partial_{n_3}
\eea

Introduce
\be
\mathbb{T}(A):=e^{\sum_{m<0} \frac 1m p_mA^{Fer{\rm B}}_m} \mathbb{T} e^{-\sum_{m<0} \frac 1m p_mA^{Fer{\rm B}}_m}
\ee
where $A^{Fer{\rm B}}_m,\,m=\pm 1,\pm 2,\pm 3,\dots $ are the fermionic counterparts 
of $A^{\rm B}_m,\,m=\pm 1,\pm 2,\pm 3,\dots $
introduced by (\ref{A_BKP}), i.e.
\be\label{A-FB-B}
A^{Fer{\rm B}}_m=\frac12 \res_z
:  \left(\left( \frac 1z r(D) \right)^{m} \cdot \phi(z) \right) \phi(-z) : \frac{dz}{z},
\ee
(such operators were introduced in \cite{Or}).

Then from fermionic expression for $Q_\alpha(\pb^{\rm B})$ by the bosonization rule we have
\be
\mathbb{T}^{Bos}\cdot Q_\alpha(\tb^{\rm B}) = Q_\alpha(\tb^{\rm B})e^{-\sum_{i=1}^{\ell(\alpha)} T_{\alpha_i}}
\ee
where $\ell(\alpha)$ is the length of $\alpha$. Then
\be
 e^{{\cal H}^{Bos{\rm B}} }:=
 e^{\sum_{m<0} \frac{p_m}{m}A^{\rm B}_m({\bf r})} \mathbb{T}^{Bos} e^{-\sum_{m<0} \frac{p_m}{m}A^{\rm B}_m({\bf r})}
\ee

From Proposition \ref{vertex-on-Q-Schur} it follows that
\be
 e^{t{\cal H}^{Bos{\rm B}}} \cdot
  Q_\alpha(\tb^{\rm B},\pb^{\rm B} | {\bf r})= 
  e^{t E(\alpha_1,\dots,\alpha_N)} Q_\alpha(\tb^{\rm B},\pb^{\rm B} | {\bf r})
\ee
where
\be
E(\alpha_1,\dots,\alpha_N) = -\sum_{i=1}^{\ell(\alpha)} T_{\alpha_i}
\ee

\section{A note to \cite{MMNO} about $W_{1+\infty}$ and $BW_{1+\infty}$ and certain matrix models \label{W}}

The algebra of the differential operators on the circle of type $B$ denoted $BW_{1+\infty}$ with the central extention
was considered in  \cite{Leur1994},\cite{Leur1996}.

\paragraph{BKP case.}
The relation of the hypergeometric series
which depends on two sets $\pb=(p_1,p_3,\dots)$ and $\tilde{\pb}=(\tilde{p}_1,\tilde{p}_3,\dots)$
to the action of the exponentials of $BW_{1+\infty}$:
\be\label{hyp=BW}
\sum_{\alpha\in\DP} 2^{-\ell(\alpha)} r^{\rm B}_\alpha Q_\alpha(\pb^{\rm B})Q_\alpha(\tilde{\pb}^{\rm B})
=
e^{2\sum_{m>0,{\rm odd}} \frac{\tilde{p}_m}{m}A^{\rm B}_m({\bf r})}  \cdot 1
\ee
was derived in \cite{Or}. Here the commutative set of $A^{\rm B}_m({\bf r})\in BW_{1+\infty} $ is given by (\ref{A_BKP}) 
 and $\tilde{\pb}=(\tilde{p}_1,\tilde{p}_3,\dots)$ play the role of group parameters.
This is the direct analogue of the similar relation in the KP case \cite{OS2000},\cite{OS}.

For example, take $r\equiv 1$, in this simplest case $A^{\rm B}_m({\bf r})=p_m$ which does not contain differential operators,
and (\ref{hyp=BW}) takes the form
\be\label{hyp=BWvac}
\sum_{\alpha\in\DP} 2^{-\ell(\alpha)}  Q_\alpha(\pb^{\rm B})Q_\alpha(\tilde{\pb}^{\rm B})
=
e^{2\sum_{m>0,{\rm odd}} \frac{\tilde{p}_m p_m}{m}}
\ee

$\,$

It is interesting that the similar formula was obtained  for the different series which is applicibale for
a special matrix integral (see \cite{MMNO}):
\begin{proposition}
Suppose $\tilde{p}_m=\tr 
\Lambda^{-m}$. We have
\be
\sum_{\alpha\in\DP} 2^{-\ell(\alpha)} Q_{2\alpha}(\pb^{\rm B})Q_\alpha(\tilde{\pb}^{\rm B})
\prod_{i=1}^{\ell(\alpha)}(2\alpha_1-1)!!
=
 e^{\sum_{i>0,{\rm odd}} \frac 2r \tilde{p}_r L^{\rm B}_{-1}(r,\pb^{\rm B})}\cdot 1
\ee
\be\label{integral}
= C(\Lambda)\int e^{-\tr\left(X^2\Lambda\right)+\sum_{i>0,\,{\rm odd}}\frac 1i p_i \tr X^i }dX
\ee
\end{proposition}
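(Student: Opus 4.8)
The plan is to prove the two asserted equalities in turn: first the operator identity expressing the series as $e^{\sum_{r>0,{\rm odd}}\frac 2r\tilde p_r L^{\rm B}_{-1}(r,\pb^{\rm B})}\cdot 1$, and then its matrix-integral representation. I treat the latter as the substantive part, the former being a specialization of the hypergeometric $BW_{1+\infty}$ construction.

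For the first equality I would regard the right-hand side as a BKP hypergeometric tau function built, exactly as in (\ref{hyp=BW}) and in \cite{Or},\cite{OS2000}, by exponentiating a commuting family of $BW_{1+\infty}$ generators on the constant $1$, with the $\tilde p_r$ serving as group parameters. Since $\tilde p_r$ enters linearly and the generators $L^{\rm B}_{-1}(r,\pb^{\rm B})$ commute, one expands the exponential and collects the result in the basis $\{Q_\alpha(\tilde{\pb}^{\rm B})\}$; reading off the coefficient of $Q_\alpha(\tilde{\pb}^{\rm B})$ amounts to computing the eigenvalue-type action of the lowering operators on the $\alpha$-sector. The claim is that this action produces precisely the doubling $\alpha\mapsto 2\alpha$ in the $\pb^{\rm B}$-argument together with the normalization $\prod_i(2\alpha_i-1)!!$, reducing to (\ref{hyp=BWvac}) in the trivial case; this is the BKP analogue of the KP computation of \cite{OS},\cite{OS2000}.

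For the second equality I would expand the source term of the integrand in projective Schur functions. Writing $\tr X^i=\sum_a x_a^i$ for the eigenvalues of $X$ and applying the Cauchy-type identity (\ref{hyp=BWvac}) (after reconciling the coefficient $\tfrac 1i$ with the $\tfrac 2i$ there by the appropriate rescaling, absorbing the discrepancy into $C(\Lambda)$), the factor $e^{\sum_{i>0,{\rm odd}}\frac 1i p_i\tr X^i}$ becomes a single sum $\sum_{\mu\in\DP}2^{-\ell(\mu)}Q_\mu(\pb^{\rm B})Q_\mu(X)$, where $Q_\mu(X)$ denotes $Q_\mu$ evaluated on the power sums $\tr X^m$. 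Integrating term by term against the Gaussian weight $e^{-\tr(X^2\Lambda)}$ then reduces the entire claim to the evaluation of a single Gaussian average $\int Q_\mu(X)\,e^{-\tr(X^2\Lambda)}\,dX$, which I would compute through Wick's theorem in its Pfaffian form (\ref{Wick-Pf}).

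The main obstacle is exactly this Gaussian average, and I expect it to obey a selection rule. The Wick/Pfaffian expansion should force $\int Q_\mu(X)\,e^{-\tr(X^2\Lambda)}\,dX$ to vanish unless $\mu$ is a doubled strict partition $\mu=2\alpha$, and for $\mu=2\alpha$ to evaluate to $C(\Lambda)\,Q_\alpha(\tilde{\pb}^{\rm B})\,\prod_i(2\alpha_i-1)!!$ with $\tilde p_m=\tr\Lambda^{-m}$: the double factorials are the second-moment normalization of the Gaussian, and the factor $Q_\alpha(\tilde{\pb}^{\rm B})$ records the contraction pattern of the propagator $\Lambda^{-1}$. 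Verifying both the vanishing for non-doubled $\mu$ and the precise coefficient for $\mu=2\alpha$ is the technical heart; once established, substituting the selection rule into the term-by-term integration collapses the $\mu$-sum to the single sum over $\alpha\in\DP$ with summand $2^{-\ell(\alpha)}Q_{2\alpha}(\pb^{\rm B})Q_\alpha(\tilde{\pb}^{\rm B})\prod_i(2\alpha_i-1)!!$, matching both the series of the first step and the integral, and thereby proving the Proposition.
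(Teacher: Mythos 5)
First, a point of reference: the paper itself gives no derivation of this Proposition; it is quoted from \cite{MMNO}, so the benchmark is the derivation in that work. Your plan for the second equality does coincide in outline with it: expand the source term in projective Schur functions via the Cauchy identity (\ref{hyp=BWvac}) and reduce everything to a Gaussian-average computation with a selection rule on doubled strict partitions (``superintegrability'', cf.\ also \cite{MMq}). But as written, the Gaussian-average statement --- vanishing of $\langle Q_\mu\rangle_\Lambda$ unless $\mu=2\alpha$, and $\langle Q_{2\alpha}\rangle_\Lambda\propto Q_\alpha(\{\tr\Lambda^{-k}\})\prod_i(2\alpha_i-1)!!$ --- is left entirely unproved, and it is not a routine Wick computation: it \emph{is} the content of the identity. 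Two technical points would moreover derail the computation as you set it up. (i) The mismatch between $\sum_i\tfrac1i p_i\tr X^i$ and the kernel $2\sum_m\tfrac1m p_m\tilde p_m$ of (\ref{hyp=BWvac}) cannot be ``absorbed into $C(\Lambda)$'': it forces $\tilde p_m=\tfrac12\tr X^m$ inside the argument of $Q_\mu$, and the uniform rescaling $p_k\mapsto\tfrac12 p_k$ multiplies each monomial $\pb_\Delta$ appearing in $Q_\mu$ by $2^{-\ell(\Delta)}$; since $Q_\mu$ mixes monomials of different lengths, this is a $\mu$-dependent operation, not an overall constant. (ii) The Wick theorem you invoke, (\ref{Wick-Pf}), is the fermionic Pfaffian formula for anticommuting operators; the average over Hermitian $X$ with weight $e^{-\tr(X^2\Lambda)}$ is governed by the bosonic pairing $\langle X_{ab}X_{cd}\rangle\propto\delta_{ad}\delta_{bc}(\lambda_a+\lambda_b)^{-1}$, which is a different statement, so citing (\ref{Wick-Pf}) here is a category error even for a sketch.

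The more serious gap concerns the first equality, which you set aside as ``a specialization of the hypergeometric $BW_{1+\infty}$ construction''. It is not, and the paper says so explicitly (``the similar formula was obtained for the \emph{different} series''): the hypergeometric family (\ref{hyp=BW}) always carries the \emph{same} strict partition in both factors, $Q_\alpha(\pb^{\rm B})Q_\alpha(\tilde{\pb}^{\rm B})$, weighted by a content product $r^{\rm B}_\alpha$, and no choice of ${\bf r}$ can produce the asymmetric labels $Q_{2\alpha}(\pb^{\rm B})Q_\alpha(\tilde{\pb}^{\rm B})$ together with the factors $(2\alpha_i-1)!!$. Correspondingly, the operators $L^{\rm B}_{-1}(r,\pb^{\rm B})$ of (\ref{L^B_(-1)-based-seriesB}) are not the commuting diagonal family $A^{\rm B}_m({\bf r})$ of (\ref{A_BKP}); they are $L_{-1}$-type (lowering) elements of $BW_{1+\infty}$ and do not act by eigenvalues on the basis $\{Q_\alpha(\tilde{\pb}^{\rm B})\}$, so ``reading off the coefficient of $Q_\alpha(\tilde{\pb}^{\rm B})$'' is not an eigenvalue computation. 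Establishing $\sum_\alpha 2^{-\ell(\alpha)}Q_{2\alpha}(\pb^{\rm B})Q_\alpha(\tilde{\pb}^{\rm B})\prod_i(2\alpha_i-1)!!=e^{\sum_r\frac2r\tilde p_r L^{\rm B}_{-1}(r,\pb^{\rm B})}\cdot 1$ requires its own argument --- in \cite{MMNO} it is obtained fermionically, by realizing $L^{\rm B}_{-1}(r,\cdot)$ as the bosonization of a neutral-fermion bilinear and evaluating the resulting vacuum expectation value --- and none of that appears in your proposal.
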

(This integral is different from the so-called generalized Kontsevich matrix model 
\cite{KMMMZ-1},\cite{KMMMZ-2},\cite{GKM}) which reads as
$$
C(\Lambda)\int e^{-\tr\left(\Lambda \sum_{i>0} p_i \tr X^i \right)+\sum_{i>0}\frac 1i p_i \tr X^i }dX
$$
As for
the Kontsevich integral it is obtained by putting $p_m=\delta_{m,3}$ in (\ref{integral}) and this integral will be 
denoted $Z(\Lambda)$.)
The notations are as follows:
\begin{itemize}
 \item  $X$ is a Hermitian matrix,
\be\label{dX}
dX=\prod_{i>j}d\Im X_{ij}\prod_{i\ge j} d\Re X_{ij},
\ee
 $\Lambda$ is a matrix with positive entries, $C(\Lambda)$ is the noramlization constant
 \item $\pb^{\rm B}=( {p}_1,{p}_3,{p}_5,\dots)$ is the set of free parameters (coupling
 constants)
 \item
$\tilde \pb^{\rm B}=(\tilde {p}_1,\tilde{p}_3,\tilde{p}_5,\dots)$, where
$\tilde{p}_m=\tr\Lambda^{-m}$
 \item $L_{-1}(r,\pb^{\rm B}),\,r=1,3,5,\dots$ is the set of commuting operators
 \be\label{L^B_(-1)-based-seriesB}
L^{\rm B}_{-1}(r,\pb^{\rm B})=
\frac 12 \res_z \vdots\left( \frac{d^r (z^{-1}V^{\rm B}(z))}{dz^r}  \right)V^{\rm B}(-z)\vdots\frac{dz}{z}
\ee
which starts with the element of the BKP Virasoro algebra (which is the restriction of the KP Virasoro
symmetries \cite{Orlov1988},\cite{GO}
to the BKP case \cite{Leur1994}) which is responsible for the Galileo transformation
of the BKP hierarchy:
\be
L^{\rm B}_{-1}(1,\pb^{\rm B})=p_1^2+\sum_{m>0,{\rm odd}} m p_{m+2} \frac{\partial}{\partial p_{m}}
\ee
\end{itemize}
The restriction $p_m=\delta_{m,3}$ result in the famous Kontsevich model. The nice form of the perturbation
series of this model in terms of the projective Schur functions was found in \cite{MMq}.

\paragraph{KP case}

It should be compared with the relation in the KP case concerning well-known two- and one-matrix models:
\be\label{2MMseries}
\sum_{\lambda\in\Pa} s_{\lambda}(\pb)s_\lambda(\tilde{\pb})
\prod_{(i,j)\in\lambda} (N+j-i)
=
 e^{\sum_{i>0} \frac 1r \tilde{p}_r L_{-1}(r,\pb)}\cdot 1
\ee
\be\label{2MM}
= C\int e^{-\tr ( XY )+\sum_{i>0}\frac 1i p_i \tr X^i + \sum_{i>0}\frac 1i \tilde{p}_i \tr Y^i }dXdY
\ee
The notations are as follows:
\begin{itemize}
 \item  $X,Y$ are Hermitian matrices,
\be\label{dXdY}
dX=\prod_{i>j}d\Im X_{ij}\prod_{i\ge j} d\Re X_{ij},\quad
dY=\prod_{i>j}d\Im Y_{ij}\prod_{i\ge j} d\Re Y_{ij}
\ee
 $\Lambda$ is a matrix with positive entries, $C(\Lambda)$ is the normalization constant
 \item $\pb=( {p}_1,{p}_2,{p}_3,\dots)$,$\tilde \pb=(\tilde {p}_1,\tilde{p}_2,\tilde{p}_3,\dots)$,
 are the sets of free parameters (coupling constants)
 \item $L_{-1}(r,\pb),\,r=1,3,5,\dots$ is the set of commuting operators
 \be\label{L_(-1)-based-series}
L_{-1}(r,\pb)=
 \res_z \vdots\left( \frac{d^r V^{+}(z))}{dz^r}  \right)V^{-}(z)\vdots \frac{dz}{z}
\ee
which starts with the element of the KP Virasoro algebra \cite{Orlov1988},\cite{GO} which is
responsible for the Galileo transformation
of the KP hierarchy:
\be
L_{-1}(1,\pb)=p_1+\sum_{m>0,{\rm odd}} m p_{m+1} \frac{\partial}{\partial p_{m}}
\ee
\end{itemize}

The pertubation series for the famous one-matrix model is obtained by the restriction of (\ref{2MM})
$\tilde{p}_m=\delta_{m,2}$ (see Section 4.4 in \cite{Orlov2002Acta}, and \cite{Orlov2002Tau},\cite{OH2002}).

\section{An evaluation of the Schur and the projective Schur functions
at special sets of power sums $\pb=\pb[{{\cal R}}]$.
\label{evaluation}}

Consider a given set of parameters 
$\pb=(p_1,p_2,p_3,\dots )$. This set gives rise to the sets
\bea
\pb':&=&(p_1,0,p_2,0,p_3,\dots)\\
\pb^{\rm B}:&=&(p_1,p_3,p_5,\dots)
\eea
and to the sets
\bea \label{pb[r]}
\pb[{\cal R}]:&=&
(\underbrace{0,\dots,0,p_1}_{\cal R},\underbrace{0,\dots,p_2}_{\cal R},\underbrace{0,\dots,0,p_3}_{\cal R},
\underbrace{0,\dots,p_4}_{\cal R},
\underbrace{0,\dots,0,p_5}_{\cal R},\dots )\\
 \label{pb'[r]}
\pb'[{\cal R}]:&=&
(\underbrace{0,\dots,0,p_1}_{\cal R},\underbrace{0,\dots,0}_{\cal R},\underbrace{0,\dots,0,p_3}_{\cal R},
\underbrace{0,\dots,0}_{\cal R},
\underbrace{0,\dots,0,p_5}_{\cal R},\dots )\\
 \label{pb^B[r]}
\pb^{\rm B}[{\cal R}]:&=&
(\underbrace{0,\dots,0,p_1}_{\cal R},\underbrace{0,\dots,0,p_3}_{\cal R},
\underbrace{0,\dots,0,p_5}_{\cal R},\dots )
\eea
where in (\ref{pb'[r]}) and in (\ref{pb^B[r]}) we put ${\cal R}>1,\,{\rm odd}$.   In other words:
  \be \label{t[r]-t}
  \frac 1k p_k[{\cal R}]=\frac 1j p_j \delta_{k,j{\cal R}},\quad {\rm both}\quad {\cal R},j\,\,{\rm odd}
  \ee
where $k=1,2,3,\dots$, $j=1,3,5,\dots$ and ${\cal R}$ is a given odd number.

We will study Schur functions which depend on power sums specified by (\ref{pb[r]}) and by (\ref{pb'[r]})
and  will study projective Schur functions which depend on power sums specified by (\ref{pb^B[r]})
  
  In some problems \cite{Alex2},\cite{MMNO} it is important to evaluate symmetric functions which
  depends on such sets of power sums.

\subsection{Schur functions}

Let us use the trick applied in \cite{Orlov2002Acta}.
Using ${\cal R}$-component fermions obtained by the re-enumeration of
$\psi_m,\psi^\dag_m$: $\psi^{(j)}_i=\psi_{i{\cal R}+j}$  and $\psi^{(j)\dag}_i=\psi^\dag_{i{\cal R}+j}$,
$j=0,\dots,{\cal R}-1$
we obtain that $J^{(0)}_m+\cdots +J^{({\cal R}-1)}_m=J_m$ where the related currents are
$J^{(j)}_m=\sum_{i} \psi^{(j)}_{i}\psi^{(j)\dag}_{i+m}$, see \cite{JM}.
We shall call $j$ the {\em color} of fermion modes $\psi^{(j)}_i,\psi^{(j)\dag}_i$ and also of the modes
$\psi_{i{\cal R}+j},\psi^\dag_{i{\cal R}+j}$ and of the current $J^{(j)}_i$ for each $i\in\mathbb{Z}$.
Next, each $\Psi_{\alpha,\beta}$ converts to the product
$(-1)^\omega\Psi^{(0)}_{\lambda^{0}}\cdots \Psi^{({\cal R}-1)}_{\lambda^{{\cal R}-1}} $, where $\omega$ is the sign of
the permutation of the Fermi components. Let us introduce
\be
\gamma^{(0)}(\pb)\cdots \gamma^{({\cal R}-1)}(\pb)=\gamma(\pb[{\cal R}]),\quad
\gamma^{(j)}(\pb)=e^{\sum_{m>0} \frac{p_m}{m}J^{(j)}_m},\quad j=0,\dots, {\cal R}-1
\ee

Let us denote the ordered set of ${\cal R}$ partitions $\lambda^i,\, i=0,\dots, {\cal R}-1$ as  
$\overrightarrow{\lambda}=(\lambda^0,\dots,\lambda^{{\cal R} -1 })$ and the ordered set of ${\cal R}$ 
integers $n^{(i)},\,i=0,\dots,{\cal R}-1$
as $\overrightarrow{n}=(n^{(0)},\dots,n^{({\cal R}-1)})$, 
$|\overrightarrow{n}|:=n^{(0)}+\cdots + n^{({\cal R}-1)}$. 

As one can see the action of $\gamma(\pb[{\cal R}])$ does not mix fermions and partitions
of different color.

From the fermionic picture one can see that 
\bl\label{}
Suppose ${\cal R}$, $\lambda$ and $n$ are given. Then there exist the set
$\lambda^i=\lambda^i(\lambda,n)$ and the set $n^{(i)}=n^{(i)}(\lambda,n)$,  $i=0,\dots,{\cal R}-1$ and an 
integer $\omega(\overrightarrow{\lambda},\overrightarrow{n})$ such that
\bea\label{lambda=lambda...lambda}
|\lambda;n\rangle = 
(-1)^{\omega(\overrightarrow{\lambda},\overrightarrow{n})}|(\lambda^{0};n^{(0)}),
\dots,(\lambda^{{\cal R}-1};n^{({\cal R}-1)})\rangle
=(-1)^{\omega(\overrightarrow{\lambda},\overrightarrow{n})} |\overrightarrow{\lambda};\overrightarrow{n}  \rangle
\\
\label{lambda=lambda...lambda*}
\langle\lambda;n | = (-1)^{\omega(\overrightarrow{\lambda},\overrightarrow{n})}
\langle (\lambda^{{\cal R}-1};n^{({\cal R})}),\dots,(\lambda^{0};n^{(0)}) |
=(-1)^{\omega(\overrightarrow{\lambda},\overrightarrow{n})} \langle \overrightarrow{\lambda};\overrightarrow{n} |
\eea
and
\be\label{n=n+...+n}
n^{(0)}+\cdots +n^{({\cal R}-1)}=n
\ee
The converse is also true: for any given set of partitions $\lambda^{0},\dots,\lambda^{{\cal R}-1}$
and for any given set of numbers $n^{(0)},\dots,n^{({\cal R}-1)}$ satisfying (\ref{n=n+...+n}), there exist 
$\lambda=\lambda( \overrightarrow{\lambda},\overrightarrow{n} )$ and 
$\omega(\overrightarrow{\lambda},\overrightarrow{n})$ that satisfies (\ref{lambda=lambda...lambda}),
(\ref{lambda=lambda...lambda*}).

\el
The integer $\omega( \overrightarrow{\lambda},\overrightarrow{n} )$ is determined by how fermions of 
different colors alternate in the state $|\lambda;n\rangle$.

We have $\langle \overrightarrow{\lambda};\overrightarrow{n} | \overrightarrow{\mu};\overrightarrow{m}  \rangle
=\delta_{\overrightarrow{n},\overrightarrow{m}} \delta_{\overrightarrow{\lambda},\overrightarrow{\mu}}$.

\bl For a given $\lambda$ and a given $n$ there exist such
$\overrightarrow{n}=(n^{(0)},\dots,n^{({\cal R}-1)})$ 
where ${\cal R}=\ell(\lambda)+\ell(\lambda^{\rm tr})$ and
where $|\overrightarrow{n}|=n$
that the state $|\lambda;n\rangle$ 
can be considered as the multicharged vacuum state $|n^{(0)},\dots,n^{({\cal R}-1)}\rangle$:
\be
| \lambda;n\rangle=(-1)^{\omega(\overrightarrow{0},\overrightarrow{n})} |
\overrightarrow{0},\overrightarrow{n} \rangle
\ee
where $|\overrightarrow{n}|=n$. 

\el
For example $|(1)\rangle =\psi_0\psi^\dag_{-1}|0\rangle =\psi^{(0)}_0\psi^{\dag(1)}_{-1}|0\rangle=
|n^{(0)},n^{(1)},n^{(2)}\rangle$ 
where $n^{(0)}=1$, $n^{(1)}=-1$ and $n^{(2)}=0$.

Denote
\be
s_{\overrightarrow{\lambda}}(\pb) :=s_{\lambda^{0}}(\pb)\cdots s_{\lambda^{{\cal R}-1}}(\pb)
\ee
We get
\begin{proposition}
For a given $\overrightarrow{n}=\left( n^{(0)},\dots,n^{({\cal R}-1)} \right)$ we have
 \be
 \langle \overrightarrow{n}|   \gamma(\pb[{\cal R}])=
\sum_{\overrightarrow{\lambda} \in \Pa^{\times {\cal R}}}
 s_{\overrightarrow{\lambda}}(\pb)
 \langle \overrightarrow{\lambda};\overrightarrow{n} |
 \ee
 or, the same
 \bea
 s_{\overrightarrow{\lambda}}(\pb)  &=& 
 \langle \overrightarrow{n}|   \gamma(\pb[{\cal R}])|\overrightarrow{\lambda};\overrightarrow{n}\rangle =
 \\
  (-1)^{\omega(\overrightarrow{0},\overrightarrow{n}) +\omega(\overrightarrow{\lambda},\overrightarrow{n})}
 \langle \mu | \gamma(\pb[{\cal R}])|\lambda\rangle 
 &=& (-1)^{\omega(\overrightarrow{0},\overrightarrow{n}) +\omega(\overrightarrow{\lambda},\overrightarrow{n})}
 s_{\lambda/\mu}
 \eea
 where $\mu =\lambda(\overrightarrow{0},\overrightarrow{n})$.
\end{proposition}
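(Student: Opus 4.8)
The plan is to exploit the tensor-product structure of the ${\cal R}$-colored fermionic Fock space together with the operator factorization $\gamma(\pb[{\cal R}]) = \gamma^{(0)}(\pb)\cdots\gamma^{({\cal R}-1)}(\pb)$ established above. First I would record the single-color building block: for each color $j$ and each charge $n^{(j)}$,
$$
\langle n^{(j)}|\,\gamma^{(j)}(\pb)\,|\lambda^{(j)};n^{(j)}\rangle = s_{\lambda^{(j)}}(\pb).
$$
This is the standard Schur formula (\ref{Schur}) lifted to arbitrary charge; it holds in every charge sector because each $J^{(j)}_m$ preserves the color-$j$ charge, so $\gamma^{(j)}$ acts inside a single charge sector, and its action on the partition labels is charge-homogeneous, i.e. identical to the charge-zero case.

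Next, because $\gamma^{(j)}(\pb)$ involves only the color-$j$ modes, on $\FF = \bigotimes_{j=0}^{{\cal R}-1}\FF^{(j)}$ the operator $\gamma(\pb[{\cal R}])$ is a product of commuting factors, each acting on a single tensor leg, and it preserves every individual color charge $n^{(j)}$. Hence the VEV multiplies out over colors:
$$
\langle \overrightarrow{n}|\,\gamma(\pb[{\cal R}])\,|\overrightarrow{\lambda};\overrightarrow{n}\rangle = \prod_{j=0}^{{\cal R}-1}\langle n^{(j)}|\,\gamma^{(j)}(\pb)\,|\lambda^{(j)};n^{(j)}\rangle = \prod_{j=0}^{{\cal R}-1}s_{\lambda^{(j)}}(\pb) = s_{\overrightarrow{\lambda}}(\pb),
$$
which is the middle equality of the statement. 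Inserting the resolution of the identity $\sum_{\overrightarrow{\lambda}\in\Pa^{\times{\cal R}}}|\overrightarrow{\lambda};\overrightarrow{n}\rangle\langle\overrightarrow{\lambda};\overrightarrow{n}|$ on the charge-$\overrightarrow{n}$ sector — permissible since $\gamma(\pb[{\cal R}])$ fixes $\overrightarrow{n}$ — then yields the first displayed bra identity.

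Finally, to obtain the last two equalities I would translate back to the single-color picture via the two preceding lemmas. Taking $\mu=\lambda(\overrightarrow{0},\overrightarrow{n})$ and $\lambda=\lambda(\overrightarrow{\lambda},\overrightarrow{n})$, the lemmas give $|\overrightarrow{\lambda};\overrightarrow{n}\rangle = (-1)^{\omega(\overrightarrow{\lambda},\overrightarrow{n})}|\lambda;n\rangle$ and $\langle\overrightarrow{n}| = \langle\overrightarrow{0};\overrightarrow{n}| = (-1)^{\omega(\overrightarrow{0},\overrightarrow{n})}\langle\mu;n|$. Substituting these and invoking the charge-$n$ version of the skew-Schur formula (\ref{skew-s}), $\langle\mu;n|\gamma(\pb[{\cal R}])|\lambda;n\rangle = s_{\lambda/\mu}(\pb[{\cal R}])$, produces the stated prefactor $(-1)^{\omega(\overrightarrow{0},\overrightarrow{n})+\omega(\overrightarrow{\lambda},\overrightarrow{n})}$ times $s_{\lambda/\mu}$. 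The main obstacle is purely the bookkeeping of the reordering sign $\omega$: one must check that the single permutation sign supplied by each lemma is exactly what converts the color-sorted fermion ordering into the single-line ordering in both the bra and the ket, and that the charge constraint $n^{(0)}+\cdots+n^{({\cal R}-1)}=n$ is consistently maintained throughout. The conceptual crux that makes the whole factorization go through uniformly across color sectors with arbitrary individual charges is the charge-independence of the single-color Schur formula noted in the first step.
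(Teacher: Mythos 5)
Your proposal is correct and follows essentially the same route the paper intends: the color re-enumeration of the fermions, the factorization $\gamma(\pb[{\cal R}])=\gamma^{(0)}(\pb)\cdots\gamma^{({\cal R}-1)}(\pb)$ with each factor acting only on its own color (hence preserving each $n^{(j)}$), the charge-independent single-color Schur/skew-Schur formulas, and the two preceding lemmas to supply the signs $(-1)^{\omega(\overrightarrow{0},\overrightarrow{n})+\omega(\overrightarrow{\lambda},\overrightarrow{n})}$. In fact your write-up is more explicit than the paper's own argument, which merely asserts the conclusion after noting that $\gamma(\pb[{\cal R}])$ does not mix colors.
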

Taking 
\be\label{vect-n=0}
\overrightarrow{n}=0
\ee
we obtain
\begin{corollary}

\be\label{s=s...s}
s_\lambda(\pb[{\cal R}])= (-1)^\omega s_{\lambda^{0}}(\pb[1])\cdots s_{\lambda^{{\cal R}-1}}(\pb[1])
\ee
where $\pb[1]=\pb=(p_1,p_2,\dots)$ and
\be
s_{\lambda^{j}}(\pb)=\langle 0|\gamma^{(j)}(\pb)\Psi^{j}_{\lambda^j}|0\rangle
\ee
\end{corollary}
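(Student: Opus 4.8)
The plan is to specialize the preceding Proposition to the choice $\overrightarrow{n}=0$ fixed in (\ref{vect-n=0}) and then to evaluate explicitly the only two charge-dependent ingredients that survive. For an arbitrary $\overrightarrow{n}$ the Proposition supplies
\be
s_{\overrightarrow{\lambda}}(\pb)=(-1)^{\omega(\overrightarrow{0},\overrightarrow{n})+\omega(\overrightarrow{\lambda},\overrightarrow{n})}\,s_{\lambda/\mu}(\pb[{\cal R}]),\qquad \mu=\lambda(\overrightarrow{0},\overrightarrow{n}),
\ee
so once $\mu$ and the sign $\omega(\overrightarrow{0},\overrightarrow{n})$ are computed at $\overrightarrow{n}=0$ the result follows by a purely algebraic rearrangement.

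First I would identify the skew partition. By the reconstruction of the first Lemma of this subsection, $\mu=\lambda(\overrightarrow{0},\overrightarrow{0})$ is the global partition built from empty colour partitions carrying vanishing colour charges; each colour sector then contributes merely its own vacuum, so the assembled state is the global vacuum $|0\rangle=|\overrightarrow{0};\overrightarrow{0}\rangle$ and hence $\mu=\emptyset$. Consequently $s_{\lambda/\mu}(\pb[{\cal R}])=s_{\lambda/\emptyset}(\pb[{\cal R}])=\langle 0|\gamma(\pb[{\cal R}])|\lambda\rangle=s_{\lambda}(\pb[{\cal R}])$. For the same reason the reordering sign $\omega(\overrightarrow{0},\overrightarrow{0})$ vanishes: $\omega$ counts the transpositions of fermion modes of distinct colours needed to pass between the global ordering in (\ref{lambda=lambda...lambda}) and the colour-blocked ordering, and there are no modes to permute in the vacuum.

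Substituting these two facts and writing $\omega:=\omega(\overrightarrow{\lambda},\overrightarrow{0})$ turns the displayed identity into $s_{\overrightarrow{\lambda}}(\pb)=(-1)^{\omega}s_{\lambda}(\pb[{\cal R}])$; multiplying through by $(-1)^{\omega}$ and inserting the definition $s_{\overrightarrow{\lambda}}(\pb)=s_{\lambda^{0}}(\pb)\cdots s_{\lambda^{{\cal R}-1}}(\pb)$, together with $\pb[1]=\pb$, produces (\ref{s=s...s}). The residual factorisation $s_{\lambda^{j}}(\pb)=\langle 0|\gamma^{(j)}(\pb)\Psi^{j}_{\lambda^j}|0\rangle$ is then nothing but the basic fermionic formula (\ref{Schur}) transported into the colour-$j$ sector, the re-enumeration $\psi^{(j)}_i=\psi_{i{\cal R}+j}$ identifying that sector with an independent copy of the charged-fermion Fock space on which $\gamma^{(j)}$ acts as the ordinary $\gamma$. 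The one point genuinely requiring care is the collapse of the charge data at $\overrightarrow{n}=0$ — simultaneously $\mu=\emptyset$ and $\omega(\overrightarrow{0},\overrightarrow{0})=0$ — since it is precisely this that strips the general formula of its second sign and of its skew structure, leaving the clean product on the right-hand side.
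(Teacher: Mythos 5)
Your proof is correct and follows exactly the paper's route: the paper obtains this corollary simply by setting $\overrightarrow{n}=0$ in the preceding Proposition, which is precisely your specialization with $\mu=\emptyset$ and $\omega(\overrightarrow{0},\overrightarrow{0})=0$. Your write-up merely makes explicit the details (vanishing of the vacuum reordering sign, collapse of the skew structure, and the colour-sector fermionic formula) that the paper leaves implicit.
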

Relation (\ref{s=s...s}) was earlier used in \cite{Orlov2002Acta},\cite{OH2002},\cite{OrlovTMP-rational-soliton}.

\paragraph{The case where $\lambda$ is the double of a strict partition $\alpha$: $\lambda=D(\alpha)$.}

The double of a strict partition $\alpha=(\alpha_1,\dots,\alpha_k)$ is the partition 
$\lambda=\left(\alpha_1,\dots,\alpha_k|I(\alpha_1),\dots,I(\alpha_k)\right)$, where $I(\alpha_i):=\alpha_i-1,\,i=1,\dots,k$.

In what follows we consider the case (\ref{vect-n=0}).

Let us choose the value of the odd number ${\cal R}$.
The set of numbers $\alpha_1,\dots,\alpha_k$ can be splitted into ${\cal R}$ subsets labelled 
by $c=0,1,\dots,{\cal R}-1$: 
each of which can be written
with the help of a strict partion denoted resepectively as $\alpha^c=(\alpha^c_1,\dots,\alpha_{\ell(\alpha^c)})$
where $c=0,1,\dots, {\cal R}-1$ and $\ell(\alpha^c)$ is the length of $\alpha^c$ as follows. 
Each $\alpha_i$ can be presented as ${\cal R}\alpha^c_j+c$ for 
a certain $j$ and $c$. That to say that $c$ is the fractional part of the number $\alpha_i$ after didvision
on $\cal R$. Let us call the number $c$ the {\em color} and the partition $\alpha^c$ the {\em colored partition}.
There is the subset labeled by $c=0$ which consists of ${\cal R}\alpha^0_i$, $i=1.\dots,$ and
 $\frac12 ({\cal R}-1)$  pairs of subsets labeled by $c$ and ${\cal R}-c$ where $c< \frac12 ({\cal R}-1)$.
 The colors $c$ and ${\cal R}-c$ we call complementary. One can note that $\alpha_i $ and $I(\alpha_i)$ are related to
 the complementary colors which are different for $c\neq 0$.
One can see that if $\lambda$ is the double of $\alpha$, then the colored partitions labeled by $c$ and by 
${\cal R}-c$ coincide: $\lambda^c=\lambda^{{\cal R}-c}$.

  We get

\begin{proposition}
 \be\label{S-S}
s_{(D(\alpha)}(\pb[{\cal R}])= s_{D(\alpha^0)}(\pb)\prod_{c=1}^{\frac 12 ({\cal R}-1)}
\left( s_{(\alpha^c|\beta^c)}(\pb)\right)^2
\ee
\end{proposition}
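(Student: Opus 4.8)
The plan is to derive (\ref{S-S}) by specializing the colour factorization (\ref{s=s...s}) to $\lambda=D(\alpha)$ in the charge-zero sector (\ref{vect-n=0}) and then reading off the ${\cal R}$-quotient of the double. The starting point is the fermionic description of $D(\alpha)$: since its legs are $\beta_i=I(\alpha_i)=\alpha_i-1$, one has $-\beta_i-1=-\alpha_i$, so by (\ref{Psi-lambda}) the vector $|D(\alpha)\rangle$ is created, up to sign, by $\psi_{\alpha_1}\cdots\psi_{\alpha_k}\,\psi^\dag_{-\alpha_1}\cdots\psi^\dag_{-\alpha_k}$. Equivalently its Maya diagram carries particles exactly at the modes $\{\alpha_i\}$ and holes exactly at the modes $\{-\alpha_i\}$, a pattern which is antisymmetric under the reflection $m\mapsto -m$ interchanging particles and holes. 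This reflection symmetry is the structural input that will pair the colours.

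Splitting the modes by colour through $\psi^{(c)}_i=\psi_{i{\cal R}+c}$, a part $\alpha_i\equiv c\pmod{\cal R}$, written $\alpha_i={\cal R}\alpha^c_j+c$, places a particle of colour $c$ that contributes an arm $\alpha^c_j$, while its companion hole at $-\alpha_i$ lands in the complementary colour ${\cal R}-c$ and contributes a leg of the same value. Collecting, in each colour $c\neq 0$, the arms coming from the parts $\equiv c$ and the legs coming from the holes of the parts $\equiv {\cal R}-c$, one obtains the Frobenius partition $\lambda^c=(\alpha^c\,|\,\alpha^{{\cal R}-c})$, so that $\beta^c=\alpha^{{\cal R}-c}$. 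In colour $0$ a part $\alpha_i={\cal R}\alpha^0_j$ supplies both an arm $\alpha^0_j$ and, via its own hole, a leg $\alpha^0_j-1$, reproducing the doubling one level down and giving $\lambda^0=D(\alpha^0)$. I would also note that the charges are $n^{(0)}=0$ and $n^{(c)}=\ell(\alpha^c)-\ell(\alpha^{{\cal R}-c})$, so the standing hypothesis (\ref{vect-n=0}) is exactly the balancing $\ell(\alpha^c)=\ell(\alpha^{{\cal R}-c})$ that makes every $\lambda^c$ a genuine partition and puts us in the range of validity of (\ref{s=s...s}).

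Substituting into (\ref{s=s...s}) gives $s_{D(\alpha)}(\pb[{\cal R}])=(-1)^{\omega}\,s_{D(\alpha^0)}(\pb)\prod_{c=1}^{\frac12({\cal R}-1)} s_{\lambda^c}(\pb)\,s_{\lambda^{{\cal R}-c}}(\pb)$, and it remains to show that each complementary pair contributes a perfect square and that the sign is trivial. This square-collapse is the step I expect to be the main obstacle. The reflection $m\mapsto -m$ sends the colour-$c$ particle pattern to the colour-$({\cal R}-c)$ hole pattern, so at the level of diagrams $\lambda^{{\cal R}-c}$ is the conjugate of $\lambda^c$, and hence $s_{\lambda^{{\cal R}-c}}$ is the image of $s_{\lambda^c}$ under the involution $p_k\mapsto(-1)^{k-1}p_k$. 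The core of the proof is then to check that this involution acts trivially on the power-sum data entering the right-hand side, so that $s_{\lambda^{{\cal R}-c}}(\pb)=s_{\lambda^c}(\pb)$ and the pair collapses to $\bigl(s_{(\alpha^c|\beta^c)}(\pb)\bigr)^2$. In parallel I would evaluate $(-1)^{\omega}$ as the parity of the permutation sorting the interleaved factors of $\psi_{\alpha_1}\cdots\psi_{\alpha_k}\psi^\dag_{-\alpha_1}\cdots\psi^\dag_{-\alpha_k}$ into colour blocks, counting inversions of the residues $\alpha_i\bmod{\cal R}$ against the charge shifts $n^{(c)}$; the reflection-symmetric particle/hole pattern of the double should force these to cancel and give $(-1)^{\omega}=1$, completing the argument.
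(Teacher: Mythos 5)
Your argument is structurally the same as the paper's: specialize the colour factorization (\ref{s=s...s}) at $\overrightarrow{n}=0$, read off the ${\cal R}$-quotient of the double from the particle/hole picture, and pair complementary colours. Your identification of the quotient is correct --- $\lambda^0=D(\alpha^0)$, $\lambda^c=(\alpha^c|\alpha^{{\cal R}-c})$, with the zero-charge hypothesis being exactly $\ell(\alpha^c)=\ell(\alpha^{{\cal R}-c})$ --- and on one point you are more careful than the paper itself: the complementary components are conjugate, $\lambda^{{\cal R}-c}=(\lambda^c)^{\rm tr}$, whereas the text asserts $\lambda^c=\lambda^{{\cal R}-c}$, which is false in general.

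However, the step you defer as ``the core of the proof'' --- checking that the involution $p_k\mapsto(-1)^{k-1}p_k$ acts trivially, so that $s_{(\lambda^c)^{\rm tr}}(\pb)=s_{\lambda^c}(\pb)$ --- is not a checkable detail: for the variables literally written in (\ref{S-S}) it is false, and so is the proposition read literally. Take ${\cal R}=3$, $\alpha=(4,2)$, so that $D(\alpha)=(5,4,2,1)$ has zero colour charges and $3$-quotient $\left(\emptyset,(2),(1,1)\right)$. Then the colour factorization gives
\be
s_{(5,4,2,1)}(\pb[3])=\pm\, s_{(2)}(\pb)\,s_{(1,1)}(\pb)=\pm\tfrac14\left(p_1^4-p_2^2\right),
\ee
which is not $\pm\left(s_{(2)}(\pb)\right)^2=\pm\tfrac14\left(p_1^2+p_2\right)^2$ for any sign, because $\pb[{\cal R}]$ of (\ref{pb[r]}) feeds the generic even power sums $p_2,p_4,\dots$ (through positions $2{\cal R},4{\cal R},\dots$) into the quotient Schur functions. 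The pair $s_{\lambda^c}s_{(\lambda^c)^{\rm tr}}$ collapses to a square precisely when the even power sums vanish, i.e.\ when the right-hand side is evaluated at $\pb'=(p_1,0,p_3,0,\dots)$ and the left-hand side at $\pb'[{\cal R}]$ of (\ref{pb'[r]}); this is also the only reading consistent with the paper's remark that squaring (\ref{Q-S}), whose factors are $s_{(\alpha^c|\beta^c)}(2\pb')$, reproduces (\ref{S-S}). So your proof closes (and in fact repairs the paper's) once you replace the hoped-for triviality of the involution by the explicit hypothesis that only odd power sums are switched on --- the standing BKP context of the section, but it must be stated, since without it the identity fails. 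The remaining loose end, $(-1)^\omega=+1$, is genuinely unproven in your sketch (``should force these to cancel''), though the paper offers nothing there either; in the example above the sign does come out $+1$.
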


The projective Schur function $Q_\alpha(\pb[{\cal R}])$ is a square root of the 
Schur function $s_{(D(\alpha)}(\pb[{\cal R}])$.
Following \cite{MMNO} we derive it below. 

\subsection{Projective Schur functions \cite{MMNO}}

This piece is a result of the discussion of the interesting explicit conjecture of Alexandrov \cite{Alex}
(before it also suggested by Mironov and Morozov in an implicit form \cite{MM-Dec10} to the author in
connection with the work \cite{MMq} about the Kontsevich matrix model).
The conjecture is that the Kontsevich matrix model can be written as an example of the BKP hypergeometric
tau function and in this way is a certain generating function  of the 
so-called spin Hurwitz numbers \cite{MMN2019},\cite{Lee2018}.
We study this problem from the fermionic point of view in \cite{MMNO}. Our result was presented in
\cite{MMNO} and I copy it below to compare this result with the previous section.

We want to evaluate
\be\label{Q_R}
Q_{R}(\pb^{\rm B}[{\cal R}])=2^{\tfrac L2}\langle 0|\gamma(2\pb^{\rm B}[{\cal R}])
\phi_{R_1}\cdots \phi_{R_L} |0\rangle
\ee
where we suppose  $L$ to be even and where $\pb^{\rm B}[{\cal R}]$ is defined in (\ref{pb^B[r]}),
and
\be\label{Q_{NR}}
Q_{NR}(\pb^{\rm B}[{\cal R}])=2^{\tfrac L2}\langle 0|\gamma(2\pb^{\rm B}[{\cal R}])
\phi_{NR_1}\cdots \phi_{NR_L} |0\rangle
\ee

Steps I and II below deal with (\ref{Q_R}). Steps III and IV consider the link between (\ref{Q_R}) and
(\ref{Q_{NR}}).

\paragraph{Step 0}
 A given set $\pb^{\rm B}[{\cal R}]$, we use the notation $\pb^{\rm B}=(p_1,p_3,\dots)$ for 
 the related set by (\ref{pb^B[r]}).
 We also introduce $\pb':=(p_1,0,p_3,0,\dots)$.
 Using the canonical anticommutation relation
 \be
 [\phi_i,\phi_j]_+=(-1)^j \delta_{i+j,0}
 \ee
 we have
 \be
 [J_m , \phi_i] = \phi_{i-m},\quad m\,{\rm odd}
 \ee
 and
 \be\label{dynamics}
 \phi_j(\pb^{\rm B}[{\cal R}]):=e^{\sum_{m>0,{\rm odd}} \frac{2}{n{\cal R}}J_{m{\cal R}}p^{\rm B}_{m{\cal R}}[{\cal R}]}\phi_j
 e^{-\sum_{m>0,{\rm odd}} \frac{2}{n{\cal R}} J_{m{\cal R}}p_{m{\cal R}}[{\cal R}]}=
 \sum_{m\ge 0}\phi_{j-m{\cal R}}h_m(2\pb')
 \ee
 where $h_i$ are complete symmetric functions restricted on the set of odd labeled times:
\be\label{h}
e^{\sum_{n>0, {\rm odd}} \frac 2n p_n z^n} =\sum_{n\ge 0} z^n h_n(2\pb'), \quad h_n(2\pb') := s_{(n)}(2\pb'),
\ee
Let us note that the exponential (\ref{h}) is also the generating function for the elementary projective
Schur functions $q_n=Q_{n,0}$, thus
\be\label{h=q}
h_n(2\pb')=Q_{(n,0)}(\pb^{\rm B})
\ee

\paragraph{Step I}

For the evaluation of VEV in the r h s of (\ref{Q_R}) we shall use the Wick theorem. To do it
we need to consider the pairwise VEV:
 \be\label{pair-corr}
\langle 0|\gamma(2\pb^{\rm B}[{\cal R}]) \phi_{\beta}\phi_{\alpha} |0\rangle =
\langle 0| \phi_{\beta}(\pb^{\rm B}[{\cal R}])\phi_{\alpha}(\pb^{\rm B}[{\cal R}]) |0\rangle
\ee
\be
 =\langle 0|  \sum_{m\ge 0}\phi_{\beta-m{\cal R}}h_m(2\pb') \sum_{n\ge 0}\phi_{\alpha-n{\cal R}}h_n(2\pb')  |0\rangle
 \ee
 where according to the canonical pairing
\be\label{i-icorr}
\langle 0| \phi_{i}\phi_j |0\rangle =(-1)^i\delta_{i+j,0},\, j>0,
\ee
 contribute only the terms where
 \be\label{condition}
 \alpha - n{\cal R} + \beta -m{\cal R} =0
 \ee

As we see condition (\ref{condition}) implies that
the parts of $R$ consist of 3 groups:
 \begin{itemize}
  \item parts ${\cal R}\alpha^0$ that are divisible by ${\cal R}$
  \item parts presented as ${\cal R}\alpha^c+c$ where $c=1,\dots, \frac 12 ({\cal R}-1)$
  \item parts presented as ${\cal R}\beta^c+{\cal R}-c$ where $c=1,\dots, \frac 12 ({\cal R}-1)$
 \end{itemize}
 and nothing else. For a given $c$ we call the set of parts with such remainder of division by ${\cal R}$
 and the set of parts with remainder ${\cal R}-c$ {\it complementary}.

 Up to (\ref{Q-S}) $\alpha^c,\beta^c$ will denote numbers, not partitions.

 Consider a given $c$.
 Among the set of $\{R_i\}$ choose a pair of fermion with complementary labels:
 \be\label{alpha-beta}
 \alpha={\cal R}\alpha^c +c, \quad \beta= {\cal R}\beta^c+{\cal R}-c ={\cal R}(\beta^c+1) -c
 \ee
 \br
 We can say that there exists a part $R_x$ such that $\alpha^c=[R_x {\cal R}^{-1}]$ is the integer part of
 $R_x{\cal R}^{-1}$ and whose fractial part is $c=\{R_x{\cal R}^{-1}\}$, and there exists another part $R_y$ whose
 $[R_y{\cal R}^{-1}]$ is $\beta+1$ and whose $\{R_y{\cal R}^{-1}\}$ is $-c$. In this case
 $\langle 0|\gamma(2\pb^{\rm B}[{\cal R}])\phi_{R_x}\phi_{R_y}|0\rangle \neq 0$.
 \er
  We have
 \be\label{a+b}
 \alpha+\beta={\cal R} \alpha^c+{\cal R}\left( \beta^c +1\right)
 \ee
 and together with (\ref{condition}) we have the sum of nonvanishing terms $h_m(2\pb')h_n(2\pb')$ for which
 \be\label{n+m}
 n+m = \alpha^c + \beta^c +1
 \ee

 We obtain that (\ref{pair-corr}) is equal to
 \be
 \langle 0| \phi_{-c} \phi_{c} |0\rangle h_{\beta^c+1}(2\pb')h_{\alpha^c }(2\pb')+ \cdots +
 \langle 0| \phi_{-\alpha} \phi_{\alpha} |0\rangle h_{\alpha^c + \beta^c +1}(2\pb')h_{0}(2\pb')
 \ee
 \be\label{sum-hh}
 = (-1)^c h_{\beta^c+1}(2\pb')h_{\alpha^c}(2\pb')+ \cdots +(-1)^{\alpha +1} h_{\alpha^c + \beta^c }(2\pb')h_{1}(2\pb')
 + (-1)^\alpha h_{\alpha^c + \beta^c +1}(2\pb')h_{0}(2\pb')
 \ee
 For instance, for ${\cal R}=3$ (so we have the single $c=1 =\frac 12 ({\cal R}-1)$), take $\alpha^1=\alpha^2=0$,
 that yields
 $\alpha = 0+1=1$, $\beta=0+2=2$. We obtain
 \be
 \langle 0| \phi_2(\pb^{\rm B}[3])\phi_1(\pb^{\rm B}[3]) |0\rangle =\langle 0| \phi_{-1}\phi_1 |0\rangle h_1(2\pb')h_0(2\pb')
 =(-1)^1  h_1(2\pb')h_0(2\pb') = -2 p_1
 \ee

 Now notice that thanks to the fact that ${\cal R},j$ are odd, we have from (\ref{h})
 \be\label{h(odd-times)}
 h_i(2\pb')=(-1)^i h_i(-2\pb')
 \ee
therefore (\ref{sum-hh}) is written as
\be\label{sum-hh2}
  (-1)^{c+\alpha^c} h_{\beta^c+1}(2\pb')h_{\alpha^c}(-2\pb')+ \cdots +
  (-1)^{\alpha} h_{\alpha^c + \beta^c }(2\pb')h_{1}(-2\pb')
 + (-1)^\alpha h_{\alpha^c + \beta^c +1}(2\pb')h_{0}(-2\pb')
 \ee
 (the parity of $\alpha={\cal R}\alpha^c +c$ is equal to the parity of $c+\alpha^c$ because ${\cal R}$ is odd).

 We compare (\ref{sum-hh2}) with one-hook Schur function \cite{Mac}
\be
s_{(j|k)}(\pb) =
(-1)^{k} \sum_{i=0}^{k} h_{j+i+1}(\pb) h_{k-i}(-\pb).
\label{s_hook_bilinear}
\ee
where we take $\pb=2\pb'$.
And obtain that
\be
\langle 0|\gamma(2\pb^{\rm B}[{\cal R}]) \phi_{{\cal R}\beta^c+c^*}\phi_{{\cal R}\alpha^c+c} |0\rangle =
(-1)^{\beta^c+{\cal R}\alpha^c+c} s_{(\alpha^c|\beta^c)}(2\pb')=
(-1)^{\beta^c+\alpha^c+c} s_{(\alpha^c|\beta^c)}(2\pb')
\ee

Next, in the similar way consider
 parts didivisble by ${\cal R}$, the collection of such parts we denote ${\cal R}\mu$,
 $\mu=(\mu_1,\dots,\mu_\ell)$, $\mu_1>\dots>\mu_k\ge 0$, $k$ even.

 For a pairwise VEV using (\ref{dynamics}) in the quite similar way as before using (\ref{h=q})
 we get
 \be
\langle 0|\phi_{{\cal R}\mu}(\pb^{\rm B}[{\cal R}])\phi_{{\cal R}\nu}([\pb^{\rm B}[{\cal R}]]) |0\rangle =
\ee
\be
=\langle 0|  \sum_{m\ge 0}\phi_{\mu-m{\cal R}}h_m(2\pb') \sum_{n\ge 0}\phi_{\nu-n{\cal R}}h_n(2\pb')  |0\rangle
\ee
\be
=2^{-1}Q_{(\mu,\nu)}(\pb^{\rm B})
 \ee
 (here $\mu,\nu$ are numbers). The pfaffian of the Wick theorem yields the projective Schur function labeled by a
 partition $\mu$.

Now we apply the Wick theorem to evaluate
Notice that the number of parts of the partition $R$
Below $\mu=(\mu_1,\dots,\mu_{\kappa^0}),\,\alpha^c =\left(\alpha^c_1,\dots,\alpha^c_{\kappa^c}\right),
\,\beta^c=\left(\beta^c_1,\dots,\beta^c_{\kappa^c}\right)$
denote (strict) partitions.

As we see after re-numbering neutral fermions from complementary groups are quite similar to the charged fermions,
while neutral fermions of the group with $c=0$ up to the re-numbering of the Fourier modes stay
neutral inside VEV (\ref{Q_R}).

Finally applying the Wick theorem to evaluate VEV with three groups of fermions we obtain
\be\label{Q-S}
Q_{R}(\pb^{\rm B}[{\cal R}])=(-1)^\omega 
2^{-\tfrac12{{\bar\ell}(\mu)}}Q_{\mu}(\pb^{\rm B})\prod_{c=1}^{\tfrac 12 ({\cal R}-1)}
s_{(\alpha^c|\beta^c)}(2\pb')
\prod_{i=1}^{\kappa^c}(-1)^{\alpha_i^c+\beta_i^c+c}
\ee
Here $\omega$ depends on the order of embedded parts which belong to one of three groups.
Basically it is not important for our purposes because we will be interested in the rescaling
of the lengths of parts $R_i \to NR_i$ which keeps the order and we get the same $\omega$. However we get the
sign factor due to the fact
that sometimes we get $c \leftrightarrow {\cal R}-c$. We will return to this problem below.

Note that the square of the both sides of (\ref{Q-S}) results in (\ref{S-S}).

\paragraph{Step II}
In what follows we consider the case where $p_k= {\cal R}^{-1}\delta_{k,1}$, 
that is $p^{\rm B}_k[{\cal R}]=\delta_{k,{\cal R}}$.

In this case
\be
Q_{\mu}\{{\cal R}^{-1}\delta_{k,1} \}=\frac{(2{\cal R}^{-1})^{|\mu|}}{\prod_{i=1}^{\kappa^0}\mu_i!}\prod_{i<j}\frac{\mu_i-\mu_j}{\mu_i+\mu_j}
\ee
\be
s_{(\alpha|\beta)}\{2{\cal R}^{-1}\delta_{k,1} \}=(2{\cal R}^{-1})^{|(\alpha|\beta)|}\frac{1}{\prod_{i=1}^{\cal R}\alpha_i!\beta_i!}
\frac{\prod_{i<j}(\alpha_i-\alpha_j)(\beta_i-\beta_j)}{\prod_{i,j}(\alpha_i+\beta_j+1)}
\ee

For such $\tb^{\rm B}[{\cal R}]$ the right hand side of (\ref{Q-S}) is explicit.
The point is the behavior under the rescaling.

\paragraph{Step III. Rescaling of $R$}

We had

The parts of $R$ consist of 3 groups:
 \begin{itemize}
  \item parts ${\cal R}\alpha^0$ that are divisible by ${\cal R}$
  \item parts presented as ${\cal R}\alpha^c+c$ where $c=1,\dots, \frac 12 ({\cal R}-1)$
  \item parts presented as ${\cal R}(\beta^c+1)-c$ where $c=1,\dots, \frac 12 ({\cal R}-1)$
 \end{itemize}

 Now these parts are:
 \begin{itemize}
  \item parts $N {\cal R}\alpha^0$ that are still divisible by ${\cal R}$
  \item parts presented as $N{\cal R}\alpha^c+Nc$ where $Nc=N,\dots, \frac 12 N({\cal R}-1)$
  \item parts presented as $N{\cal R}(\beta^c+1)-Nc$ where $Nc=N,\dots, \frac 12 N({\cal R}-1)$
 \end{itemize}
For each $c$ there exists $q_c$ and $c_N$
\be\label{c-c_N}
Nc= {\cal R} q_c+c_N,\quad c_N < {\cal R}
\ee
Consider the case
\be\label{A}
N<{\cal R}
\ee
\be\label{B}
N\,{\rm and}\,{\cal R}\,{\rm have\,no\,common\,divisor}
\ee
Equation (\ref{c-c_N}) maps each $c$ to certain $c_N$.
Let us show that if $ c $ and $ c'$ are different, then $ c_N (c) $ cannot coincide with
either ${\cal R}-c_N(')$ or $c_N(c')$ .
Indeed,  suppose it is not correct and we have the same $c_N$ for two different $c$ ($c$ and $c'$):
$$
N(c+c')={\cal R}(q_c+q'_c+{\cal R})\,\Rightarrow\,c+c'=\frac{{\cal R}}{N}(q_c+q_c'+1)
$$
which is impossible because of (\ref{A})-(\ref{B}) and because both $c$ and $c'$ are less than $\frac 12 {\cal R}$.
Similarly,
$$
N(c-c')={\cal R}(q_c-q_c')\,\Rightarrow\,c-c'=\frac{{\cal R}}{N}(q_c-q_c')
$$
which is impossible.

Notice that complementary parts defined by $c$ occur to be to complementary again defined by some $c_N$.

\be
(\alpha,\beta)=\left({\cal R}\alpha^c_i +c\,,\,{\cal R}(\beta^c_j+1) -c\right)\,\to\, (N\alpha,N\beta)=
\left(N{\cal R}\alpha^c_i +Nc\,,\,N{\cal R}(\beta^c_j+1) -Nc\right)=
\ee
\be
\left({\cal R}(N\alpha_i^{c}+q_c) +c_N\,,\,{\cal R}(N\beta_j^{c}-q_c) -c_N\right)=
\left({\cal R}{\tilde\alpha}_i^{c_N} +c_N\,,\,{\cal R}({\tilde\beta}_j^{c_N}+1) -c_N\right)=
\ee
where
\be
{\tilde\alpha}^{c_N}_i=N\alpha^c_i +q_c,\quad {\tilde\beta}^{c_N}_j=N\beta^c_j -q_c -1
\ee

Finally we obtain
\be
N\alpha={\cal R}(N\alpha_i +q_i)+c_N,\quad N\beta={\cal R}(N(\beta_i+1) - q_i) -c_N
\ee
 \br
 We can say that there exists a part $R_x$ such that ${\tilde\alpha}^{c_N}=[NR_x {\cal R}^{-1}]$ is the integer part of
 $NR_x{\cal R}^{-1}$ and whose fractial part is $c_N=\{NR_x{\cal R}^{-1}\}$, and there exists another part $R_y$ whose
 $[NR_y{\cal R}^{-1}]$ is $\tilde{\beta}^{c_N}+1$ and whose $\{NR_y{\cal R}^{-1}\}$ is $-c_N$. In this case
 $\langle 0|\gamma(2\pb[{\cal R}])\phi_{NR_x}\phi_{NR_y}|0\rangle \neq 0$.
 \er

Therefore we get
\be
s_{(N\alpha^c|N\beta^c)}\{2\delta_{k,1} \}=
\frac{  (2{\cal R}^{-1})^{|(N\alpha^c|N\beta^c)|} N^{-\kappa^c} }{\prod_{i=1}^{\kappa^c} (N\alpha^c_i+q_i)!(N\beta^c_i -q_i-1) !}
\frac{\prod_{i<j}(\alpha^c_i-\alpha^c_j)(\beta^c_i-\beta^c_j)}{\prod_{i.j=1}^{\kappa^c}(\alpha^c_i+\beta^c_j+1)}
\ee
In a similar was we get
\be
Q_{N\mu}\{\delta_{k,1} \}=\frac{(2{\cal R}^{-1})^{N|\mu|}}{\prod_{i=1}^{\kappa^0} (N\mu_i)!}
\prod_{i<j}\frac{\mu_i-\mu_j}{\mu_i+\mu_j}
\ee

\paragraph{Step IV}

As the result we get
\be
Q_{NR}\{\delta_{k,r}\}=A(R) Q_{R}\{\delta_{k,r}\}
\ee
where
\be
A(R)=
\ee
\be
(-1)^{g}
(2{\cal R}^{-1})^{(N-1)|\mu|+\sum_{c\in {\cal C}} 
\left(|({\tilde\alpha}^{c_N}|{\tilde\beta}^{c_N})| - |(\alpha^c|\beta^c)|\right)}
\left( \prod_{i=1}^{\ell(\mu)}\frac{\mu_i!} {(N\mu_i)!}\right)
\prod_{c\in {\cal C}}N^{-\kappa^c}\prod_{i=1}^{\kappa^c}
\frac{\alpha^c_i!\beta^c_i !}{(N\alpha^c_i+q_c)!(N\beta^c_i-q_c-1)!   }
\ee
\be
=(-1)^g(2{\cal R}^{-1})^w N^{-\tfrac 12 v}\prod_{i=1}^{\ell(R)}\frac{[R_i{\cal R}^{-1}]!}{[NR_i{\cal R}^{-1}]!}
\ee
where $v$ is the number of parts of $R$ that are not divisible by ${\cal R}$, ${\cal C}$ is the set of all numbers $c$
(this set depends on the partition $R$ and the choice of the number ${\cal R}$ and we recall $c\neq 0$), and
where
\be
w=(N-1)|\mu|+\sum_{c\in {\cal C}} \left(|({\tilde\alpha}^{c_N}|{\tilde\beta}^{c_N})| - |(\alpha^c|\beta^c)|\right)
=(N-1)|R|
\ee
$g$ originates from the sign factors in the r h s of (\ref{Q-S}) and is equal to
\be
g=
\sum_{c\in {\cal C}} \sum_{i=1}^{\kappa^c}\left((\alpha^c_i+\beta^c_i+c) -
(\tilde{\alpha}^{c_N(c)}_i+{\tilde\beta}^{c_N(c)}_i+c_N(c))  \right)=(N-1)|R|+\sum_{c\in {\cal C}} \kappa^c(c-c_N(c))
\ee
where $c_N(c)$ is given by (\ref{c-c_N}), and $\kappa^c$ is the number of parts of the partition $R$ whose
remain part after division on ${\cal R}$ is equal to a given $c$. If $c=\{R_i{\cal R}^{-1}\}$
then $c_N(c)=\{NR_i{\cal R}^{-1}\}$.

All these data are contained in a given partition $R$ and the values of $N$ and ${\cal R}$.

\section{A remark on Mironov-Morozov-Natanzon (MMN) cut-and-join relations}

There is a beautifull formula 
\be\label{MM3'}
{\cal W}_{\Delta}(\pb,\partial_{\pb})\cdot s_\mu(\pb)=\tilde{\varphi}_\mu(\Delta)s_\mu(\pb),
\ee
where
\be
\tilde{\varphi}_\mu(\Delta) =\begin{cases}
                              \varphi_\lambda(\Delta),\,|\Delta|=|\lambda|\\
                              \frac{()!}{()!}\varphi_\lambda(\Delta),\,|\Delta|<|\lambda|\\
                              0,\,|\Delta|>|\lambda|
                             \end{cases}
\ee
suggested by Mironov, Morozov and Natanzon \cite{MMN} in the context of  the study of Hurwitz numbers, which 
describes the merging of pairs of branch points in the covering problem. 
Here the dependence of ${\cal W}_{\Delta}$ on $\pb,\partial_{\pb}$ means that ${\cal W}_{\Delta}$ is a
differential operator in the infinitely many power sum variables
$\pb=(p_1,p_2,\dots)$ with coefficients which explicitly depend on $\pb$. It was written
in a normal ordered form that means that in each monomial term all derivates are placed to the right and
the coefficients which depend on $p_1,p_2,\dots$ are placed to the left.
Here $\mu=(\mu_1,\mu_2,\dots)$ and $\Delta=(\Delta_1,\dots,\Delta_\ell )$  are Young diagrams 
($\Delta$ is  the ramification profile of one of branch points,
$s_\mu$ is the Schur function. 
We consider the case where $|\mu|=|\Delta|$), the case $|\Delta|<|\mu|$ is important and is related to the so-called
completed cycles and we do not study it.
The way to construct such operators was suggested in \cite{MMN} and
in more details in the recent preprint \cite{MMZh2021}.

\br
It is more suitable via the characteristic map relation and 
via the Frobenius formula for Hurwitz numbers \cite{ZL}
to rewrite (\ref{MM3'}) in form
\be
{\cal W}_{\Delta}(\pb,\partial_{\pb})\cdot \frac{\pb_{\mu}}{z_\mu}
=\begin{cases}
\sum_{\nu\atop |\nu|=|\mu|} H(\Delta,\mu,\nu)\pb_\nu,\quad |\Delta|=|\mu| \\
\sum_{\nu\atop |\nu|=|\mu|} \tilde{H}(\Delta\cup 1^k,\mu,\nu)\pb_\nu,\quad |\Delta|=|\mu|+k,\,k>0
\end{cases}
\ee
\be
\tilde{H}(\Delta\cup 1^k,\mu,\nu)=\sum_{\lambda\in\Pa\atop |\lambda|=|\mu|+k} 
s_\lambda(\pb_1)\tilde{\varphi}_\lambda(\Delta){\varphi}_\lambda(\mu){\varphi}_\lambda(\nu)
\ee
From
 the geometrical description the case $|\Delta|=|\mu|$ describes the merging of branch points with profiles $\Delta$ and $\mu$.
Here $H(\Delta,\mu,\nu)$ is 3-point Hurwitz number, details see in \cite{MMN},\cite{MMZh2021} and aso in 
\cite{NODubrovin}.
This imply that MMN cut-and-join operators can be written as follows:
\be
{\cal W}_{\Delta}(\pb,\partial_{\pb})=\sum_{k=0}^\infty 
\sum_{\mu,\nu\atop |\mu|=|\nu|=|\Delta|+k}\tilde{ H}(\Delta\cup (1^k),\mu,\nu)\pb_\mu
\left(\partial_\pb \right)_\nu
\ee
where $\left(\partial_\pb \right)_\nu=\frac{\partial}{\partial p_{\nu_1}}\frac{\partial}{\partial p_{\nu_2}}\cdots$
\er

In the work (\cite{MM3}), it was noted that if they are written in the so-called Miwa variables, that is,
in terms of the eigenvalues of the matrix $X$ such that 
$ p_m = \tr \left( X^m \right) $, then
the generalized cut-and-join formula (\ref {MM3'}) is written very compactly and beautifully:
\be\label{MM3}
{\cal W}^{\Delta}\cdot s_\mu(X)=\varphi_\mu(\Delta)s_\mu(X)
\ee
where the Schur function is treated as the homogenious polynomial in the matrix entries of $X$ and where
\be\label{W-Delta}
{\cal W}^{\Delta}={\cal W}^{\Delta}(D)=
\frac{1}{z_\Delta}\vdots\tr\left( D^{\Delta_1} \right)\cdots \tr\left( D^{\Delta_\ell}\right)\vdots,
\ee
 $D$ is the matrix whose entrences are the following diffential operators:
\be\label{D}
D_{a,b}=\sum_{c=1}^N X_{a,c}\frac{\partial}{\partial X_{b,c}}=:\left(X\partial_X\right)_{a,b}
\ee
(As we see $ \tr D $ is the Euler operator which acts on homogenious polynomials of weight $k$ as the multiplication
by $k$. Then $\left( \tr D \right)^q$  acts as the multiplication by $k(k-1)\cdots (k-q+1)$.
This property can be used to study the case of the completed cycles \cite{MMN} where $|\Delta|<|\mu|$.)
The factor $z_\Delta$ is given by $\prod_i m_i!i^{m_i}$ where $m_i$ is the number of parts equal to $i$
in $\Delta$, see \cite{Mac}. The dots $\vdots A\vdots$ denotes the (bosonic) normal ordering of a differential operator
$A$ which means that the derivatives are moved to the right and does not act on the coefficients of the 
differential operator $A$.
It means that each each $X_{a,c}$ in (\ref{D}) can be replaced by the same entry of the any given matrix
$X'$:
\be\label{replacement1}
D_{a,b}=(X\partial_X)_{a,b}\,\to\,(X'\partial_X)_{a,b}:=\sum_{c=1}^N X'_{a,c}\frac{\partial}{\partial X_{b,c}}
\ee
(thus, $\tr D$ is a vector fiels which acts on the entries of $X$) and
\be\label{replacement2}
{\cal W}^{\Delta}(X\partial_X)\,\to\,{\cal W}^{\Delta}(X'\partial_X)
\ee

By the replacement $X\to XC$  we immediately obtain the generalized Mironov-Morozov-Natanzon relation (GMMN)
for the case $|\Delta|=|\mu|$:
\be\label{replacement3}
{\cal W}^{\Delta}(X'\partial_X)\cdot s_\mu(XC)=\varphi_\mu(\Delta)s_\mu(X'C)
\ee
with any $N\times N$ matrix $X'$ and with any $N\times N$ matrix $C$ that is independent of $X$
(the both matrices can be degenerate).
To get (\ref{MM3'}) we chose $X'=X$ and $C=\mathbb{I}_N$.

A repeated application of this formula results in
\be\label{multiGMMN}
\left(
\prod_{i=1}^n
{\cal W}^{\Delta^i}(X'_i\partial_{X_i})\right)\cdot s_\mu(X_1C_1 \cdots X_nC_n)=
\left(\prod_{i=1}^n
\varphi_\mu(\Delta^i)\right) s_\mu(X'_1C_1 \cdots X'_nC_n)
\ee
where $X'_1,\dots, X'_n$ are any $N\times N$ matrices which can depend on $X_1,\dots,X_n$
and where $C_1,\dots,C_n$ are constant matrices (both sets can contain degenerate matrices).
The case where $X'_i=X_i$ describes an eigenproblem.
In \cite{NO2020TMP} it was related to a matrix model based on the sunflower graph with $n$ petels.

As it was pointed out by G.Olshanski relations of type (\ref{MM3}) appear in a different context
 in the works of Perelomov and Popov
\cite {PerelomovPopov1}, \cite {PerelomovPopov2}, \cite {PerelomovPopov3}
and describe the actions of the Casimir operators in the representaion $ \lambda $, see
also \cite{Zhelobenko}, Section 9.

\br
It was shown in \cite{NO2020TMP} that
different graphs result in different generalizations of MMN relation (\ref{MM3}). For instance the polygon
with $n$ vertices drawn in the Riemann sphere yields the relation  
\be\label{multiGMMN-different}
{\cal W}^{\Delta}\left(X'_1\partial_{X_1}\cdots \partial_{X_n}\right)\cdot s_\mu(X_1C_1 \cdots X_nC_n)=
\varphi_\mu(\Delta) s_\mu(X'C_1)\prod_{i=2}^{n} \frac{s_\mu(C_i)}{s_\mu(\pb_1)}
\ee
where $\pb_1=(1,0,0,\dots)$ and
 where $W^\Delta$ is given by (\ref{W-Delta}) however, now, $D$ of (\ref{D}) is replaced by
\be
D_{a,b}= \left(X'\partial_{X_1}\cdots \partial_{X_n}\right)_{a,b}:=\sum_{c^{(1)},\dots,c^{(n-1)}} 
X'_{a,c^{(1)}}\frac{\partial}{\partial (X_1)_{c^{(2)},c^{(1)}}}
\frac{\partial}{\partial (X_2)_{c^{(3)},c^{(2)}}}
\cdots \frac{\partial}{\partial (X_n)_{b,c^{(n-1)}}}
\ee
In case $X'C_1=X_2 C_2 \cdots X_nC_nX_1C_1$ we get the eigenproblem different from (\ref{multiGMMN}) with $X'_i=X_i$.
Formula (\ref{MM3}) is related to the 1-gone case (one vertex, one edge, two faces).
\er

The generalization similar to (\ref{replacement1}),(\ref{replacement2}),(\ref{replacement3}) one can do 
in (\ref{MM3'}) after the replacement of $p_m$ from the set of $\pb=(p_1,p_2,\dots)$
by any given $p_m'$  in each coefficient of the normal ordered expression defining the operator
${\cal W}_{\Delta}(\pb,\partial_{\pb})$:
${\cal W}_{\Delta}(\pb,\partial_{\pb})\,\to\,{\cal W}_{\Delta}(\pb',\partial_{\pb})$ and obtain
\be\label{MM3(p')}
{\cal W}_{\Delta}(\pb',\partial_{\pb})\cdot s_\mu(\pb)=\varphi_\mu(\Delta)s_\mu(\pb'),
\ee
where
\be
{\cal W}_{\Delta}(\pb',\partial_{\pb^{\rm B}})=\sum_{k=0}^\infty\sum_{\mu,\nu\in\Pa\atop |\mu|=|\nu|=k}
\tilde{H}\left(\Delta \cup 1^{k+|\Delta|},\mu,\nu  \right)p'_\mu \left(\partial_{\pb}\right)_\nu
\ee

The way to get relations of type (\ref{multiGMMN-different}) starting from (\ref{MM3(p')}) is yet unclear.

\paragraph{B case.} In \cite{MMN2019} the $B$-type modification of (\ref{MM3'}) was found:

\be\label{MM3'BKP}
{\cal W}^{\rm B}_{\Delta}(\pb^{\rm B},\partial{\pb^{\rm B}})\cdot Q_\mu(\pb^{\rm B})=\varphi^{\rm B}_\mu(\Delta)Q_\mu(\pb^{\rm B}),
\ee
and the explicit construction of operators $ {\cal W}^{\rm B}_{\Delta}(\pb^{\rm B},\partial_{\pb^{\rm B}}) $ was suggested in
\cite{MMN2019}, \cite{MMZh2021}.
In a similar way it results in
\be
{\cal W}^{\rm B}_{\Delta}(\pb^{\rm B},\partial_{\pb^{\rm B}})
=\sum_{\mu,\nu\in\OP\atop |\mu|=|\nu|=|\Delta|}
 2^{\ell(\nu)} H^{\rm B}(\Delta,\mu,\nu)\pb^{\rm B}_\mu \left(\partial_{\pb^{\rm B}}\right)_\nu
\ee
where $H^{\rm B}(\Delta,\mu,\nu)$ are the spin Hurwitz numbers \cite{EOP}, \cite{MMN2019}.

Again, the simple replacement of each $p_i$ from the set $\pb^{\rm B}=(p_1,p_3,p_5,\dots)$ by
$p'_i$ yields in all coefficients of the normal ordered differential operator ${\cal W}^{\rm B}_{\Delta}$
presented in the mentioned works yields
\be\label{MM3'BKP(p')}
{\cal W}^{\rm B}_{\Delta}(\pb',\partial_{\pb^{\rm B}})\cdot Q_\mu(\pb^{\rm B})=\varphi_\mu(\Delta)Q_\mu(\pb'),
\ee
where $\pb'=(p'_1,p'_3,\dots)$ and where
\be
{\cal W}^{\rm B}_{\Delta}(\pb',\partial_{\pb^{\rm B}})=\sum_{}\sum_{}
H^{\rm B}\left(\tilde{\Delta},\mu,\nu  \right)p'_\mu \left(\partial_{\pb^{\rm B}}\right)_\nu
\ee

To my knowledge the matrix realization of this spin cut-and-join equation is yet unkown.

\bigskip
\bigskip
\noindent
\small{ {\it Acknowledgements.}
The author is grateful to J. Harnad for the collaboration in the work \cite{paper1},\cite{paper2},\cite{paper3}
which results in the present research, J.Harnad and J. van de Leur for various discussions concerning BKP,  
to  A.Morozov, A. Mironov and A. Alexandrov, for
attracting attention to \cite{Alex2},\cite{MMq} and special thanks to A. Mironov and J.Harnad for usefull
remarks concerning the present paper and A.Mironov for additional technical help.
 The work was supported by the Russian Science
Foundation (Grant No.20-12-00195).

\bigskip



\appendix

\section{The key lemma \cite{paper1}}
\label{polarizations}

\begin{definition}
A {\em polarization} of $(\alpha|\beta)=(\alpha_1,\dots,\alpha_r|\beta_1,\dots,\beta_r)$,  is a pair  $(\nu^+, \nu^-)$
of strict partitions with cardinalities  (or {\em lengths})
\be
m(\nu^+):= \#(\nu^+), \quad  m(\nu^-)):= \#(\nu^-)
\label{m_pm_def}
\ee
(including  possibly a zero part $\nu^+_{m(\nu^+)}=0$ or $\nu^-_{m(\nu^-)}=0$), satisfying
\be
\nu^+ \cap \nu^- = \alpha \cap I(\beta), \quad \nu^+ \cup \nu^- = \alpha \cup I(\beta),
\label{mu_pm_cap_cup}
\ee
where
\be
I(\beta) := (I_1(\beta), \dots I_r(\beta))
\label{I_beta}
\ee
is the strict partition \cite{Mac} with parts
\be
I_j(\beta) = \beta_j +1, \quad j=1, \dots r.
\label{I_j_beta}
\ee
The set of all polarizations of $(\alpha|\beta)$  is denoted $\PP(\alpha,\beta)$.
\end{definition}
We denote the strict partition obtained by intersecting  $\alpha$ with $I(\beta)$ as
\be
S := \alpha \cap I(\beta),
\label{S_def}
\ee
and its cardinality as
\be
s := \#(S).
\ee
Since both $\alpha$ and $I(\beta)$ have cardinality $r$, it follows that
\be
m(\nu^+) + m(\nu^-) = 2r,
\label{m_pm_sum}
\ee
so $m(\nu^\pm)$ must have the same parity.   It is easily verified \cite{paper1} that the cardinality of
$\PP(\alpha, \beta)$ is $2^{2r -2s}$.
The following was proved in \cite{paper1}.
\begin{lemma}[\bf Binary sequence associated to a polarization]
For every polarization $\nu:=(\nu^+, \nu^-)$ of $\lambda=(\alpha| \beta)$ there is a unique binary sequence
of length $2r$
\be
\epsilon(\nu) =(\epsilon_1(\nu), \dots, \epsilon_{2r}(\nu)),
\label{epsilon_mu_binary}
\ee
with
\be
\epsilon_j(\nu) =\pm, \quad j=1, \dots 2r,
\label{epsilon_j_binary}
\ee
such that
\begin{enumerate}
\item The sequence of pairs
\be
((\alpha_1, \epsilon_1(\nu)), \cdots  (\alpha_r, \epsilon_r(\nu)),
(\beta_1+1, \epsilon_{r+1}(\nu)), \dots , ( \beta_r+1, \epsilon_{2r}(\nu))
\label{alpha_beta_sequence}
\ee
is a permutation of the sequence
\be
((\nu^+_1,+), \cdots ( \nu^+_{m^+(\nu)},+), (\nu^-_1, -),  \dots  , (\nu^-_{m^-(\nu)}, -)
\label{mu_sequence}
\ee
\item
\be
 \epsilon_j(\nu)= + \ \text{ if } \alpha_j \in S, \ \text{ and }
 \epsilon_{r+j}(\nu)= - \ \text{ if } \beta_j +1\in S, \quad j=1, \dots, r.
 \label{canon_epsilon_constraint}
 \ee
\end{enumerate}
\end{lemma}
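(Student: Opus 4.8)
The plan is to read the signs $\epsilon_j(\nu)$ directly off the membership of each part in $\nu^+$ versus $\nu^-$, handling the repeated values (those in $S=\alpha\cap I(\beta)$) separately from the values that occur only once. First I would record the combinatorial structure of the multiset $\alpha\cup I(\beta)$: it consists of the $s$ values of $S$, each occurring twice, together with $2r-2s$ values occurring once. The defining relations $\nu^+\cap\nu^-=S$ and $\nu^+\cup\nu^-=\alpha\cup I(\beta)$ then say precisely that every value of $S$ lies in both $\nu^+$ and $\nu^-$, while every single value lies in exactly one of them.

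Next I would define the candidate sequence. For an index $j$ with $\alpha_j\in S$ put $\epsilon_j(\nu)=+$, and for $\beta_j+1\in S$ put $\epsilon_{r+j}(\nu)=-$; these assignments are exactly condition~2. For the remaining indices, whose corresponding part is a single value, set the sign to $+$ if that value lies in $\nu^+$ and to $-$ if it lies in $\nu^-$; this is well defined because a single value belongs to exactly one of the two strict partitions. To prove existence it remains to check condition~1 for this $\epsilon(\nu)$. The $+$-marked entries consist of all of $S$ (contributed by the $\alpha$-copies) together with the single values of $\alpha\setminus S$ and of $I(\beta)\setminus S$ that were assigned to $\nu^+$. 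Since the three sets $S$, $(\alpha\setminus S)\cap\nu^+$ and $(I(\beta)\setminus S)\cap\nu^+$ are pairwise disjoint and their union is $\nu^+$, the $+$-marked values form the set $\nu^+$ with each part occurring once; symmetrically the $-$-marked values form $\nu^-$. Hence the tagged sequence \eqref{alpha_beta_sequence} is a permutation of \eqref{mu_sequence}, so $\epsilon(\nu)$ exists.

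For uniqueness I would argue that conditions~1 and~2 leave no freedom. A single value appears once among the $\alpha_j$ and $\beta_j+1$ and lies in exactly one of $\nu^+,\nu^-$, so condition~1 forces its sign. A doubled value $v\in S$ appears twice, once as some $\alpha_j=v$ and once as some $\beta_k+1=v$, and condition~1 only demands that these two copies carry the tags $(v,+)$ and $(v,-)$ in some order; condition~2 then fixes the order by sending the $\alpha$-copy to $+$ and the $I(\beta)$-copy to $-$. Thus all signs are determined, and $\epsilon(\nu)$ is unique.

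The only delicate point — and where I expect the whole content of the statement to sit — is the treatment of $S$: condition~1 by itself cannot distinguish the two ways of signing the two copies of a repeated value, so the role of condition~2 is exactly to break this symmetry, while the existence half reduces to checking that the disjoint pieces $S$, $(\alpha\setminus S)\cap\nu^\pm$ and $(I(\beta)\setminus S)\cap\nu^\pm$ reassemble into $\nu^+$ and $\nu^-$ without collision.
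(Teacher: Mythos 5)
Your proposal is correct and complete. Note that this paper does not actually supply a proof of this lemma --- it only states that it ``was proved in \cite{paper1}'' --- so there is no internal argument to compare against; your reconstruction (splitting the multiset $\alpha\cup I(\beta)$ into the doubled values $S$ and the singletons, letting membership in $\nu^{\pm}$ force the signs of the singletons via condition~1, and observing that condition~2 exists precisely to break the two-fold ambiguity on each repeated value) is exactly the natural argument, and both the existence check (that $S$, $(\alpha\setminus S)\cap\nu^{+}$ and $(I(\beta)\setminus S)\cap\nu^{+}$ are pairwise disjoint with union $\nu^{+}$, using that any value common to $\alpha$ and $I(\beta)$ lies in $S$) and the uniqueness check are carried out correctly.
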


\begin{definition}
 The {\em sign} of the polarization $(\nu^+, \nu^-)$, denoted $\sgn(\nu)$, is defined
as the sign of the permutation  that takes the sequence (\ref{alpha_beta_sequence})  into the sequence
(\ref{mu_sequence}).
\end{definition}
Denote by
\be
\pi(\nu^\pm) :=\#(\alpha \cup \nu^{\pm})
\label{pi_pm_mu_def}
\ee
the cardinality of the intersection of $\alpha$ with $\nu^{\pm}$. It follows that
\be
\pi(\nu^+) + \pi(\nu^-) = r+s.
\label{pi_pm_sum}
\ee
 We then have
 \begin{lemma}
\be
|\lambda \rangle =  {(-1)^{\tfrac{1}{2}r(r+1) + s}\over 2^{r-s}} \sum_{\nu\in \PP(\alpha, \beta)}
\sgn(\nu)(-1)^{\pi(\nu^-)} i^{ m(\nu^-)}\prod_{j=1}^{m(\nu^+)} \phi^+_{\nu^+_j} \prod_{k=1}^{m(\nu^-)} \phi^-_{\nu^-_k}  |0 \rangle.
\label{polariz_sum}
\ee
\end{lemma}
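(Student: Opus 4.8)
The plan is to prove \eqref{polariz_sum} by substituting the inverse of the charged--neutral change of basis \eqref{charged-neutral} into the fermionic word defining $|\lambda\rangle = \Psi_{\alpha,\beta}|0\rangle$ in \eqref{Psi-lambda}, and then reorganising the resulting sum. Inverting \eqref{charged-neutral} gives $\psi_j = \tfrac{1}{\sqrt2}(\phi^+_j - i\phi^-_j)$ and $\psi^\dag_{-j} = \tfrac{(-1)^j}{\sqrt2}(\phi^+_j + i\phi^-_j)$. Substituting these with $j=\alpha_i$ and $j=\beta_k+1$, I would first collect the scalar prefactor: the factor $(-1)^{\sum_k\beta_k}$ in \eqref{Psi-lambda} combines with the $\prod_k(-1)^{\beta_k+1}=(-1)^{\sum_k\beta_k+r}$ coming from the $\psi^\dag$ substitutions to give $(-1)^r$, and together with $(-1)^{\frac12 r(r-1)}$ this yields the phase $(-1)^{\frac12 r(r+1)}$; each of the $2r$ substitutions contributes a $\tfrac{1}{\sqrt2}$, giving $2^{-r}$. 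Thus
\be
|\lambda\rangle = (-1)^{\frac12 r(r+1)}2^{-r}\prod_{i=1}^r(\phi^+_{\alpha_i}-i\phi^-_{\alpha_i})\prod_{k=1}^r(\phi^+_{\beta_k+1}+i\phi^-_{\beta_k+1})\,|0\rangle.
\ee

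Next I would expand this product of $2r$ binomials into $2^{2r}$ monomials, each indexed by a choice of sign $\epsilon_\ell\in\{+,-\}$ at each of the $2r$ slots carrying the fixed labels $(\alpha_1,\dots,\alpha_r,\beta_1+1,\dots,\beta_r+1)$; choosing $\phi^-$ on an $\alpha$-slot carries the scalar $-i$, on an $I(\beta)$-slot the scalar $+i$. Because all labels are strictly positive, the relations \eqref{neutral-canonical} make every pair of the occurring neutral operators anticommute and force $(\phi^\pm_n)^2=0$ for $n>0$. Hence any monomial assigning the same sign on both slots of a coincidence value $n=\alpha_i=\beta_k+1\in S$ contains a repeated factor $\phi^\pm_n$ and vanishes; the survivors are exactly those assigning opposite signs at each of the $s$ coincidences. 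These are precisely indexed by the pairs $(\nu^+,\nu^-)\in\PP(\alpha,\beta)$ together with, at each coincidence, a binary choice of which slot carries $+$, so that every polarization occurs with multiplicity $2^s$.

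The main work is to show these $2^s$ monomials attached to a fixed $\nu$ contribute equally, and to extract the coefficient of the canonical representative fixed by \eqref{canon_epsilon_constraint}. The key observation is that in a product of pairwise anticommuting factors, interchanging the operators in two distinct slots flips the sign by exactly $-1$, irrespective of how many factors lie between them. Swapping the two choices at one coincidence simultaneously interchanges a $\phi^+_n$ with a $\phi^-_n$ (giving $-1$) and exchanges the scalar $-i\leftrightarrow+i$ (another $-1$), so the two contributions coincide; iterating over the $s$ coincidences yields the factor $2^s$, whence the prefactor $2^{-r}\cdot 2^s = 1/2^{r-s}$. For the canonical monomial I would then reorder its $2r$ neutral operators into standard form $\prod_j\phi^+_{\nu^+_j}\prod_k\phi^-_{\nu^-_k}$; by the binary-sequence lemma and the definition of $\sgn(\nu)$ this reordering produces exactly $\sgn(\nu)$. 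Finally I collect the $\pm i$ scalars: writing $a$ for the number of $\alpha$-slots assigned $-$, the scalar is $(-i)^a(+i)^{m(\nu^-)-a}=(-1)^a i^{m(\nu^-)}$, and since \eqref{pi_pm_mu_def} counts the $s$ coincidence values inside $\pi(\nu^-)$ one has $a=\pi(\nu^-)-s$, so $(-1)^a=(-1)^s(-1)^{\pi(\nu^-)}$. The stray $(-1)^s$ merges with $(-1)^{\frac12 r(r+1)}$ to give the announced phase $(-1)^{\frac12 r(r+1)+s}$, and assembling $\sgn(\nu)$, $(-1)^{\pi(\nu^-)}$, $i^{m(\nu^-)}$ and the prefactor reproduces \eqref{polariz_sum}. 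The subtle bookkeeping point, and the source of most potential sign errors, is precisely this last matching: the canonical convention places the $\phi^-$ of a coincidence on the $I(\beta)$-slot (scalar $+i$) while $\pi(\nu^-)$ records that same value as an $\alpha$-element, which is exactly what produces the extra $(-1)^s$.
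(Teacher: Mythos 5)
Your proposal is correct and is essentially the paper's own proof carried out in full detail: the paper's argument is exactly this substitution of the inverted relations \eqref{charged-neutral} into \eqref{Psi-lambda}, expansion into neutral-fermion monomials, identification of the $2^s$ identical terms per polarization, and collection of $\sgn(\nu)$ and the powers of $-1$ and $i$, with your derivation of the extra $(-1)^s$ via $a=\pi(\nu^-)-s$ supplying the sign bookkeeping the paper leaves implicit. One negligible wording slip: the labels need not all be strictly positive (one may have $\alpha_r=0$), but since coincidence values are $\geq 1$ and the value $0$ can occupy at most one slot, the facts you actually use — vanishing of squares at repeated labels and pairwise anticommutation of the occurring factors — still hold.
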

\begin{proof}
 For the proof we reorder the product over the factors $\psi_{\alpha_j} \psi^\dag_{-\beta_j-1}$   so
 the  $\psi_{\alpha_j}$ terms precede the $\psi^\dag_{-\beta_j-1}$ ones, giving an overall sign
 factor $(-1)^{\tfrac{1}{2}r(r-1)}$.  Then substitute
 \be
\psi_{\alpha_j} = \tfrac{1} {\sqrt{2}}( \phi^+_{\alpha_j} - i \phi^-_{\alpha_j}), \quad
 \psi^\dag_{-\beta_j-1}= \tfrac{(-1)^j}{ \sqrt{2}} (\phi^+_{\beta_j+1} + i \phi^-_{\beta_j+1}), \quad j\in \Zb,
\ee
in each factor, and expand  the product as a sum over monomial terms of the form
\be
 \prod_{j=1}^{m(\nu^+)} \phi^+_{\nu^+_j} \prod_{k=1}^{m(\nu^-)} \phi^-_{\nu^-_k}  |0 \rangle.
 \ee
 Taking into account the sign factor $\sgn(\nu)$ corresponding to the order of the neutral fermion
factors, as well as the powers of $-1$ and $i$, and noting that there are $2^s$  resulting identical terms,
then gives (\ref{polariz_sum}).
\end{proof}

\begin{definition}{\bf Supplemented partitions.}
If $\nu$ is a strict partition of cardinality $m(\nu)$ (with $0$ allowed as a part),
we define the associated {\em supplemented partition} $\hat{\nu}$ to be
\be
\hat{\nu} := \begin{cases}  \nu \ \text{ if } m(\nu) \ \text{ is even}, \cr
                  (\nu,0)  \  \text{ if } m(\nu) \  \text{ is odd}.
                  \end{cases}
 \label{hat_nu'}
\ee
We denote by $m(\hat{\nu})$  the cardinality of $\hat{\nu}$.
\end{definition}

\section{Sergeev group \cite{EOP},\cite{Serg}}

As it was introduced in \cite{EOP} the Sergeev group ${\bf C}(d)$ is the semidirect product  of the permutation group
$S_d$ and the Clifford group ${\rm Cliff}_d$ generated by the
involutions $\xi_i$, $i=1,\dots,d$ and the central involution $\epsilon$ which are the subjects of
$$
\xi_i\xi_j=\epsilon \xi_j\xi_i
$$
The group $S_d$ acts on ${\rm Cliff}_d$ by the permutation of $\xi_i$'s. The irreducible representations
of ${\bf C}(d)$ are labeled with strict partitions. The normalized characters 
$$
\textbf{f}_\alpha(\Delta) :=
2^{-\ell(\Delta)}\langle 0|J^{\rm B}_{-\Delta}\Phi_{\alpha} |0\rangle \frac{1}{z_\Delta}\frac{1}{Q_\alpha\{\delta_{k,1}\}}
=\frac{2^{\ell(\alpha)+\ell(\Delta)}}{z_\Delta Q_\alpha\{\delta_{k,1}\} }
\chi^{\rm B}_\alpha(\Delta)
$$
(here $z_\Delta=\prod_i m_i!i^{m_i}$ and $\Delta=(1^{m_12^{m_2}}\cdots)\in\OP$, see \cite{Mac})
which is related to $Q$ functions by
\be
Q_\alpha\{p_k\}=Q_\alpha\{\delta_{k,1}\}\sum_{\Delta\in\OP} \textbf{f}_\alpha(\Delta) \pb_\Delta
\ee
(this relation is dual to (\ref{pbB-Q}))
enter the formula for spin Hurwitz numbers
$$
H^\pm\left(\Delta^1,\dots,\Delta^k\right) =
\sum_{\alpha\in\DP\atop |\alpha|=d,\ell(\alpha)\,{\rm even}} \left( Q_\alpha\{\delta_{k,1}\}\right)^2
\textbf{f}_\alpha(\Delta^1)\cdots \textbf{f}_\alpha(\Delta^k)
$$
$$
-
\sum_{\alpha\in\DP\atop |\alpha|=d,\ell(\alpha)\,{\rm odd}} \left( Q_\alpha\{\delta_{k,1}\}\right)^2
\textbf{f}_\alpha(\Delta^1)\cdots \textbf{f}_\alpha(\Delta^k)
$$
see \cite{EOP},\cite{Lee2018},\cite{MMN2019},\cite{MMNO}.
Then one can relate the evaluation of the spin Hurwitz numbers to the BKP tau functions in
the similar way as the evaluation of usual Hurwits numbers is related to KP and Toda lattice theory,
see \cite{Okoun2000},\cite{Okounkov-Pand-2006},\cite{GJ},\cite{MMN2011}.

\section{Scalar products and matrix integrals}

In the topics related to symmentric function the notion of the scalar product is important, see \cite{Mac}.
Here we study the realization of the scalar products as matrix integrals. We consider cases (A) and (B):

(A) Take

\be\label{forSchur}
<\pb_{\Delta^1},\pb_{\Delta^2} > = z_{\Delta^1} \delta_{\Delta^1,\Delta^2},\quad \Delta^1,\Delta^2\in \Pa
\ee
If (\ref{forSchur}) is chosen, then (see \cite{Mac}, Section )
\be\label{Schur-orth}
<s_\lambda(\pb),s_\mu(\pb)> = \delta_{\lambda,\mu},\quad \lambda,\mu\in \Pa
\ee

(B) Another choice
\be\label{forSchurB}
<\pb_\lambda,\pb_\mu >_{\rm B} = 2^{-\ell(\lambda)}z_\lambda \delta_{\lambda,\mu},\quad \lambda,\mu\in \OP
\ee
results in
\be\label{Schur-orthB}
<Q_\lambda(\pb),Q_\mu(\pb)>_{\rm B} = 2^{\ell(\lambda)}\delta_{\lambda,\mu},\quad \lambda,\mu\in \DP
\ee
see Section  in \cite{Mac}.

\paragraph{Case (A)}:

Take
\be
p_m =p_m(\xb)=\tr X^m = \sum_{i=1}^N  x_i^N
\ee
Suppose $U_1,U_2\in\mathbb{U}(N)$. 
The Haar measure on $\mathbb{U}(N)$ in the explicit form is
\be\label{Haar-U}
d_*U=(2\pi)^N\prod_{i<j}|e^{\theta_i}-e^{\theta_j}|^2 \prod_{i=1}^N d\theta_i
\ee
where $e^\theta_i$ are eigenvalues of $U$.

Then we have
\begin{proposition} The scalar product (\ref{Schur-orth}) can be realized as follows: 
\be\label{ss=delta}
\int_{\mathbb{U}(N)\times\mathbb{U}(N) } s_\lambda(U_1)s_\mu(U_2) \det
\left(\mathbb{I}_N-U^{-1}_1 U^{-1}_2\right)^{-N} d_*U_1d_*U_2
=\delta_{\lambda,\mu}
\ee
\be\label{pp=z-delta}
\int_{\mathbb{U}(N)\times\mathbb{U}(N)} \pb_{\Delta^1}(U_1)\pb_{\Delta^2}(U_2) \det
\left(\mathbb{I}_N-U^{-1}_1 U^{-1}_2\right)^{-N} d_*U_1d_*U_2=z_{\Delta^1} \delta_{\Delta^1,\Delta^2}
\ee 
\end{proposition}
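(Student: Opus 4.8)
The plan is to expand the kernel $\det(\mathbb{I}_N-U_1^{-1}U_2^{-1})^{-N}$ into Schur functions and then integrate termwise using the orthogonality of the irreducible characters of $\mathbb{U}(N)$. The starting point is the Cauchy identity in $N+N$ variables, with the second family of variables all set equal to $1$: for a unitary $V$ with eigenvalues $v_1,\dots,v_N$ this gives
\be
\det(\mathbb{I}_N-V)^{-N}=\prod_{i=1}^N(1-v_i)^{-N}=\sum_{\rho:\,\ell(\rho)\le N} s_\rho(1^N)\,s_\rho(V),
\ee
where $s_\rho(1^N)$ (evaluation at $x_1=\cdots=x_N=1$) is the dimension $\dim_N\rho$ of the $\GL(N)$ irreducible module of highest weight $\rho$ and vanishes for $\ell(\rho)>N$, so the sum is automatically truncated. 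Here $s_\rho(V)$ is the Schur function in the eigenvalues of $V$, i.e. the character value $\tr\pi^\rho(V)$. Applied with $V=U_1^{-1}U_2^{-1}$ this turns the kernel into a combination of $s_\rho$ evaluated at the product $U_1^{-1}U_2^{-1}$.

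Next I would carry out the $U_2$–integration by a change of variables. With $U_1$ fixed, $V=U_1^{-1}U_2^{-1}$ is a composition of an inversion and a left translation, both of which preserve the normalized Haar measure, so $d_*U_2=d_*V$ and $s_\mu(U_2)=s_\mu(V^{-1}U_1^{-1})$. Writing $s_\mu(V^{-1}U_1^{-1})=\tr(\pi^\mu(U_1^{-1})\pi^\mu(V^{-1}))=\sum_{ab}\pi^\mu(U_1^{-1})_{ab}\,\pi^\mu(V^{-1})_{ba}$ and using unitarity $\pi^\mu(V^{-1})_{ba}=\overline{\pi^\mu(V)_{ab}}$ together with Schur orthogonality $\int\overline{\pi^\mu(V)_{ab}}\,\pi^\rho(V)_{cc}\,d_*V=(\dim_N\mu)^{-1}\delta_{\mu\rho}\delta_{ac}\delta_{bc}$, one finds
\be
\int_{\mathbb{U}(N)} s_\mu(V^{-1}U_1^{-1})\,s_\rho(V)\,d_*V=\frac{\delta_{\mu\rho}}{\dim_N\mu}\,s_\mu(U_1^{-1}).
\ee
Summing against $s_\rho(1^N)=\dim_N\rho$, the factors $\dim_N\rho=\dim_N\mu$ cancel, so the inner integral collapses to the reproducing value $\int s_\mu(U_2)\det(\mathbb{I}_N-U_1^{-1}U_2^{-1})^{-N}d_*U_2=s_\mu(U_1^{-1})$. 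The remaining $U_1$–integral is then plain character orthogonality: since $s_\mu(U_1^{-1})=\overline{s_\mu(U_1)}$ on $\mathbb{U}(N)$, we get $\int s_\lambda(U_1)\overline{s_\mu(U_1)}\,d_*U_1=\delta_{\lambda\mu}$, which is (\ref{ss=delta}). For (\ref{pp=z-delta}) I would simply feed the expansion $\pb_\Delta=\sum_\lambda\chi_\lambda(\Delta)s_\lambda$ of (\ref{p-s-chi}) into (\ref{ss=delta}): expanding both power sums and using the orthogonality just proved gives $\sum_\lambda\chi_\lambda(\Delta^1)\chi_\lambda(\Delta^2)$, and the column orthogonality of symmetric-group characters identifies this with $z_{\Delta^1}\delta_{\Delta^1,\Delta^2}$ (both sides vanishing automatically when $|\Delta^1|\ne|\Delta^2|$, since $\chi_\lambda(\Delta)=0$ unless $|\lambda|=|\Delta|$).

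The main obstacle is analytic rather than algebraic: the kernel $\det(\mathbb{I}_N-U_1^{-1}U_2^{-1})^{-N}$ is genuinely singular on the locus where $U_1^{-1}U_2^{-1}$ has $1$ as an eigenvalue, so the Schur expansion and the termwise integration must be justified. The cleanest route is to insert a convergence factor, replacing $V$ by $tV$ with $0<t<1$ (equivalently expanding $\det(\mathbb{I}_N-tU_1^{-1}U_2^{-1})^{-N}$), run all the steps above so that each coefficient is a convergent power series in $t$, and then let $t\to1^-$, the final answer being $t$–independent. Two bookkeeping points should also be recorded: all sums run over $\ell(\rho)\le N$, so the character identity in the second formula reproduces the full symmetric-group relation (hence exactly $z_{\Delta^1}\delta_{\Delta^1,\Delta^2}$) only once $N\ge|\Delta^1|$; and the normalization of $d_*U$ in (\ref{Haar-U}) must be fixed so that $\int_{\mathbb{U}(N)}d_*U=1$, which is precisely what makes the Schur functions $s_\lambda$ orthonormal and the right-hand sides come out as stated.
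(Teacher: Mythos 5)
Your proof is correct and follows essentially the same route as the paper's: expand the kernel via the Cauchy-type identity $\det\left(\mathbb{I}_N-AB\right)^{-N}=\sum_\rho s_\rho(AB)\,s_\rho(\mathbb{I}_N)$ and then reduce everything to $\mathbb{U}(N)$ orthogonality, finishing the power-sum case with the characteristic map and column orthogonality of symmetric-group characters, exactly as in the text. The only differences are organizational: where the paper quotes the orbital-integral formula $\int_{\mathbb{U}(N)} s_\rho\left(AUBU^{-1}\right)d_*U=s_\rho(A)\,s_\rho(B)/s_\rho(\mathbb{I}_N)$ to factorize the double integral, you derive the needed inner integration directly from Schur orthogonality of matrix coefficients (obtaining the reproducing-kernel identity $\int s_\mu(U_2)\det\left(\mathbb{I}_N-U_1^{-1}U_2^{-1}\right)^{-N}d_*U_2=s_\mu(U_1^{-1})$), and you additionally record the $t\to 1^-$ regularization of the singular kernel and the $\ell(\lambda)\le N$, $N\ge|\Delta^1|$ caveats, which the paper leaves implicit.
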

The proof of (\ref{ss=delta}) follows from
\be\label{1-AB}
 \det\left(\mathbb{I}_N-AB\right)^{-N} =
\sum_{\lambda\in\Pa} s_\lambda(AB)s_\lambda(\mathbb{I}_N)
\ee
and from
\be\label{s(AUBU^{-1})}
\int_{U\in\mathbb{U}(N)} s_\lambda(AUBU^{-1})) d_*U=\frac{s_\lambda(A)s_\lambda(B)}{s_\lambda(\mathbb(\mathbb{I}_N))}
\ee
where we put $A=U^{-1}_1,\,B= U^{-1}_2$. The proof of (\ref{pp=z-delta}) follows from the linear relation 
(the characteristic map relation) between $\{s_\lambda \}$ and $\{ \pb_\Delta \}$ polynomials.

Next, suppose $Z_1,Z_2 \in \mathbb{GL}_N(\mathbb{C})$. 
Consider the measure
\be
d\mu(Z)=e^{-\tr \left(ZZ^\dag \right) }\prod_{i,j=1}^N d\Re Z_{ij} d\Im Z_{ij}
\ee
Let us note that if we use the Schur factorization of the complex matrix $Z=U\left(D+B\right)U^\dag$ where
$U\in\mathbb{U}(N)$, $D$ is diagonal and $B$ is strictly upper trianguler, then the measure can be written
as
\be\label{measure-eigenvalues}
d\mu(Z)=d_*U  e^{-\tr BB^\dag} \prod_{i=1}^N dD_{ii} e^{-(D_{ii})^2}\prod_{i<j} B_{ij}
\ee

We have

\begin{proposition}
\be\label{ss=intGL}
\int_{\mathbb{GL}_N\times\mathbb{GL}_N } s_\lambda(Z_1)s_\nu(Z_2) {\cal F}(Z^\dag_1 Z^\dag_2)
 d\mu(Z_1)d\mu(Z_2)
=\delta_{\lambda,\nu}
\ee
\be\label{pp=intGL}
 \int_{\mathbb{GL}_N\times \mathbb{GL}_N} \pb_{\Delta^1}(Z_1)\pb_{\Delta^2}(Z_2) {\cal F}(Z^\dag_1 Z^\dag_2)
 d\mu(Z_1)d\mu(Z_2) =z_{\Delta^1} \delta_{\Delta^1,\Delta^2}
\ee
where
\be\label{F}
{\cal F}(Z^\dag_1 Z^\dag_2)=
\sum_{\mu\in\Pa} \frac{s_\mu(Z^\dag_1 Z^\dag_2)s_\mu(\mathbb{I}_N)}{\left((N)_\mu\right)^2},
\quad (N)_\mu:=\prod_{(i,j)\in\mu}\left(N+j-i\right)
\ee
\end{proposition}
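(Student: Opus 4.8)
The plan is to reduce both identities to a single master average over one copy of $\mathbb{GL}_N$ and then to iterate it, in direct parallel with the unitary computation behind \ref{ss=delta}. Throughout I take $d\mu$ normalized so that $\int Z_{ij}\bar Z_{kl}\,d\mu = \delta_{ik}\delta_{jl}$ (insert the constant $\pi^{-N^2}$ if needed). The engine is the master identity
\be
\int_{\mathbb{GL}_N} s_\lambda(AZ)\,s_\mu(Z^\dag B)\, d\mu(Z) = \delta_{\lambda\mu}\,\frac{(N)_\lambda}{s_\lambda(\mathbb{I}_N)}\, s_\lambda(AB),
\ee
valid for arbitrary $N\times N$ matrices $A,B$. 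To prove it I exploit the bi-unitary invariance of $d\mu$ (invariance under $Z\mapsto VZW$, $V,W\in\mathbb{U}(N)$). First averaging the right multiplication $Z\mapsto ZW$ over Haar measure $d_*W$ and writing $s_\lambda(XW)=\tr\bigl(R^\lambda(X)R^\lambda(W)\bigr)$ for the irreducible polynomial representation $R^\lambda$ of $\mathbb{GL}_N$ (so that $\dim R^\lambda = s_\lambda(\mathbb{I}_N)$ and $\tr R^\lambda(X)=s_\lambda(X)$), the Schur orthogonality relations for $\mathbb{U}(N)$ collapse the $W$-integral, force $\mu=\lambda$, and leave $\delta_{\lambda\mu}\,s_\lambda(AZZ^\dag B)/s_\lambda(\mathbb{I}_N)$ inside the $Z$-integral. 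Averaging next the left multiplication $Z\mapsto VZ$ and applying Schur's lemma to the conjugation $R^\lambda(V)\,R^\lambda(ZZ^\dag)\,R^\lambda(V)^{-1}$ converts $\int s_\lambda(C\,ZZ^\dag)\,d\mu$ (with $C=BA$) into $s_\lambda(C)\,s_\lambda(\mathbb{I}_N)^{-1}\int s_\lambda(ZZ^\dag)\,d\mu$.

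The single external input is the complex-Wishart Schur moment
\be
\int_{\mathbb{GL}_N} s_\lambda(ZZ^\dag)\, d\mu(Z) = (N)_\lambda\, s_\lambda(\mathbb{I}_N),
\ee
a classical average and precisely the origin of the content products $(N)_\mu$ appearing in $\mathcal F$; it may be obtained from the generating identity $\int e^{\tr(SZZ^\dag)}\,d\mu = \det(\mathbb{I}_N-S)^{-N}$ together with \ref{1-AB}, and is checked directly by Wick's theorem for small $\lambda$. Substituting it back produces the master identity. I expect this Wishart moment to be the one genuine obstacle: the two invariance reductions are purely formal once it is available, whereas pinning down the exact content-product constant is the only computational point, and it is exactly the place where $(N)_\lambda$ enters the statement.

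With the master identity in hand, \ref{ss=intGL} follows by iteration. Inserting the definition \ref{F} of $\mathcal F$ and integrating first over $Z_1$ with $A=\mathbb{I}_N$, $B=Z_2^\dag$ gives $\int s_\lambda(Z_1)s_\mu(Z_1^\dag Z_2^\dag)\,d\mu(Z_1)=\delta_{\lambda\mu}\,(N)_\lambda\,s_\lambda(Z_2^\dag)/s_\lambda(\mathbb{I}_N)$, which collapses the sum over $\mu$ to $\mu=\lambda$. Integrating then over $Z_2$ with $A=B=\mathbb{I}_N$ gives $\int s_\nu(Z_2)s_\lambda(Z_2^\dag)\,d\mu(Z_2)=\delta_{\lambda\nu}\,(N)_\lambda$, so that the accumulated constants $s_\lambda(\mathbb{I}_N)/((N)_\lambda)^2$, then $(N)_\lambda/s_\lambda(\mathbb{I}_N)$, then $(N)_\lambda$ telescope to $1$, leaving exactly $\delta_{\lambda\nu}$. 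This is the precise analogue of the cancellation $\sum_\rho s_\rho(\mathbb{I}_N)\,s_\lambda(\mathbb{I}_N)^{-1}=1$ in the unitary proof of \ref{ss=delta}, with the kernel $\mathcal F$ designed so that its factor $((N)_\mu)^2$ absorbs the two Wishart content products.

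Finally, \ref{pp=intGL} is deduced from \ref{ss=intGL} by the characteristic map. Expanding $\pb_{\Delta^1}=\sum_\lambda\chi_\lambda(\Delta^1)s_\lambda$ and $\pb_{\Delta^2}=\sum_\nu\chi_\nu(\Delta^2)s_\nu$ through \ref{p-s-chi}, interchanging summation and integration, and applying \ref{ss=intGL} reduces the double integral to $\sum_\lambda\chi_\lambda(\Delta^1)\chi_\lambda(\Delta^2)$; the column orthogonality $\sum_\lambda\chi_\lambda(\Delta^1)\chi_\lambda(\Delta^2)=z_{\Delta^1}\delta_{\Delta^1,\Delta^2}$ of the symmetric-group characters (with $\chi_\lambda(\Delta)=0$ unless $|\lambda|=|\Delta|$, which makes the mixed-degree terms vanish automatically) then yields the claim. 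This mirrors the passage from \ref{ss=delta} to \ref{pp=z-delta} in the unitary case.
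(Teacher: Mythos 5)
Your argument is correct, and it reaches the result by a genuinely different route than the paper. The paper's proof Schur-decomposes $Z_i=U_i(D_i+B_i)U_i^\dag$ so that the Gaussian measure factorizes as in (\ref{measure-eigenvalues}), integrates \emph{only} the kernel ${\cal F}$ over the relative unitary angle $U=U_1U_2^\dag$ using (\ref{s(AUBU^{-1})}) with $A=Z_1^\dag$, $B=Z_2^\dag$ (this cancels the factor $s_\mu(\mathbb{I}_N)$ in (\ref{F}) and decouples $Z_1$ from $Z_2$), and then closes both one-matrix integrals with the Ginibre orthogonality $\int_{\mathbb{GL}_N}s_\lambda(Z)s_\nu(Z^\dag)\,d\mu(Z)=(N)_\lambda\delta_{\lambda\nu}$, which it quotes without proof; the two resulting factors $(N)_\lambda$ cancel the $((N)_\mu)^{2}$ in the kernel. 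You never split off angular variables: instead you integrate the two matrices sequentially, using your master identity $\int s_\lambda(AZ)\,s_\mu(Z^\dag B)\,d\mu(Z)=\delta_{\lambda\mu}\,(N)_\lambda\,s_\lambda(AB)/s_\lambda(\mathbb{I}_N)$ --- of which the paper's quoted orthogonality is the special case $A=B=\mathbb{I}_N$ --- first in $Z_1$ (collapsing the $\mu$-sum in ${\cal F}$) and then in $Z_2$; the constants do telescope to $1$ exactly as you state. What each approach buys: the paper's version is shorter, because the decoupling of $Z_1$ and $Z_2$ is manifest after a single angular integration and the hard analytic input is outsourced to a known formula; yours is self-contained, since you actually prove that input --- bi-unitary invariance plus Schur orthogonality on $\mathbb{U}(N)$ reduces everything to the Wishart moment $\int s_\lambda(ZZ^\dag)\,d\mu(Z)=(N)_\lambda s_\lambda(\mathbb{I}_N)$, which indeed follows from $\int e^{\tr (SZZ^\dag)}\,d\mu(Z)=\det(\mathbb{I}_N-S)^{-N}$, the expansion (\ref{1-AB}), unitary averaging, and $s_\lambda(\mathbb{I}_N)/s_\lambda(\pb_1)=(N)_\lambda$ --- and your master identity with external matrices $A,B$ is a stronger, reusable statement. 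Your deduction of (\ref{pp=intGL}) from (\ref{ss=intGL}) via (\ref{p-s-chi}) and column orthogonality of the symmetric-group characters coincides with the paper's appeal to the characteristic map. One shared caveat: both you and the paper interchange the infinite sum over $\mu$ in ${\cal F}$ with the Gaussian integrals term by term, which is a formal manipulation, and both implicitly use the normalized measure --- the point you correctly flag at the outset with the convention $\int Z_{ij}{\bar Z}_{kl}\,d\mu=\delta_{ik}\delta_{jl}$.
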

Suppose $Z_i=U_i\left(D_i+B_i\right)U^\dag_i$, see (\ref{measure-eigenvalues}). Let us integrate the integrand
of (\ref{ss=intGL}) over $U=U_1U^\dag_2$ (namely integrate the factor ${\cal F}$)
with the help of (\ref{s(AUBU^{-1})}) where we take $A=Z^{\dag}_1,\,B=Z^{\dag}_2$. 
Along this action the factor $s_\lambda(\mathbb{I}_N)$ disappears and the whole integral turns to be the sum over 
partitions $\lambda$ from the relation (\ref{F}) of a  product of two integrals (integrals over $Z_1$ and over $Z_2$).
Then we apply the orthogonality relation:
\be
\int_{\mathbb{GL}_N} s_\lambda(Z_i)s_\nu(Z^\dag_i)
 d\mu(Z_i) = (N)_\lambda \delta_{\lambda,\nu},\quad i=1,2
\ee
and after the cancellation of the factor $(N)_\lambda$ we obtain (\ref{ss=intGL}). 

The characteristic map relation gives (\ref{pp=intGL}).

Let us note that ${\cal F}$ is known hypergeometric function of matrix argument and is an example of the KP
tau function \cite{OS2000}.

\paragraph{Case B.}
Here we will show that an analogue of Propostition based on the similar idea to present the B-type
scalar product as $2N$-fold matrix integral fails.
 
Take
\be
p^{\rm B}_m =p^{\rm B}_m(\xb)=2\tr X^m = 2\sum_{i=1}^N  x_i^m,\quad m\,{\rm odd},
\ee
where $\xb=(x_1,\dots.x_N)$ are eigenvalues of $X$.
We introduce the notation $Q_\alpha(X):=Q_\alpha(\pb^{\rm B}(\xb))=Q_\alpha(\xb)$.

Suppose $U_1,U_2\in\mathbb{U}(N)$. 

Using the anti-symmetry of determinants and change of variables and also (\ref{Haar-U}) 
(in the same way as (A.34) was derived in \cite{HO-2005})) we obtain
$$
\int_{\mathbb{U}(N)\times\mathbb{U}(N) } Q_\alpha(U_1)Q_\beta(U_2) \det
\left(\mathbb{I}_N-U^{-2}_1 U^{-2}_2\right)^{-N} d_*U_1d_*U_2=
$$
$$
(2\pi^2)^{\frac12 N(n-1)}\prod_{k=1}^{N-1}\frac{1}{k!}
\oint\cdots\oint Q_\alpha(\xb)Q_\beta(\yb) \Delta^{\rm B}(\xb)\Delta^{\rm B}(\yb) 
\prod_{i=1}^N \frac {x^{-1}_i  y^{-1}_i d x_i d y_i}{1-x^{-2}_iy^{-2}_i}
$$
where
\be
\Delta^{\rm B}(\xb):=\prod_{i<j}\frac{x_i-x_j}{x_i+x_j}
\ee
However, this scalar product is obviously degenerate because for $\alpha=\beta=(1)$ it is equal to $0$.

To realize the scalar product as an integral over matrices we could conjecture that we need
the integral over Sergeev supergroup $\textsc{q}(N)$.

\end{document}